\begin{document}

\newtheorem{lem}{Lemma}
\newtheorem{thm}{Theorem}
\newtheorem{cor}{Corollary}
\newtheorem{fact}{Fact}

\def\ar{\leftarrow}
\def\lrar{\leftrightarrow}
\def\beq{\begin{equation}}
\def\eeq#1{\label{#1}\end{equation}}
\def\ba{\begin{array}}
\def\ea{\end{array}}
\def\gringo{{\sc gringo}}
\def\clingo{{\sc clingo}}
\def\anthem{{\sc anthem}}
\def\vampire{{\sc vampire}}
\def\num{\overline}
\def\p2f{\hbox{p2f}}
\def\no{\emph{not\/}}
\def\head{\emph{Head\/}}
\def\body{\emph{Body\/}}
\def\val#1#2{\emph{val\,}_{#1}({#2})}

\newcommand{\I}{\mathcal{I}}
\newcommand{\J}{\mathcal{J}}
\newcommand{\M}{\mathcal{M}}

\newcommand{\PP}{\mathcal{P}}

\let\svprovproof\proof
\let\svprovendproof\endproof
\let\proof\relax
\let\endproof\relax
\let\proof\svprovproof
\let\endproof\svprovendproof
\makeatletter
\newcommand{\printfnsymbol}[1]{%
  \textsuperscript{\@fnsymbol{#1}}%
}
\makeatother
%

%%%%%%%%%%%%% Convention Macros %%%%%%%%%%%%%
\newcommand{\hti}{\langle \intpropone, \intproptwo \rangle}
\newcommand{\intfolone}{\mathcal{I}}
\newcommand{\intfoltwo}{\mathcal{J}}
\newcommand{\intpropone}{S}
\newcommand{\intproptwo}{S'}
\newcommand{\sigprop}{\Sigma}
\newcommand{\sigfol}{\sigma}
\newcommand{\formfolone}{F}
\newcommand{\formfoltwo}{G}
\newcommand{\formprop}{P}
\newcommand{\setprop}{\mathcal{\formprop}}
\newcommand{\atoms}[1]{A(#1)}
\newcommand{\rulehead}{Hd}
\newcommand{\clhead}{H}

\newcommand{\hta}{HTA^{\omega}}
\newcommand{\fohti}{\langle H, I \rangle}
\newcommand{\ih}{I^H}

\newcommand{\bmid}{\bigm|}
%\newcommand{\modelsht}{\models_{ht}}
%%%%

\newcommand{\floor}[1]{\lfloor #1 \rfloor}
\newcommand{\ceil}[1]{\lceil #1 \rceil}
\newcommand{\abs}[1]{\lvert #1 \rvert}
\newcommand{\htinterp}{HT\nobreakdash-interpretation}
\newcommand{\htmodel}{HT\nobreakdash-model}
\newcommand{\fweight}{\mathit{weight}}
\newcommand{\fmin}{\mathit{min}}
\newcommand{\fmax}{\mathit{max}}
\newcommand{\fplussum}{\mathit{sum^+}}
\newcommand{\dia}{\mathbin{\Diamond}}
\newcommand{\grdstd}{gr_{\intfolone}^{\boldp}}
\newcommand{\mg}{mini-\textsc{GRINGO}}

\newcommand{\kc}{{\em k-coloring}\xspace}
\newcommand{\yu}[1]{\todo[backgroundcolor=green!40]{#1}}
\newcommand{\yui}[1]{\todo[backgroundcolor=green!40,inline]{#1}}

\newcommand{\Iu}{\mathcal{I}^{\uparrow}}
\newcommand{\Ju}{\mathcal{J}^{\uparrow}}
\renewcommand{\P}{\mathcal{P}}
\newcommand{\Z}{\mathbb{Z}}
\newcommand{\F}{\mathcal{F}}
\newcommand{\X}{\mathcal{X}}
\newcommand{\XZ}{\X_\Z}
\newcommand{\N}{\mathcal{N}}
\newcommand{\C}{\mathcal{C}}
\newcommand{\PC}{\mathbb{C}}
\newcommand{\PN}{\mathcal{N}}
\newcommand{\Hc}{\mathcal{H}}
\newcommand{\T}{\mathcal{T}}
\newcommand{\taum}{\mathcal{Z}}
\newcommand{\lit}{\mathcal{L}^{inf}_A}
\newcommand{\htmodels}{\models_{ht}}

\newcommand{\taumeral}[1]{\overline{#1}}

\newcommand{\dlv}{\textsc{dlv}}
\newcommand{\smodels}{\textsc{smodels}}

\newcommand{\boldh}{\mathbf{h}}
\newcommand{\boldX}{\mathbf{X}}
\newcommand{\boldx}{\mathbf{x}}
\newcommand{\boldy}{\mathbf{y}}
\newcommand{\boldY}{\mathbf{Y}}
\newcommand{\boldZ}{\mathbf{Z}}
\newcommand{\boldz}{\mathbf{z}}
\newcommand{\boldH}{\mathbf{H}}
\newcommand{\boldi}{\mathbf{i}}
\newcommand{\boldp}{\mathbf{p}}
\newcommand{\boldP}{\mathbf{P}}
\newcommand{\boldq}{\mathbf{q}}
\newcommand{\boldt}{\mathbf{t}}
\newcommand{\boldu}{\mathbf{u}}
\newcommand{\boldU}{\mathbf{U}}
\newcommand{\boldv}{\mathbf{v}}
\newcommand{\boldd}{\mathbf{d}}
\newcommand{\boldc}{\mathbf{c}}
\newcommand{\boldw}{\mathbf{w}}
\newcommand{\boldV}{\mathbf{V}}
\newcommand{\boldW}{\mathbf{W}}
\newcommand{\boldl}{\mathbf{l}}
\newcommand{\boldL}{\mathbf{L}}
\newcommand{\boldr}{\mathbf{r}}

\newcommand{\twodots}{\mathinner {\ldotp \ldotp}}
\newcommand{\REL}{{\rm REL}}
\newcommand{\taups}{\tau^p}
\newcommand{\taup}[1]{\taups(#1)}
\newcommand{\taufos}{\tau^\mathit{fo}}
\newcommand{\taufo}[1]{\taufos(#1)}
\newcommand{\sigmafo}{\sigma_{\mathit{fo}}}
\newcommand{\Pfo}{\P_{\mathit{fo}}}
\newcommand{\Ffo}{\F_{\mathit{fo}}}
\newcommand{\Prop}[1]{\mathit{Prop}(#1)}
\newcommand{\Pred}[1]{\mathit{Pred}(#1)}
\newcommand{\ob}{\mathcal{O}}
\newcommand{\oc}{\mathcal{G}}
\newcommand{\pc}{\mathcal{P}}
\newcommand{\fc}{\mathcal{F}}

\def\i#1{{#1}^i}
\def\n#1{{#1}^n}

\def\sig#1#2{{#1}_{#2}}

\def\citeay#1{\citeauthor{#1}~(\citeyear{#1})}

\newcommand{\srt}{\textbf{S}}
\newcommand{\vars}{\textbf{X}}
\newcommand{\predvars}{\textbf{V}}
\newcommand{\cnst}{\textbf{C}}

\newcommand{\sigsrt}{\Sigma^\textbf{S}}
\newcommand{\sigfc}{\Sigma^\textbf{F}}
\newcommand{\sigpc}{\Sigma^\textbf{P}}
\newcommand{\sigvars}{\Sigma^\textbf{X}}
\newcommand{\sigconst}{\Sigma^\textbf{C}}

\newcommand{\tuples}{\mathit{tuple}}
\newcommand{\sets}{\mathit{set}}
\newcommand{\tuple}[1]{\langle#1\rangle}
\newcommand{\sortuple}{\tuple{s_1,\dotsc,s_n}}
\newcommand{\sortuplep}{\tuple{s_1,\dotsc,s_n,s_{n+1}}}
\newcommand{\farity}{s_1 \times \dots \times s_{n} \rar s_{n+1}}
\newcommand{\parity}{s_1 \times \dots \times s_n}

\newcommand{\far}[3]{#1 \times \dots \times #2 \rar #3}
\newcommand{\farone}[2]{#1 \rar  #2 }
\newcommand{\fartwo}[3]{#1 \times  #2 \rar #3}
\newcommand{\partwo}[2]{#1 \times  #2}
\newcommand{\ruleo}{\;\hbox{:-}\;}

\newcommand{\fit}{fit}

\newcommand{\fsum}{\mathit{sum}}
\newcommand{\fcount}{\mathit{count}}
\newcommand{\pmem}{\in}
\newcommand{\pmemF}[2]{#1 \in #2}
\newcommand{\fsub}{\mathit{rem}}
\newcommand{\ftuplen}[1]{\mathit{tuple}_{#1}}
\newcommand{\ftuple}{\ftuplen{k}}
\newcommand{\ffirst}{\mathit{weight}}

\newcommand{\ttuple}{t_{\mathit{tuple}}}
\newcommand{\tset}{t_{\mathit{set}}}

\newcommand{\dtuple}{d_{\mathit{tuple}}}
\newcommand{\dset}{d_{\mathit{set}}}
\newcommand{\eset}{e_{\mathit{set}}}

\newcommand{\sortsetint}{s_{\mathit{setint}}}
\newcommand{\sorttuple}{s_{\mathit{tuple}}}
\newcommand{\sortint}{s_{\mathit{i}}}
\newcommand{\sortset}{s_{\mathit{set}}}
\newcommand{\sortsuper}{s_{\mathit{p}}}
\newcommand{\sortlike}{s_{\mathit{i}}}
\newcommand{\universe}[2]{|#1|^{#2}}

\newcommand{\varsetint}{v_{\mathit{setint}}}
\newcommand{\vartuple}{T}
\newcommand{\varint}{N}
\newcommand{\varset}{S}
\newcommand{\varsuper}{X}

\newcommand{\sm}{\ensuremath{\text{\rm SM}}}
\newcommand{\smp}{\ensuremath{\sm_\boldp}}

\newcommand{\sumaxioms}{\ensuremath{\Delta_{\Sigma}}}
\newcommand{\remaxioms}{\ensuremath{\Delta_{R}}}

\newcommand{\setaxioms}{\ensuremath{\Delta_{S}}}
\newcommand{\aggaxioms}{\ensuremath{\Delta_{A}}}
\newcommand{\uinterpretation}{\mbox{u-interpretation}\xspace}
\newcommand{\uinterpretations}{\mbox{u-interpretations}\xspace}
\newcommand{\htinterpretation}{\mbox{ht-interpretation}\xspace}
\newcommand{\htinterpretations}{\mbox{ht-interpretations}\xspace}
\newcommand{\pstable}{$\boldp$\nobreakdash-stable\xspace}
\newcommand{\infpstable}{INF\nobreakdash-$\boldp$\nobreakdash-stable\xspace}

\newcommand{\IAA}{\tuple{\Hc,\T}}
\newcommand{\IA}{\tuple{\Hc,\T}}
\newcommand{\SM}{\ensuremath{\text{\rm SM}}}
\newcommand{\COMP}{\ensuremath{\text{\rm COMP}}}

\newcommand{\eqdef}{%
  \mathrel{\vbox{\offinterlineskip\ialign{%
    \hfil##\hfil\cr%
    $\scriptscriptstyle\mathrm{def}$\cr%
    \noalign{\kern1pt}%
    $=$\cr%
    \noalign{\kern-0.1pt}%
}}}}

\newcommand{\IofTSP}{I_{|\sigma({\Pi_{\langle E,cst,m \rangle}})}}

\newcommand{\aggstandard}{aggregate}
\newcommand{\Infinite}[1]{\mathit{Infinte}(#1)}
\newcommand{\Inf}[1]{\Infinite{#1}}
\newcommand{\Finite}[1]{\mathit{Finte}(#1)}
\newcommand{\FiniteW}[1]{\mathit{FiniteWeight}(#1)}
\newcommand{\ZeroWeight}[1]{\mathit{ZeroWeight}(#1)}
\newcommand{\Sum}[1]{\mathit{Sum}(#1)}
\newcommand{\Ans}[1]{\mathit{Ans}(#1)}

\newcommand{\domainset}{d_{\mathit{set}}}
\newcommand{\domainsetB}{e_{\mathit{set}}}
\newcommand{\domaintuple}{d_{\mathit{tuple}}}
\newcommand{\domaintupleB}{c_{\mathit{tuple}}}
\newcommand{\agginterp}{agg\nobreakdash-interpretation}
\newcommand{\agginterps}{\agginterp s}
\newcommand{\sigmaagg}{\sigma_{\mathit{agg}}}
\newcommand{\sigmaext}{\sigma_{\mathit{ext}}}
\newcommand{\FiniteF}{f_{|E|}}
\newcommand{\InfiniteF}{g_{|E|}}
\newcommand{\Injective}{\emph{Injective}}
\newcommand{\InjectiveW}{\emph{InjectiveWeight}}
\newcommand{\Image}{\emph{Image}}
\newcommand{\ImageW}{\emph{ImageWeight}}
\newcommand{\SubSet}{\emph{Subset}}

\newcommand{\FiniteSum}{\emph{FiniteSum}}
\newcommand{\Count}{\emph{FiniteCount}}
\newcommand{\Natural}{\mathbb{N}}
\newcommand{\numeral}[1]{\overline{#1}}

\newcommand{\HF}{\mathcal{H}}
\newcommand{\Int}{I}
\newcommand{\grp}[3]{\mathit{gr}^{#1}_{#2}(#3)}
\newcommand{\gr}[2]{\grp{\boldp}{#1}{#2}}
\newcommand{\SPos}{{\rm Pos}}
\newcommand{\Pnn}{{\rm Pnn}}
\newcommand{\Nnn}{{\rm Nnn}}
\newcommand{\Rules}{{\rm Rules}}
\newcommand{\Choice}{{\rm Choice}}
\newcommand{\Top}[2]{\mathit{t}(#1,#2)}
\newcommand{\Bottom}[2]{\mathit{b}(#1,#2)}
\newcommand{\TopA}[2]{\mathit{ta}(#1,#2)}
\newcommand{\BottomA}[2]{\mathit{ba}(#1,#2)}
\newcommand{\TopR}[2]{\mathit{tr}(#1,#2)}
\newcommand{\BottomR}[2]{\mathit{br}(#1,#2)}

\def\t{\paragraph}
\def\G{\Gamma}
\def\rar{\rightarrow}
\def\lrar{\leftrightarrow}
\def\o{\overline}
\def\r#1#2{\frac{\textstyle #1}{\textstyle #2}}
\def\ba{\begin{array}}
\def\ea{\end{array}}
\def\bce{\begin{center}}
\def\ece{\end{center}}
\def\pr#1{\medskip\noindent{\bf #1.}}
\def\beq{\begin{equation}}
\def\eeq#1{\label{#1}\end{equation}}
\def\seq{\Rightarrow}
\def\LRa{\equiv}
\def\false{\bf false}
\def\true{\bf true}
\newcommand{\sgn}{\mathop{\mathrm{sgn}}}

\title{SM-based Semantics for Answer Set Programs Containing Conditional Literals and Arithmetic}
\titlerunning{Semantics for Programs Containing Conditional Literals and Arithmetic}
% If the paper title is too long for the running head, you can set
% an abbreviated paper title here
%
\author{Zachary Hansen\orcidID{0000-0002-8447-4048} \and Yuliya Lierler\orcidID{0000-0002-6146-623X}}
\authorrunning{Z. Hansen and Y. Lierler}
% First names are abbreviated in the running head.
% If there are more than two authors, 'et al.' is used.
%
\institute{University of Nebraska Omaha, Omaha NE 68106, USA}
\maketitle              % typeset the header of the contribution
\begin{abstract}
Modern answer set programming solvers such as \clingo\ support advanced language constructs that improve the expressivity and conciseness of logic programs.
Conditional literals are one such construct. They form ``subformulas'' that behave as nested implications within the bodies of logic rules. Their inclusion brings the form of rules closer to the less restrictive syntax of first-order logic.
These qualities make conditional literals useful tools for knowledge representation.
In this paper, we propose a semantics for logic programs with conditional literals and arithmetic based on the $\SM$ operator. 
These semantics do not require grounding, unlike the established semantics for such programs that relies on a translation to infinitary propositional logic. 
The main result of this paper establishes the precise correspondence between the proposed and existing semantics. 

\keywords{Answer Set Programming  \and Conditional Literals \and Semantics.}
\end{abstract}
%
%
%

%\epigraph{Conditional literals are hugely underrated.}{Torsten Schaub}

\section{Introduction}

Answer Set Programming (ASP) \cite{Marek1999,Niemela1999} is a declarative programming paradigm
that has been applied within a variety of challenging and high-consequence systems such as explainable donor-patient matching~\cite{cab21}, space shuttle decision support systems~\cite{bal05,bal01}, train scheduling~\cite{abe21}, robotics~\cite{geb18asprilo}, and automated fault diagnosis~\cite{wotkau22}.
ASP programs are concise, human-readable, and benefit from well-defined semantics rooted in mathematical logic -- these qualities make ASP programs attractive candidates for formal verification~\cite{cafali20a}.
Providing high levels of assurance regarding program behavior is particularly crucial for safety-critical applications.
%
%Perhaps the best example is USA-advisor
%
This paper is part of a research stream with the long-term goal of supporting rigorous verification of ASP systems.

Conditional literals are powerful language features for knowledge representation.
Originating in the \textsc{lparse} grounder~\cite{syr04}, these languages features are now also supported by the answer set solver \clingo~\cite{potasscoManual,geb15}.
Intuitively, they represent a nested implication within the body of an ASP rule~\cite{hanlie22}.
Rules with conditional literals concisely express knowledge that may be difficult to otherwise encode.
For instance, conditional literals are widely employed in meta-programming --
Listings 4-7 in ``How to build your own ASP-based system?!" by Kaminski et al.~\cite{karoscwa20a} define meta encodings which compute the classical and supported models of reified logic programs; these encodings rely heavily on conditional literals.

Conditional literals may also make programs easier to formally verify by reducing the number of auxiliary or inessential predicates in a program.
Consider Listing~\ref{list:gc_yl}, which contains a typical encoding of the graph coloring  problem.
\lstinputlisting[
  caption = {Graph coloring problem encoding. 
},
  label={list:gc_yl},
  basicstyle=\ttfamily\small,
  numbers=left,
  stepnumber=1,
]{lp/coloring_original.tex}
The program in this listing can be simplified by replacing lines 3-4 with the following \emph{constraint} (a rule with an empty head) containing a conditional literal:
\begin{gather}
    \label{eq:rule.color.cl.constraint}
    \ruleo ~not ~asg(V, ~C) ~: ~col(C); ~vtx(V).
\end{gather}
This simplification is attractive  since it eliminates the auxiliary predicate $colored/1$ (introduced in line~3 for the sole  purpose of stating the subsequent constraint in line~4).
We will use rule~\eqref{eq:rule.color.cl.constraint} as a running example in the remainder of the paper.
%Constraints behave in a simple way (eliminating answer sets)\todo{add a reference to an original paper defining theorem on constraints} the effect of the constraint~\eqref{eq:rule.color.cl.constraint} is  easier to define than the effect of pair of rules in lines 3 and 4. 
%

Typically, the semantics of programs with variables are  defined indirectly, via a procedure called \emph{grounding}.
Grounding turns a given program with variables into a propositional one. Then, semantics are defined for the resulting propositional program.
This hampers our ability to reason about the behavior of programs independently of a specific grounding context.
%, and inhibits verification with automated theorem provers.
%
In 2011, Ferraris, Lee, and Lifschitz proposed semantics for answer set programs that bypasses grounding~\cite{ferleelif11a}.
They introduced the $\SM$ operator, which turns a program (or, rather, the first-order logic formula associated with the considered program) into a classical second-order formula.
The Herbrand models of this formula coincide with the answer sets of the original program. 

Since then, that approach  has been generalized  to cover such features of ASP input languages  as  aggregates~\cite{fan22,fanhanlie24},
arithmetic~\cite{falilusc20a,lif19},  and  conditional literals~\cite{hanlie22}. 
Yet, all of these features were addressed independently of the others. This paper helps to close that gap.
Here, we introduce grounding-free $\SM$ operator-based  semantics for logic programs containing {\em both} conditional literals and arithmetic, combining ideas from earlier work on these features~\cite{falilusc20a,hanlie22,lif19}. We also show that the proposed characterization coincides with the existing  semantics for such programs based on grounding to infinitary propositional logic~\cite{geb15}; 
these are the semantics adhered to by \clingo.

%Thus, the work can be seen as the one marrying the ideas stemming from 
%

\begin{comment}
These semantics are based on a translation to many-sorted first-order theories and a subsequent application of a second-order ``stability operator" known as the $\SM$ operator~\cite{ferleelif11a}.
%
This was the same approach employed by Fandinno et al. (2020), who defined a translation $\tau^*$ for programs with arithmetic but without conditional literals~\cite{falilusc20a}.
%
Similarly, Hansen and Lierler (2022) developed a translation $\phi$ for logic programs with conditional literals but without arithmetic~\cite{hanlie22}.
%
Here, we merge these ideas and design a translation that is suitable for logic programs with conditional literals and arithmetic. 
%
We then prove that the proposed characterization coincides with the existing semantics~\cite{geb15} based on infinitary propositional logic implemented by \clingo.
%
\end{comment}

One of the advantages of the $\SM$-based characterization is that it enables us to construct proofs of correctness in a modular way that does not rely on grounding the program with respect to a specific instance of input data.
For instance, this style of verification -- which exploits the $\SM$ operator's ability to divide programs into modules -- has been employed to demonstrate the adherence of Graph Coloring, Hamiltonian Cycle, and Traveling Salesman problems to natural language specifications~\cite{cafali20a,fanhanlie22b}.
This paper extends the class of programs for which such arguments can be constructed.
Section~\ref{sec:syntax} defines the language of logic programs considered, and Section~\ref{sec:semantics:sm}  provides the essence of 
 the $\SM$ characterization for logic programs with conditional literals and arithmetic.
Section~\ref{sec:semantics:ipl} reviews the established semantics for programs with conditional literals and arithmetic, which relies on a translation from logic programs to infinitary propositional formulas.
The main results of this paper --Theorems~\ref{thrm:connecting.tsm} and \ref{thrm:main} -- are given in Section~\ref{sec:consem}, connecting our $\SM$-based semantics to the established semantics.

\section{Syntax of Logic Programs}
\label{sec:syntax}
We now present the language of logic programs considered in this paper. It can be viewed as a fragment of the Abstract Gringo (AG) language~\cite{geb15};
equivalently, it can be viewed as an extension of the \mg\ language~\cite{lif19} to rules whose bodies may contain conditional literals.

We assume a {\em (program) signature} with three countably infinite sets of symbols:
\emph{numerals}, \emph{symbolic constants} and \emph{variables}.
We also assume a 1-to-1 correspondence between numerals and integers; the numeral corresponding to an integer~$n$ is denoted by~$\numeral{n}$.
A syntactic expression is~\emph{ground} if it contains no variables.
A ground expression is {\em precomputed} if it contains no operation names.
{\em Terms} are defined recursively:
\begin{itemize}
    \item Numerals, symbolic constants, variables, or either of the special symbols~$\mathit{inf}$ and~$\mathit{sup}$ are terms;
    \item if $t_1$, $t_2$ are program terms and $\circ$ is one of the {\em operation names}
        \begin{gather}
            \label{extended:arithmetic}
            + \quad - \quad \times \quad / \quad \setminus \quad ..
        \end{gather}
        then $t_1 \circ t_2$ is a term (we write $-t$ to abbreviate the term $\overline{0} - t$);
    \item if $t_1$ is a program term, then $\lvert t_1 \rvert$ is a term.
\end{itemize}
We assume that a total order on ground terms is chosen such that
\begin{itemize}
	\item \strut $\mathit{inf}$ is its least element and $\mathit{sup}$ is its greatest element,
	\item for any integers~$m$ and $n$, $\numeral{m} < \numeral{n}$ iff $m < n$, and
	\item for any integer~$n$ and any symbolic constant~$c$, $\numeral{n} < c$.
\end{itemize}
A comparison is an expression of the form~$t_1 \prec t_2$, where $t_1$ and $t_2$ are terms and~$\prec$ is one of the comparison symbols:
\begin{align}
	\label{rel}
	= \quad \neq \quad < \quad > \quad \leq \quad \geq
\end{align}
An \emph{atom} is an expression of the form~$p(\boldt)$, where $p$~is a symbolic constant and~$\boldt$ is a list of program terms.
A \emph{basic literal} is an atom possibly preceded by one or two occurrences of \emph{not}.
A {\em conditional literal} is an expression of the form
$$\clhead\,:\,l_1,\dots,l_m,$$ 
where $\clhead$ is either a comparison, a basic literal, or the symbol $\bot$ (denoting falsity) and $l_1,\dots,l_m$ is a list of basic literals and comparisons. 
We often abbreviate such an expression as $\clhead : \boldL$. 
If $m = 0$, then the preceding ``$:$'' is dropped (so that the program stays \clingo-compliant~\cite{geb15}). We view
 basic literals and comparisons  as  conditional literals with an empty list of conditions, i.e., $m=0$.
A \emph{rule} is an expression of the form
\begin{align}
	\label{rule}
	\rulehead \ruleo B_1, \dots, B_n,
\end{align}
where
\begin{itemize} 
    \item $\rulehead$ is either an atom (a normal rule),
    or an atom in braces (a choice rule),
    or the symbol $\bot$ (a constraint);
	\item each $B_i$ ($1\leq i\leq n$) is a conditional literal.
\end{itemize}
We call the symbol $\hbox{:-}$ a {\em rule operator}. 
We call the left hand side of the rule operator the {\em head}, the right hand side of the rule operator the {\em body}. 
The symbol $\bot$ may be omitted from the head, resulting in an empty head. Such rules are called {\em constraints}.
If the body of the rule is empty, the rule operator will be omitted, resulting in a {\em fact}.
A {\em program} is a finite set of rules.
%

\begin{comment}
This language differs from the language studied by Hansen and Lierler~\cite{hanlie22} in a few crucial ways.
%
First, our considered language permits double negations, relations beyond equality, arithmetic, and terms with 0, 1, or many values.
%
Second, we forbid disjunctive heads, negating comparisons, and function symbols of non-zero arity.
%
As such, our results are not a complete generalization of past work on conditional literals, but they are a full extension of the \mg\ language.
    
\end{comment}

%%%%%%%%%%%%%%%%%%%%%%%%%%%%%%%%

\section{Semantics of Logic Programs via the SM Operator}
\label{sec:semantics:sm}

Here, we introduce the $\SM$ operator-based semantics for logic programs written in the syntax of Section~\ref{sec:syntax}. 
Subsections reviewing necessary concepts are prefixed by the word {\em preliminaries}.
The introduction of these semantics is split into two parts. 
The first part is given in Section~\ref{sec:part1}, where a translation from a logic program to a many-sorted first order theory is provided. This section builds on earlier translations by Fandinno et al.~\cite{falilusc20a,fanliftem24} and Hansen and Lierler~\cite{hanlie22}.
%~\ref{sec:tau:target},~\ref{sec:prelim:vt}.
% Subcection~{sec:translationtaustar} presents the translation itself. 
Section~\ref{sec:sm} provides us with the details of the second part, where we start by reviewing the $\SM$ operator and conclude with the definition of answer sets for the considered logic programs. These programs are translated into first-order theories and then the $\SM$ operator is applied. 
Certain models of the resulting formula that we call ``standard'' correspond to answer sets.

\subsection{Translation $\tau^*$ Extended}\label{sec:part1}

In this section, we introduce an extension of the $\tau^*$ translation from logic programs to many-sorted first-order theories (we refer to the introduced extension with the same symbol $\tau^*$).
This extension combines elements of the most recent incarnation of the~$\tau^*$ transformation~\cite{fanliftem24} with the translation $\phi$ for conditional literals~\cite{hanlie22}. 
Following the example of past work~\cite{fan22}, we extend the~$\tau^B$ component of the $\tau^*$ translation with special treatment for global variables. 
%
%The main contribution of this work is proving that this combination of translations preserves previously established results.

%\subsubsection{Many-sorted First-order Theories} \label{sec:prelim:msfo}
\subsubsection{Preliminaries: The Target Language of $\tau^*$} \label{sec:tau:target}

A {\em signature} $\sigfol$ consists of \emph{function} and \emph{predicate} constants in addition to a set of \emph{sorts}. 
For every sort $s$, a many-sorted interpretation~$\intfolone$ has a non-empty universe~$\universe{\intfolone}{s}$ (we further assume that there are infinitely many variables for each sort).
A reflexive and transitive \emph{subsort} relation $\prec$ is defined on the set of sorts such that 
when sort $s_1 \prec s_2$, an interpretation $\intfolone$ satisfies the condition~$\universe{\intfolone}{s_1} \subseteq \universe{\intfolone}{s_2}$.
The \emph{function signature} of every function constant $f$ consists of a tuple of~\emph{argument sorts} $s_1,\dots,s_n$, and~\emph{value sort} $s$, denoted by
$
\farity.
$

\emph{Object constants} are function constants with $n=0$, their function signature contains only a value sort.
Similarly, the \emph{predicate signature} of every predicate constant $p$ is a tuple of argument sorts $\parity$.
A predicate constant whose predicate signature is the empty tuple is called a~\emph{proposition}.
The \emph{arity} of a function or predicate signature with $n$ argument sorts is $n$.

\emph{Terms} of a signature $\sigma$ are constructed recursively from function constants.
Atomic formulas are built similar to the standard unsorted logic with the restriction that in
a term $f(t_1,\dots,t_n)$ (an atom $p(t_1,\dots,t_n)$, respectively), the sort of term $t_i$ must be a subsort of the i-th argument of~$f$ (of~$p$, respectively).
In addition,~$t_1 = t_2$ is an atomic formula if the sorts of~$t_1$ and~$t_2$ have a common supersort.
The notion of satisfaction is analogous to the unsorted case with the restriction that an interpretation
maps a term to an element in the universe of its associated~sort.

%In the remainder of the paper, we use $\formfolone$, $\formfoltwo$, and their variants to denote first and second-order formulas from a many-sorted signature $\sigfol$.

Our translation $\tau^*$ transforms a logic program $\Pi$ written in the syntax of Section~\ref{sec:syntax} into a first-order sentence with equality over a signature~$\sigfol_\Pi$ of {\em two sorts}.
The first sort is called the \emph{program sort} (denoted $\sortsuper$);  all program terms are of this sort.
The second sort is called the {\em integer sort} (denoted $\sortint$); it is a subsort of the program sort. 
Specifically, the variables of $\sortint$ range over numerals.
Variants of~$X,Y,Z$ will denote variables of sort $\sortsuper$ and variants of~$I,J,M,N$ will denote integer variables.
Bold face variants will denote lists of variables.
To define the remainder of this signature, we must introduce the concepts of \emph{occurrences} and \emph{global variables}.

A {\em predicate symbol} is a pair $p/n$, where $p$ is a symbolic constant and $n$ is a nonnegative integer.
About a program or other syntactic expression, we say that a predicate symbol $p/n$ {\em occurs} in it if it contains an atom of the form $p(t_1,\dots,t_n)$.

A variable is {\em global} in a conditional literal $\clhead : \boldL$ if it occurs in $\clhead$ but not in $\boldL$. 
Thus, any variables in basic literals or comparisons are also global.
A variable is global in a rule if it is global in any of the rule's expressions.
%All non-global variables in an expression are called {\em local}.

For a program $\Pi$, signature~$\sigma_\Pi$ contains:
\begin{enumerate}
\item\label{en:terms} all precomputed terms as object constants of the program sort; a precomputed constant is assigned the sort $\sortint$ iff it is a numeral;

\item \label{en:preds} all predicate symbols occurring in~$\Pi$ as predicate constants with all arguments of the sort $\sortsuper$;

\item\label{en:1:3} the comparison symbols other than equality and inequality as predicate constants with predicate signature $\partwo{\sortsuper}{\sortsuper}$ (we will use infix notation for constructing these atoms);

\item\label{en:arithmetic} function constants $+$, $-$, and $\times$ with function signature \hbox{$\fartwo{\sortint}{\sortint}{\sortint}$}, and function constant $\lvert \cdot \rvert$ with function signature \hbox{$\farone{\sortint}{\sortint}$};
\end{enumerate}

%Expression~$t_1 \neq t_2$ is considered an abbreviation for the formula~$\neg(t_1 = t_2)$.
%This is the same signature presented in~\cite[Section 3]{falilusc20a}.

\subsubsection{Preliminaries: Values of Terms} \label{sec:prelim:vt}
In the language of Section~\ref{sec:syntax}, a term may have one value (as in $3 + 5$), many values (as in $3 + 1..5$), or no values (as in $a + 3$).
Thus, for every program term $t$, we define a formula $val_t(Z)$, where $Z$ is a program variable with no occurrences in $t$.
It indicates that $Z$ is a value of $t$.
\begin{itemize}
    \item if $t$ is a numeral, symbolic constant, program variable, $\mathit{inf}$, or $\mathit{sup}$, then $val_t(Z)$ is $Z = t$;
    \item if $t$ is $\vert t_1 \rvert$, then $val_t(Z)$ is $\exists I (val_{t_1}(I) \wedge Z = \vert I \rvert)$;
    \item if $t$ is $(t_1 \circ t_2)$, where $\circ$ is one of $+$, $-$, or $\times$, then  $val_t(Z)$ is
    \begin{gather*}
        \exists IJ(Z = I \circ J \wedge val_{t_1}(I) \wedge val_{t_2}(J))
    \end{gather*}
    where $I$, $J$ are fresh integer variables;
    \item if $t$ is $t_1/t_2$ then $val_t(Z)$ is
    $$
    \exists I J K (val_{t_1}(I) \wedge val_{t_2}(J) \wedge F_1(IJK) \wedge F_2(IJKZ))
    $$
    where $F_1(IJK)$ is 
    $$K \times \lvert J \rvert \leq \lvert I \rvert < (K + \overline{1}) \times \lvert J \rvert$$
    and $F_2(IJKZ)$ is 
    $$(I \times J \geq \overline{0} \wedge Z = K) \vee (I \times J < \overline{0} \wedge Z = -K)$$
    \item if $t$ is $t_1 \setminus t_2$ then $val_t(Z)$ is
    $$
    \exists I J K (val_{t_1}(I) \wedge val_{t_2}(J) \wedge F_1(IJK) \wedge F_3(IJKZ))
    $$
    where $F_3(IJKZ)$ is
    $$(I \times J \geq \overline{0} \wedge Z = I - K \times J) \vee (I \times J < \overline{0} \wedge Z = I + K \times J)$$
    \item if $t$ is $t_1..t_2$ then $val_t(Z)$ is 
    $$\exists I J K (Z = K \wedge I \leq K \leq J \wedge val_{t_1}(I) \wedge val_{t_2}(J))$$
    where $I \leq K \leq J$ is an abbreviation for $I \leq K \wedge K \leq J$.
\end{itemize}

\subsubsection{Translation $\tau^*$}\label{sec:translationtaustar}
We now describe a translation~$\tau^*$ that converts a program  into a finite set of first-order sentences.
It will be helpful to consider additional notation.
For a tuple of terms $t_1,\dots,t_k$, abbreviated as $\bold{t}$, and a tuple of variables $V_1,\dots,V_k$, abbreviated as $\bold{V}$, we use $val_{\bold{t}}(\bold{V})$ to denote the formula 
$$val_{t_1}(V_1) \wedge \dots \wedge val_{t_k}(V_{t_k}).$$
%
%For a tuple of terms $t_1,\dots,t_k$, abbreviated as $\bold{t}$, and a tuple of variables $V_1,\dots,V_k$, abbreviated as $\bold{V}$, and an expression~$\alpha$ that contains $\bold{t}$,~$\alpha^\bold{t}_\bold{V}$ denotes the expression obtained from~$\alpha$ by substituting~$\bold{V}$ for~$\bold{t}$. 
%
Now we introduce $\tau^B_\boldZ$.
It extends the $\tau^B$ translation~\cite{fanliftem24} with a translation for conditional literals.
%Following the example of~\cite{fan22}, 
We use the $\boldZ$ subscript to denote the set of global variables present in a rule.
Given a list~$\boldZ$  of global variables in some rule $R$, we define~$\tau^B_\boldZ$ for all elements of $R$ as follows:
\begin{enumerate}
    \item $\tau^B_\boldZ(\bot)$ is~$\bot$;
    \item $\tau^B_\boldZ(p(\bold{t}))$ is 
        $\exists \bold{V} (val_{\bold{t}}(\bold{V}) \wedge p(\boldV))$
        for every basic literal $p(\boldt)$;
    \item $\tau^B_\boldZ(not ~p(\bold{t}))$ is 
        $\exists \bold{V} (val_{\bold{t}}(\bold{V}) \wedge \neg p(\boldV))$
        for every basic literal $not ~p(\boldt)$;
    \item $\tau^B_\boldZ(not ~not ~p(\bold{t}))$ is 
        $\exists \bold{V} (val_{\bold{t}}(\bold{V}) \wedge \neg \neg p(\boldV))$
        for every basic literal $not ~not ~p(\boldt)$
    \item $\tau^B_\boldZ(t_1 \prec t_2)$ is 
        $\exists Z_1 Z_2(val_{t_1}(Z_1) \wedge val_{t_2}(Z_2) \wedge Z_1 \prec Z_2)$
        for every comparison $t_1 \prec t_2$;
    \item $\tau^B_\boldZ (\boldL)$ is 
        $\tau^B_\boldZ(l_1) \wedge \dots \wedge \tau^B_\boldZ(l_m)$
        for a  list $\boldL$ of basic literals and comparisons; 
    \item $\tau^B_\boldZ(\clhead : \boldL)$ is 
    $$
    \forall \boldX \left(\tau^B_\boldZ(\boldL) \rar \tau^B_\boldZ(\clhead)\right)
    $$
    for every conditional literal $H : \boldL$ occurring in the body of $R$, 
    where $\boldX$ is the set of variables occurring in $H : \boldL$ that do not occur in $\boldZ$.
\end{enumerate}

In what follows, for each rule $R$, $\boldZ$ denotes the list of the global variables of~$R$, and~$\bold{V}$ denotes a list of fresh, alphabetically first program variables.
We now define the translation~$\tau^*$. 
\begin{enumerate}
    \item For a basic rule~$R$ of the form
    $
    p(\bold{t}) \ruleo B_1, \dots, B_n
    $,
    its translation~$\tau^* R$ is
    $$
    \widetilde{\forall} \left(val_{\boldt}(\boldV) \wedge \tau^B_\boldZ(B_1) \wedge \dots \wedge \tau^B_\boldZ(B_n) \to p(\boldV)\right).
    $$
    \item For a choice rule~$R$ of the form
    $
    \{p(\boldt)\} \ruleo B_1, \dots, B_n
    $,
    its translation~$\tau^* R$ is 
    $$
    \widetilde{\forall} \left(val_{\boldt}(\boldV) \wedge \tau^B_\boldZ(B_1) \wedge \dots \wedge \tau^B_\boldZ(B_n) \wedge \neg\neg p(\boldV) \to p(\boldV)\right).
    $$
    \item For a constraint~$R$ of the form
    $
    \bot \ruleo B_1, \dots, B_n
    $,
    its translation~$\tau^* R$ is
    $$
    \forall \boldZ \big(\tau^B_\boldZ(B_1) \wedge \dots \wedge \tau^B_\boldZ(B_n) \to \bot\big).
    $$
    \item For every program~$\Pi$, its translation~$\tau^* \Pi$ is the first-order theory containing~$\tau^* R$ for each rule~$R$ in~$\Pi$.
\end{enumerate}
%
% \begin{example}
% Translation $\tau^*$ applied to the first rule in Listing~\ref{list:gc_yl} is
%    \begin{align*}
%        \forall C V V_1 V_2 (&V_1 = V \wedge V_2 = C \wedge\\
%        &(\exists Z (Z = V \wedge vtx(Z)) \wedge \exists Z (Z = C \wedge col(Z))) \wedge \neg\neg asg(V_1, V_2) \rar asg(V_1, V_2)).
%    \end{align*} 
% %
% \end{example}

\begin{example}
    For a list of global variables $\{V\}$, $\tau^B_{\{V\}} \left(not ~asg(V, ~C) ~: ~col(C)\right)$ is
    \begin{align*}
        \forall C (\exists Z (Z = C \wedge col(Z)) \rar \exists Z Z_1 (Z = V \wedge Z_1 = C \wedge \neg asg(Z, Z_1)))
    \end{align*}
Thus, the translation of rule~\eqref{eq:rule.color.cl.constraint} is
    \begin{align*}
        \forall V (\tau^B_{\{V\}} \left(not ~asg(V, ~C) ~: ~col(C)\right) \wedge \exists Z (Z = V \wedge vtx(Z)) \rar \bot).
    \end{align*}
\end{example} 
%    The preceding translation of our conditional literal constraint can be simplified to \todo{What are (i) grounds for saying that and (ii) rationalie for saying that? these have to be added otherwise this is the claim out of the blue with unclear motivation and reasons behind it.}
 %   \begin{align*}
 %       \forall V \left(\left(\forall C (col(C) \rar \neg asg(V,C)) \wedge vtx(V)\right) \rar \bot\right).
  %  \end{align*}    

\subsection{Semantics via Many-sorted SM}
\label{sec:sm}

\subsubsection{Preliminaries: The \textit{\SM} Operator for Many-sorted Signatures}
\label{sec:sm:def}

This subsection reviews an extension of the $\SM$ operator~\cite{ferleelif11a} to the many-sorted setting~\cite{fan22}.
This operator is applied to a set of (many-sorted) first-order sentences corresponding to the \textit{formula representation} of a logic program to obtain a set of (many-sorted) second-order sentences.
Models of this set respecting certain assumptions (such as a Herbrand interpretation of symbolic constants) capture the stable models of the original logic program.
We  use $\tau^*$ to obtain the formula representation of a program in a specific many-sorted signature $\sigfol_\Pi$, and apply $\SM$ to the result to characterize stable models of the program.

If $p$ and $u$ are predicate constants or variables with the same predicate signature, then $u \leq p$ stands for
the formula
$$\forall{\boldW}(u(\boldW) \rar p(\boldW)),$$
where~$\boldW$ is an $n$\nobreakdash-tuple of distinct object variables.
% %
If $\boldp$ and $\boldu$ are tuples $p_1,\dots,p_n$ and $u_1,\dots,u_n$ of predicate constants or variables such that each~$p_i$ and~$u_i$ have the same predicate signature, then $\boldu \leq \boldp$ stands for the conjunction 
$(u_1 \leq p_1) \wedge \dots \wedge (u_n \leq p_n),$
and $\boldu < \boldp$ stands for $(\boldu \leq \boldp) \wedge \neg(\boldp \leq \boldu)$. 
% %
For any many\nobreakdash-sorted first\nobreakdash-order formula $\formfolone$ and a list~$\boldp$ of predicate constants, by $\smp[\formfolone]$ we denote the second\nobreakdash-order formula
$$
\formfolone \wedge \neg\exists{\mathbf{u}}\big((\mathbf{u} < \mathbf{p}) \wedge \formfolone^*(\mathbf{u})\big)
$$
where \textbf{u} is a list of distinct predicate variables $u_1,\dots,u_n$ of the same length as~$\boldp$, such that the predicate signature of each~$u_i$ is the same as the predicate signature of $p_i$, and $\formfolone^*(\mathbf{u})$ is defined recursively:
\begin{itemize}
    \item $\formfolone^* = \formfolone$ for any atomic formula $\formfolone$ that does not contain members of \textbf{p},
    \item $p_i(\mathbf{t})^* = u_i(\mathbf{t})$ for any predicate symbol~$p_i$ belonging to~$\boldp$ and any list $\textbf{t}$ of terms,
    \item $(\formfolone \wedge \formfoltwo)^* = \formfolone^* \wedge \formfoltwo^*$,
    \item $(\formfolone \vee \formfoltwo)^* = \formfolone^* \vee \formfoltwo^*$,
    \item $(\formfolone \rar \formfoltwo)^* = (\formfolone^* \rar \formfoltwo^*) \wedge (\formfolone \rar \formfoltwo)$,
    \item $(\forall{x}\formfolone)^* = \forall{x}\formfolone^*$,
    \item $(\exists{x}\formfolone)^* = \exists{x}\formfolone^*$.
\end{itemize}

\begin{definition}
    \label{def:p.stable.models}
    For a many-sorted first-order sentence $F$ from signature $\sigma$, the models of $SM_{\bold{p}}[F]$ are called the $\bold{p}$\nobreak-stable models of $F$.
    For a set $\Gamma$ of first-order sentences, the $\bold{p}$\nobreak-stable models of $\Gamma$ are the $\bold{p}$\nobreak-stable models of the conjunction of all formulas in $\Gamma$.
\end{definition}
The list $\boldp$ of predicates of a $\bold{p}$\nobreak-stable model are called {\em intensional} -- ``belief'' in these predicates is minimized.
Predicates that are not intensional are called {\em extensional}.

\subsubsection{Answer Sets via Standard Interpretations}
Recall from Section~\ref{sec:tau:target} that $\tau^*$ maps a program $\Pi$ with intensional predicates $\boldp$ into a formula within signature $\sigfol_\Pi$ of two sorts.
We now consider a special type of interpretation of $\sigfol_\Pi$ (a \emph{standard} interpretation) to restrict the models of $\SM_\boldp[\tau^*\Pi]$ to exactly the stable models of $\Pi$.
%This definition of standard interpretations follows the definition presented in~\cite{fanliftem24}.

A standard interpretation $\I$ satisfies that
\begin{enumerate}
    \item the universe $|\I|^{\sortsuper}$ is the set of all precomputed terms;
    \label{c.interp.first}
    \item the universe $|\I|^{\sortint}$ is the set of all numerals;
    \label{c.interp.second}
    \item $\I$ interprets every precomputed term $t$ as $t$;
    \label{c.interp.third}
    \item $\I$ interprets $\numeral{m} + \numeral{n}$ as~$\numeral{m+n}$, and similarly for subtraction and multiplication;
    \label{c.interp.fourth}
    \item $\I$ interprets $\lvert \numeral{n} \rvert$ as~$\numeral{\lvert n \rvert}$;
    \label{c.interp.fifth}
    % \item $\I$ interprets every precomputed atom $p(t_1,\dots,t_k)$ as true iff it belongs to $A$;
    % \label{c.interp.sixth}
    \item $\I$ interprets every comparison $t_1 \prec t_2$, where $t_1$ and $t_2$ are precomputed terms, as true iff the relation $\prec$ holds for the pair $(t_1,t_2)$.
    \label{c.interp.seventh}
\end{enumerate}
For a standard interpretation $\intfolone$, we use $\atoms{\intfolone}$ to denote the unique set of precomputed atoms to which $\intfolone$ assigns the value \emph{true}.

\begin{definition}
    \label{def:answer.sets}
    Let $\Pi$ be a  program and let $\bold{p}$ be a list of some predicate symbols occurring in $\Pi$ other than the comparison symbols.
    Then, for a standard interpretation $\intfolone$, we call $\atoms{\intfolone}$ a {\em $\bold{p}$-answer set} of $\Pi$ when $\intfolone$ is a $\bold{p}$\nobreak-stable model of $\tau^*\Pi$.
    When $\bold{p}$ is the list of all predicate symbols occurring in $\Pi$ other than the comparison symbols, then we simply call a {\em $\bold{p}$-answer set} of $\Pi$ an {\em answer set} of $\Pi$.
\end{definition}
In the sequel, we illustrate how answer sets as defined here coincide with the notion of what we later call gringo answer sets. Yet, it is interesting to note that $\SM$\nobreakdash-operator based semantics enable a more flexible understanding of a $\bold{p}$-answer set that distinguishes intensional (namely, $\bold{p}$)  and extensional predicate symbols.

\section{Review: Semantics of Logic Programs via Infinitary Propositional Logic}
\label{sec:semantics:ipl}
%There are multiple equivalent ways of defining the stable models of first-order formulas corresponding to logic programs.
%The $\SM$ operator employed in Section~\ref{sec:sm} is one way.
%(Quantified) Equilibrium logic (a nonmonotonic entailment system built on the logic of here-and-there) is another.
Truszczy\'{n}ski (2012) provides a characterization of logic program semantics in terms of infinitary propositional logic (IPL) and illustrates the precise relation  to the $\SM$-based characterization~\cite{truszczynski12a}.
%via their mutual connection to the quantified logic of here-and-there
IPL is a useful technical device due to its close relation to the grounding  procedure implemented by answer set solver \clingo.
In this section, we review IPL and the infinitary logic of here-and-there, which are important tools for establishing the results of this paper. 
%Our review follows the presentation in~\cite[Section 2]{harlifpeval17}.
Unless otherwise specified, $\formprop$ and its variants are used to denote IPL formulas, whereas $\setprop$ and its variants denote sets of IPL formulas. Finally, we introduce \emph{gringo answer sets}, which constitute the established semantic characterization of logic programs with conditional literals and arithmetic.

\subsection{Infinitary Formulas}
A {\em propositional signature} $\sigprop$ is a set of propositional atoms.
For every nonnegative integer $r$, {\em (infinitary) formulas} over $\sigprop$ of rank $r$ are defined recursively:
\begin{itemize}
    \item every atom from $\sigprop$ is a formula of rank 0,
    \item if $\setprop$ is a set of formulas, and $r$ is the smallest nonnegative integer that is greater than the ranks of all elements of $\setprop$, then $\setprop^\wedge$ and $\setprop^\vee$ are formulas of rank $r$,
    \item if $\formprop$ and $\formprop'$ are formulas, and $r$ is the smallest nonnegative integer that is greater than the ranks of $\formprop$ and $\formprop'$, then $\formprop \to \formprop'$ is a formula of rank $r$.
\end{itemize}
$\formprop \wedge \formprop'$ is shorthand for $\{\formprop,\formprop'\}^\wedge$, and $\formprop \vee \formprop'$ is shorthand for $\{\formprop,\formprop'\}^\vee$.
We use $\top$ and $\bot$ as abbreviations for $\emptyset^\wedge$ and $\emptyset^\vee$, respectively.
Further, $\neg \formprop$ stands for $\formprop \to \bot$, and $\formprop \leftrightarrow \formprop'$ stands for $(\formprop \to \formprop') \wedge (\formprop' \to \formprop)$.

A \emph{propositional interpretation} of $\sigprop$ is a subset $\intpropone$ of $\sigprop$.
The {\em satisfaction relation} between a propositional interpretation and an infinitary  formula is defined recursively:
\begin{itemize}
    \item For every atom $p$ from $\sigprop$, $\intpropone \models p$ if $p \in \intpropone$.
    \item $\intpropone \models \setprop^\wedge$ if, for every formula $\formprop$ in $\setprop$, $\intpropone \models \formprop$.
    \item $\intpropone \models \setprop^\vee$ if there is a formula $\formprop$ in $\setprop$ such that $\intpropone \models \formprop$.
    \item $\intpropone \models \formprop \to \formprop'$ if $\intpropone \not\models \formprop$ or $\intpropone \models \formprop'$.
\end{itemize}

\subsection{Infinitary Logic of Here-and-there and Truszczy\'{n}ski Stable Models}
\label{sec:ipl:tsm}
An \emph{\htinterp} of propositional signature $\sigprop$ is an ordered pair $\hti$ of interpretations of $\sigprop$ such that $\intpropone \subseteq \intproptwo$.
The {\em satisfaction relation (ht-satisfaction)} between an \htinterp\ and an infinitary formula is defined recursively:
\begin{itemize}
    \item For every atom $p$ from $\Sigma$, $\hti \models p$ if $p \in S$.
    \item $\hti \models \setprop^\wedge$ if, for every formula $\formprop$ in $\setprop$, $\hti \models \formprop$.
    \item $\hti \models \setprop^\vee$ if there is a formula $\formprop$ in $\setprop$ such that $\hti \models \formprop$.
    \item $\hti \models \formprop \to \formprop'$ if
    \begin{enumerate}
        \item $S' \models \formprop \to \formprop'$, and
        \item $\hti \not\models \formprop$ or $\hti \models \formprop'$.
    \end{enumerate}
\end{itemize}
An \emph{\htmodel} of an infinitary formula  is an \htinterp\ that satisfies this formula.
An \emph{\htmodel} of a set $\setprop$ of infinitary formulas is an \htinterp\ that satisfies all formulas in~$\setprop$.
Two infinitary formulas (sets of infinitary formulas) are {\em equivalent} when they have the same {\htmodel}s.
An \htinterp\ of the form $\langle S, S \rangle$ is called \emph{total}.
An \emph{equilibrium model} of a set $\setprop$ of infinitary formulas is a total \htmodel\ $\langle S', S' \rangle$ of~$\setprop$ such that for every proper subset $S$ of $S'$, $\hti$ is not an \htmodel\ of $\setprop$.

% The \emph{reduct} $\formprop^S$ of an infinitary formula $\formprop$ w.r.t. to an interpretation $S$ is defined as follows:
% \begin{itemize}
%     \item for $p \in \sigma$, $p^S = \bot$ if $S \not\models p$; otherwise $p^S = p$,
%     \item $(\setprop^\wedge)^S = \{\formprop^S \mid \formprop \in \setprop\}^\wedge$,
%     \item $(\setprop^\vee)^S = \{\formprop^S \mid \formprop \in \setprop\}^\vee$
%     \item $(\formprop \to \formprop')^S = \bot$ if $S \not\models \formprop \to \formprop'$; otherwise $(\formprop \to \formprop')^S = \formprop^S \to \formprop'^S$.
% \end{itemize}
%
An interpretation satisfies a set $\setprop$ of formulas (is a model of $\setprop$) if it satisfies all formulas in $\setprop$.

\begin{definition}
    \label{def:tru.stable.models}
    For a set $\setprop$ of infinitary propositional formulas, we say that a propositional interpretation $S$ is a \emph{Truszczy\'{n}ski stable model} if $\langle S, S \rangle$ is an equilibrium model of $\setprop$.
\end{definition}
In the sequel, we may refer to  a {Truszczy\'{n}ski stable model} of an infinitary propositional formula, where we identify this formula with a singleton set containing it.
%
% Theorem 3 by~\citeauthor{truszczynski12a} establishes that $S$ is a Truszczy\'{n}ski stable model of a set of infinitary formulas $\setprop$ if and only if $\langle S, S \rangle$ is an equilibrium model of $\setprop$.

%\subsection{Stable models in terms of IPL}

%We say that a set $S$ of ground atoms is a {\em gringo answer set} of a program $\Pi$ written in the language of Section~\ref{sec:syntax} if $\langle \intpropone, \intpropone \rangle$ is an equilibrium model of $\tau\Pi$ in the infinitary logic of here-and-there\footnote{Defining gringo answer sets in this way allows us to elide the equivalent definition in terms of an infinitary reduct~\cite[Theorem 2]{harlifpeval17}.}.
%This section relates these established semantics to the new definition in Section~\ref{sec:semantics:sm} that uses the $\SM$ operator and the extended $\tau^*$ translation.

\subsection{Translation $\tau$ and Gringo Answer Sets}
In this section we review the relevant components of $\tau$~\cite{fanliftem24,geb15}.
It is due to note that the 2015 publication is the ``official'' source of the Abstract Gringo semantics and contains a $\tau$ translation for conditional literals, whereas the 2024 publication is more up-to-date with respect to the  definitions of division, modulo, and absolute value adhered to by the latest versions of \clingo.

An expression is {\em ground} if it does not contain variables.
For every ground term $t$, the set of precomputed terms $[t]$ of its {\em values} is defined as follows:
\begin{itemize}
    \item if $t$ is a numeral, symbolic constant, or $\mathit{inf}$ or $\mathit{sup}$ then $[t]$ is $\{t\}$;
    \item if $t$ is $\lvert t_1 \rvert$, then $[t]$ is the set of numerals $\overline{\lvert n \rvert}$ for all integers $n$ such that $\overline{n} \in [t_1]$;
    \item if $t$ is $(t_1 \circ t_2)$, where $\circ$ is one of $+,-,\times$, then $[t]$ is the set of numerals $\overline{n_1 \circ n_2}$ such that $\overline{n_1} \in [t_1]$, $\overline{n_2} \in [t_2]$;
    \item if $t$ is $(t_1/t_2)$, then $[t]$ is the set of numerals $\overline{round(n_1/n_2)}$ for all integers $n_1,n_2$ such that $\overline{n_1} \in [t_1]$, $\overline{n_2} \in [t_2]$ and $n_2 \neq 0$;
    \item if $t$ is $(t_1 \setminus t_2)$, then $[t]$ is the set of numerals $\overline{n_1 - n_2 \cdot round(n_1/n_2)}$ for all integers $n_1,n_2$ such that $\overline{n_1} \in [t_1]$, $\overline{n_2} \in [t_2]$ and $n_2 \neq 0$;
    \item if $t$ is $(t_1..t_2)$, then $[t]$ is the set of numerals $\overline{m}$ for all integers $m$ such that, for some integers $n_1$, $n_2$
    \begin{gather*}
        \overline{n_1} \in [t_1], \quad \overline{n_2} \in [t_2], \quad n_1 \leq m \leq n_2.
    \end{gather*}
    \item if $\boldt$ is a tuple of terms $t_1,\dots,t_n$ ($n > 0$) then $[\boldt]$ is the set of tuples $\langle r_1,\dots,r_n \rangle$ for all $r_1 \in [t_1]$, $\dots$, $r_n \in [t_n]$.
\end{itemize}
The function $round$ is defined as:
\begin{gather}
    round(n) = \left\{
    \begin{array}{ll}
          \floor{n}& \text{if } n \geq 0 \\
          \ceil{ n } & \text{if } n < 0 \\
    \end{array} 
    \right. 
    \label{def:round}
\end{gather}   
%For any ground terms $t_1,\dots,t_n$, by $[t_1,\dots,t_n]$ we denote the set of tuples $r_1,\dots,r_n$ for all $r_1 \in [t_1],\dots,r_n \in [t_n]$.

A rule (or any other expression in a rule) is called \emph{closed} if it contains no global variables.
An \emph{instance} of a rule~$R$ is any rule that can be obtained from~$R$ by substituting precomputed terms for all global variables.

To transform a closed rule $R$ into a set of infinitary propositional formulas, translation $\tau$ is defined as follows:
\begin{enumerate}[noitemsep,topsep=0pt]
    \item $\tau(\bot)$ is $\bot$;
    %\item $\tau(p(\boldt))$ is $p(\boldt)$ if $\boldt$ is a tuple of precomputed terms;
    \item $\tau(p(\boldt))$ is the disjunction of atoms $p(\boldr)$ over all tuples $\boldr$ in $[\boldt]$ for any ground atom $p(\boldt)$, thus, $\tau(p(\boldt))$ is $p(\boldt)$ if $\boldt$ is a tuple of precomputed terms;
    \item $\tau(not ~p(\boldt))$ is the disjunction of formulas $\neg p(\boldr)$ over all tuples $\boldr$ in $[\boldt]$ for any ground atom $p(\boldt)$;
    \item $\tau(not ~not ~p(\boldt))$ is the disjunction of formulas $\neg \neg p(\boldr)$ over all tuples $\boldr$ in $[\boldt]$ for any ground atom $p(\boldt)$;
    % \item $\tau(\{p(\boldt)\})$ is the conjunction of the formulas $p(\boldr) \vee \neg p(\boldr)$ over all tuples $\boldr$ in $[\boldt]$ for any ground choice expression $\{p(\boldt)\}$;

    \item $\tau(t_1 \prec t_2)$ is $\top$ if the relation $\prec$ holds between the terms $r_1$, $r_2$ for some $r_1$, $r_2$ such that $r_1 \in [t_1]$ and $r_2 \in [t_2]$, and $\bot$ otherwise;
    \item $\tau (\boldL)$ is $\tau(l_1) \wedge \dots \wedge \tau(l_m)$ for a list $\boldL$ of basic or conditional literals;
    
    \item\label{t:clb} for a closed conditional literal $H : \boldL$ occurring in the body of rule $R$, $\tau(H : \boldL)$ is the conjunction of the formulas 
    $\tau(\boldL^{\boldx}_{\boldr}) \rar \tau(H^{\boldx}_{\boldr})$
    where $\boldx$ is the list of variables occurring in the conditional literal, over all tuples $\bold{r}$ of precomputed terms of the same length as~$\boldx$;
\item for  an instance $\rho$  of 
\begin{itemize}
\item a basic rule $p(\boldt) \ruleo B_1,\dots,B_n$, its translation $\tau \rho$ is 
\begin{gather}
    \label{eq:tau.basic.rule}
    \tau(B_1 \wedge \dots \wedge B_n) \rar \bigwedge_{\boldr \in [\boldt]} p(\boldr);
\end{gather}

\item a choice rule $\{p(\boldt)\} \ruleo B_1,\dots,B_n$, its translation $\tau \rho$ is 
\begin{gather}
    \label{eq:tau.choice.rule}
    \tau(B_1 \wedge \dots \wedge B_n) \rar \bigwedge_{\boldr \in [\boldt]} (p(\boldr) \vee \neg p(\boldr));
\end{gather}
\item  a constraint $ \bot \ruleo B_1,\dots,B_n$,  its translation $\tau \rho$ is 
$$\neg \tau(B_1 \wedge \dots \wedge B_n).$$
\end{itemize}
\color{black}

\end{enumerate}
For any rule $R$ of form~\eqref{rule}, $\tau R$ stands for the conjunction of the elements in the set of formulas that consists of $\tau \rho$ for all instances $\rho$ of $R$.
By definition, each of these instances (and thus the expressions occurring within them) are closed.
For any program~$\Pi$, $\tau\Pi$ is the set of the formulas $\tau R$ for each $R$ in $\Pi$.

\begin{example}
    For instance, $\tau \left( not ~asg(v,C) ~: ~col(C) \right)$ is
    \begin{gather*}
        \{ \left(col(c) \to \neg asg(v,c) \right) \mid c \in |I|^{\sortsuper} \}^\wedge.
    \end{gather*}
    Thus, $\tau$ applied to rule~\eqref{eq:rule.color.cl.constraint} is
    \begin{gather*}
        \{ \neg \left( \tau \left( not ~asg(v,C) ~: ~col(C) \right) \wedge vtx(v) \right) \mid v \in |I|^{\sortsuper}\}^\wedge.
    \end{gather*}
\end{example}

Consider the syntax of the language presented in Section~\ref{sec:syntax}, which is a subset of the Abstract Gringo  language, whose
semantics are defined by the translation~$\tau$.
The following definition is from Gebser et al.~\cite{geb15}.

\begin{definition}
    \label{def:gringo.answer.sets}
    We say that a set  $S$ of ground atoms is a {\em gringo answer set} of a program~$\Pi$  if $S$ is a Truszczy\'{n}ski stable model of $\tau\Pi$.
\end{definition}

\section{Connecting Semantics}
\label{sec:consem} 
Ultimately, this section shows that 
the answer sets as introduced in 
{Definition~\ref{def:answer.sets}} by means of the $\SM$ operator coincide with the gringo answer sets as provided by Definition~\ref{def:gringo.answer.sets}. 
Before that, we review the process of converting first order sentences (typically denoted by variants of $F$ and $G$) into infinitary formulas originally proposed by 
Truszczy\'{n}ski~\cite{truszczynski12a}. In addition, we review the concept of strong equivalence in the settings of IPL. Equipped with these notions, we show how, given a program $\Pi$, the application of the transformation by Truszczy\'{n}ski on $\tau^*\Pi$ results in  an IPL formula that is strongly equivalent to the IPL formula obtained by $\tau\Pi$. This is the key step that helps us to establish our main result: answer sets (Definition~\ref{def:answer.sets}) and  gringo answer sets (Definition~\ref{def:gringo.answer.sets}) coincide.

\begin{comment}
the formulas produced by composing grounding and $\tau^*$ are interchangeable with the formulas obtained by the $\tau$ transformation.
% %
This allows us to use prior results~\cite[Theorem 5]{truszczynski12a} to show that our proposed definition of answer sets relying on $\tau^*$ and the $\SM$ operator (Definition~\ref{def:answer.sets}) captures the AG semantics (Definition~\ref{def:gringo.answer.sets}).
%
Hansen and Lierler (2022) accomplished this for a simpler (arithmetic-free) language by establishing that the formulas produced by $\gr{I}{\phi(\Pi)}$ were identical to the formulas produced by $\tau(\Pi)$~\cite[Theorem 1]{hanlie22}.
%
This no longer holds due to the presence of arithmetic.
%
% For example, consider the one-rule program $\Pi$: 
% \begin{verbatim}
%     p(X) :- q(X+1).
% \end{verbatim}
% % Then,
% % $$
% % \tau^*(\Pi) = \forall V X (V = X \wedge \exists Z (\exists I J (Z = I + J \wedge I = X \wedge J = 1) \wedge q(Z)) \to p(V))
% % $$
% % and 
% % \begin{align*}
% %     \grdstd(\tau^*\Pi)) &= \{\grdstd(v = x \wedge \exists Z (\exists I J (Z = I + J \wedge I = x \wedge J = 1) \wedge q(Z)) \to p(v)) \mid \langle v, x \rangle \in |I|^{\sortsuper} \times |I|^{\sortsuper}\}^\wedge \\
% %     &= \{\grdstd(v = x) \wedge \grdstd(F) \to \grdstd(p(v)) \mid \langle v, x \rangle \in |I|^{\sortsuper} \times |I|^{\sortsuper}\}^\wedge \\
% %     &= \{v = x \wedge \grdstd(F) \to p(v) \mid \langle v, x \rangle \in |I|^{\sortsuper} \times |I|^{\sortsuper}\}^\wedge
% % \end{align*}

% and let $F$ be the formula
% $
% \exists Z (\exists I J (Z = I + J \wedge I = x \wedge J = 1) \wedge q(Z)).
% $
% %
% Then,
% \begin{align*}
%     \grdstd(\tau^*\Pi)) = 
%     \{v = x \wedge \grdstd(F) \to p(v) \mid \langle v, x \rangle \in |I|^{\sortsuper} \times |I|^{\sortsuper}\}^\wedge
% \end{align*}
% whereas
% \begin{align*}
%     \tau(\Pi) &= \bigwedge_{i \in \universe{\intfolone}{\sortint}} q(i+1) \to p(i)
% \end{align*}

% where $F$ is the formula
% $$
% \exists Z (\exists I J (Z = I + J \wedge I = x \wedge J = 1) \wedge q(Z))
% $$

% On the other hand, for the same program, $\tau(\Pi)$ produces an infinitary formula of a different structure:
% \begin{align*}
%     \tau(\Pi) &= \{\tau(p(x) \ruleo q(x+1)) \mid x \in |I|^{\sortsuper}\}^\wedge \\
%     &= \{\bigvee_{r \in [u+1]} q(r) \to \bigvee_{s \in [u]} p(s)  \mid u \in |I|^{\sortsuper} \setminus \universe{\intfolone}{\sortint}\}^\wedge \wedge 
%     \{\bigvee_{r \in [i+1]} q(r) \to \bigvee_{s \in [i]} p(s)  \mid i \in \universe{\intfolone}{\sortint}\}^\wedge \\
%     &= \bigwedge_{i \in \universe{\intfolone}{\sortint}} q(i+1) \to p(i)
% \end{align*}

We need a way to show that two IPL theories are equivalent in the sense that one can be replaced by the other without affecting the stable models.
This requires the concept of strong equivalence, which we now review.
\end{comment}

\subsection{Preliminaries: From First Order Sentences to Infinitary Formulas}

%Section~\ref{sec:semantics:ipl} reviews some major contributions by Truszczy\'{n}ski (2012), where he introduces the semantic of logic programs by their translation to IPL.
%In addition, 
Truszczy\'{n}ski (2012)
provides a definition of stable models for first-order sentences via a grounding procedure, which converts them into infinitary formulas~\cite[Section 3]{truszczynski12a}. 
%These models may be Herbrand and non-Herbrand interpretations.
% He defines the grounding of a first-order sentence $F$ w.r.t. an interpretation $\intfolone$ as a transformation of $F$ into infinitary propositional formulas over a given signature~\cite[Section 3]{truszczynski12a}.
Later, this grounding process was generalized to many-sorted first-order formulas and extensional predicate symbols~\cite{fan22,falilusc20a}. We review this generalization below. Prior to the review we introduce some necessary notation.

If ${\intfolone}$ is an interpretation of a signature~$\sigma$ then by $\sigma^{\intfolone}$ we denote the signature obtained
from~$\sigma$ by adding, for every element~$d$ of a domain~$|I|^s$, its {\em name}~$d^*$ as an object constant
of sort~$s$.
The interpretation~${\intfolone}$ is extended to $\sigma^{\intfolone}$ by defining~$(d^*)^{\intfolone} = d$.
The value~$t^{\intfolone}$ assigned by an interpretation~${\intfolone}$ of~$\sigma^{\intfolone}$~to a ground term~$t$ over $\sigma^{\intfolone}$
and the satisfaction relation between an interpretation of~$\sigma^{\intfolone}$ and a sentence over~$\sigma^{\intfolone}$  are defined recursively, in the usual way.
If $\boldd$ is a tuple $d_1, \dots, d_n$ of elements of domains of~${\intfolone}$ then $\boldd^*$ stands for the tuple~$d_1^*, \dotsc, d_n^*$ of their names. 
If~$\boldt$ is a tuple~$t_1, \dotsc, t_n$ of ground terms then $\boldt^{\intfolone}$ stands for the tuple~$t_1^{\intfolone}, \dotsc, t_n^{\intfolone}$ of values assigned to them by~${\intfolone}$.

%ALSO NOTE: when $n$ is an integer ($n \in \universe{\intfolone}{\sortint}$) we denote it's name ($n^*$) as $\numeral{n}$ to distinguish that $\numeral{n}$ is a numeral ($\numeral{n}^\intfolone = n$) for a standard interpretation $\intfolone$.

For an interpretation $\intfolone$ and a list $\boldp$ of predicate symbols, by $\intfolone^\boldp$ we denote the set of precomputed atoms $p(t_1,\dots,t_k)$ satisfied by $\intfolone$ where $p \in \boldp$.
Let~$\boldp,\boldq$ be a partition of the predicate symbols in the signature.
Then,
the \emph{grounding of a first-order sentence~$F$ with respect to an interpretation~$\intfolone$ and a set of intensional predicate symbols~$\boldp$} (and extensional predicate symbols~$\boldq$) is defined as follows:
\begin{enumerate}
    \item\label{gr:condition:1} $\gr{\intfolone}{\bot} = \bot$;
    \item\label{gr:condition:2} for~$p \in \boldp$, $\gr{\intfolone}{p(t_1,\dotsc,t_k)} = p((t_1^\intfolone)^*,\dotsc,(t_k^\intfolone)^*)$;
    \item\label{gr:condition:3} for~$p \in \boldq$, $\gr{\intfolone}{p(t_1,\dotsc,t_k)} = \top$ 
    if ${ p((t_1^\intfolone)^*,\dotsc,(t_k^\intfolone)^*) \in I^\boldq}$

    and ${\gr{\intfolone}{p(t_1,\dotsc,t_k)} = \bot}$ otherwise;

    \item\label{gr:condition:4} $\gr{\intfolone}{t_1 = t_2} = \top$ if $t_1^\intfolone = t_2^\intfolone$ and $\bot$ otherwise;

    \item\label{gr:condition:5} $\gr{\intfolone}{F \otimes G} = \gr{\intfolone}{F} \otimes \gr{\intfolone}{G}$ if $\otimes$ is $\wedge$, $\vee$, or $\to$;
    \item\label{gr:condition:6} $\gr{\intfolone}{\exists X \, F(X)} = \{ \gr{\intfolone}{F(u^*)} \mid u \in \universe{\intfolone}{s} \}^{\vee}$ if $X$ is a variable of sort~$s$;
    \item\label{gr:condition:7} $\gr{\intfolone}{\forall X \, F(X)} = \{ \gr{\intfolone}{F(u^*)} \mid u \in \universe{\intfolone}{s} \}^{\wedge}$ if $X$ is a variable of sort~$s$.
\end{enumerate}
Recall that $\neg F$ is an abbreviation for $F \to \bot$ so that
\hbox{$\grdstd(\neg F) =  \neg \grdstd(F)$}.
For a first order theory~$\Gamma$, we define~$\gr{\intfolone}{\Gamma} = \{ \gr{\intfolone}{F} \mid F \in \Gamma \}^\wedge$.
%

%\paragraph{A remark on names}
In the sequel, most of the references to the stated definition of grounding are in the context of \underline{standard} interpretations. 
Given the conditions~\ref{c.interp.third}-\ref{c.interp.fifth} of the definition of standard interpretations, we can simplify the definition of grounding by restating conditions~\ref{gr:condition:2},~\ref{gr:condition:3},~\ref{gr:condition:6}, and~\ref{gr:condition:7}, as follows   
\begin{enumerate}
    \item[\ref{gr:condition:2}.] for~$p \in \boldp$, $\gr{\intfolone}{p(t_1,\dotsc,t_k)} = p(t_1^\intfolone,\dotsc,t_k^\intfolone)$;
    
    \item[\ref{gr:condition:3}.] for~$p \in \boldq$, $\gr{\intfolone}{p(t_1,\dotsc,t_k)} = \top$ 
    if ${ p(t_1^\intfolone,\dotsc,t_k^\intfolone) \in I^\boldq}$
    
    and ${\gr{\intfolone}{p(t_1,\dotsc,t_k)} = \bot}$ otherwise;

    \item[\ref{gr:condition:6}.] $\gr{\intfolone}{\exists X \, F(X)} = \{ \gr{\intfolone}{F(u)} \mid u \in \universe{\intfolone}{s} \}^{\vee}$ if $X$ is a variable of sort~$s$;
    
    \item[\ref{gr:condition:7}.] $\gr{\intfolone}{\forall X \, F(X)} = \{ \gr{\intfolone}{F(u)} \mid u \in \universe{\intfolone}{s} \}^{\wedge}$ if $X$ is a variable of sort~$s$.
\end{enumerate}

% %\paragraph{A remark on names}
% In the sequel, most of the references to the stated definition of grounding are in the context of \underline{standard} interpretations. 
% Given the conditions~\ref{c.interp.third}-\ref{c.interp.fifth} of the definition of standard interpretation, we can simplify the definition of grounding by restating the conditions~\ref{gr:condition:2},~\ref{gr:condition:3},~\ref{gr:condition:6}, and~\ref{gr:condition:7}, as follows   
% \begin{enumerate}

% \end{enumerate}

% To see why this is the case, observe that standard interpretations give a standard treatment to arithmetic operators
% (thus, a term $\numeral{1}+\numeral{2}$ is interpreted as $\numeral{3}$ under a standard interpretation $\intfolone$).
% Furthermore, standard interpretations treat symbolic constants and numerals in a Herbrand way
% (thus, $\grdstd(p(\numeral{1}+\numeral{2})) = p(\numeral{3}^*) = p(\numeral{3})$).

\subsection{Preliminaries: Strong Equivalence in the Infinitary Setting}

The Truszczy\'{n}ski stable models of a set of infinitary logic formulas (Definition~\ref{def:tru.stable.models} in Section~\ref{sec:semantics:ipl}) can be characterized by an extension of equilibrium logic to the infinitary setting~\cite[Theorem 2]{harlifpeval17}.
This allows strong equivalence of infinitary (propositional) formulas to be defined as follows:
About sets $\setprop_1$, $\setprop_2$ of infinitary formulas we say that they are {\em strongly equivalent} (denoted as $\setprop_1 \equiv_s \setprop_2$) to each other if, for every set~$\setprop$ of infinitary formulas, the sets $\setprop_1 \cup \setprop$ and $\setprop_2 \cup \setprop$ have the same 
Truszczy\'{n}ski stable models~\cite{harlifpeval17}.
About infinitary formulas $\formprop$ and $\formprop'$ we say that  they are strongly equivalent if the singleton sets $\{\formprop\}$ and $\{\formprop'\}$ are strongly equivalent.
Theorem 3 from that work shows that two sets  of infinitary formulas are strongly equivalent  if and only if they are equivalent in the infinitary logic of here-and-there.
%Thus, we could restate the definition of strong equivalence as follows: 
%infinitary formulas (or, sets of infinitary formulas) $H$ and $H'$  are strongly equivalent when they have the same HT-models. 
Sometimes, we will abuse the term {\em strong equivalence} or notation $\equiv_s$ by stating that a set of infinitary formulas is strongly equivalent to an infinitary formula, understanding that in such a case these two entities share the same {\htmodel}s.
%In addition, Corollary~1 by~\citeauthor{harlifpeval17} specifies that  substituting strongly equivalent subformulas within a formula does not change its answer sets. 
%Fact~\ref{fact:1} reviews some basic properties (derived from the mentioned theorem) of infinitary equilibrium logic and strong equivalence.
%These properties are useful in proving the propositions contained in the remainder of the paper.

%In the following, we often write $i, j \in \universe{\intfolone}{\sortint}$ as an abbreviation for $i, j \in \universe{\intfolone}{\sortint}$.  

% Using that theorem it is easy to see that infinitary formula of the form $H\wedge \top$ is strongly equivalent to $H$ as well as $H\wedge \bot$ is strongly equivalent to $\bot$. Similarly, set 
% of
% infinitary formulas of the form $\setprop\cup \{\top\}$ is strongly equivalent to $\setprop$ as well as 
% $\setprop\cup\{ \bot\}$ is strongly equivalent to $\{\bot\}$.

\subsection{Connecting Semantics}

In this section, we present the main results of this paper, which relate our proposed semantics  -- the answer sets of Definition~\ref{def:answer.sets} -- to the established semantics -- the gringo answer sets of Definition~\ref{def:gringo.answer.sets}.
Propositions~\ref{prop:gr:val}-\ref{prop:rule} are counterparts of Propositions 1-3 from earlier work by Lifschitz et al.~\cite{lif19} within the context of a different language of logic programs.
% %
In particular,  conditional literals are part of the language considered here. 
% %
We also allow the absolute value function symbol, and our definition of the values of terms of the form $t_1/t_2$ and $t_1 \setminus t_2$ differ. It is due to note that the most recent dialect of \mg~\cite{fanliftem24} uses a different definition of integer division than the one presented in earlier publications.
%to adhere to AG's definition, which truncates non-integer quotients toward zero.}.
% %
% Detailed proofs of each proposition can be found in the Appendix.
%

Let us start by reviewing some notation. 
For a program $\Pi$, we call a partition $\boldp,\boldq$ of predicate symbols from the signature $\sigma_\Pi$ (defined in~Section~\ref{sec:tau:target}) the \emph{standard partition} if $\boldp$ contains all predicate symbols from $\Pi$ and $\boldq$ contains every comparison symbol. It is easy to see that the standard partition is unique to program $\Pi$, and thus can be identified with its first element $\boldp$ (all predicate symbols but comparisons occurring in~$\Pi$).
For a formula $F$, variable $Z$, and term $t$, by $F_t^Z$ we denote the result of substituting term $t$ for variable~$Z$. For instance, 
$val_a(Z)_b^Z$ results in formula $b=a$, where $b$ and $a$ are symbolic constants.
In the case when substitution is applied to the formula of the form $val_t(Z)$, we often write $val_t(r)$ to denote $val_t(Z)_r^Z$.
Intuitively, the formula $val_t(r)$ expresses that~$r$ is one of the values of $t$.
This claim is formalized in Proposition~\ref{prop:gr:val} below. 
Note that in the statements of the propositions below, some program $\Pi$ is assumed implicitly with its corresponding signature $\sigma_\Pi$.

\begin{proposition}
    Let $\intfolone$ be a standard interpretation, and let $\boldp$ be the standard partition.
    Then, for any program variable $Z$, ground term $t$, and precomputed term $r$, the formula $\grdstd(val_t(Z)_r^Z)$ is strongly equivalent to $\top$ if $r \in [t]$ and to $\bot$ otherwise.
    \label{prop:gr:val}
\end{proposition}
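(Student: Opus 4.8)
The plan is to argue by structural induction on the term $t$, after first reducing the claim to a purely combinatorial statement about truth values. The key preliminary observation is that $val_t(Z)$ contains no occurrence of any intensional predicate from $\boldp$: inspecting the definition in Section~\ref{sec:prelim:vt}, it is built only from equalities, the comparison atoms $\leq$ and $<$, applications of the arithmetic function constants, together with $\wedge$, $\vee$, and existential quantifiers over integer variables (no implication or negation ever appears). Consequently, grounding $val_t(Z)_r^Z$ with respect to a standard interpretation $\intfolone$ --- using condition~\ref{gr:condition:4} for equalities and condition~\ref{gr:condition:3} for the extensional comparison predicates --- replaces every atom by $\top$ or $\bot$, so $\grdstd(val_t(Z)_r^Z)$ is an atom-free infinitary formula assembled from $\top$ and $\bot$ using (possibly infinitary) conjunction and disjunction. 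A routine induction on rank shows that any such formula is strongly equivalent to $\top$ or to $\bot$, since its \htinterp-satisfaction invokes neither an atom clause nor an implication clause and is therefore a constant independent of the interpretation. Thus it suffices to prove that $\grdstd(val_t(Z)_r^Z) \equiv_s \top$ exactly when $r \in [t]$.

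For the induction I would use the simplified grounding conditions available for standard interpretations: existentials over $\sortint$ become infinite disjunctions ranging over numerals (condition~\ref{gr:condition:6}), and $\intfolone$ evaluates arithmetic terms and comparisons on their actual integer values (conditions~\ref{c.interp.third}--\ref{c.interp.fifth}). In the base case $t$ is a numeral, symbolic constant, $\mathit{inf}$, or $\mathit{sup}$, so $val_t(Z)_r^Z$ is $r = t$, which grounds to $\top$ iff $r = t$, i.e.\ iff $r \in \{t\} = [t]$. For $t = \lvert t_1\rvert$ and for $t = t_1 \circ t_2$ with $\circ \in \{+,-,\times\}$, grounding the existentials over integers yields a disjunction whose disjunct for a numeral $\numeral{k}$ (respectively a pair $\numeral{k_1},\numeral{k_2}$) is strongly equivalent to $\top$ iff the inner $val$ formula grounds to $\top$ --- which by the induction hypothesis means $\numeral{k} \in [t_1]$ (resp.\ $\numeral{k_i} \in [t_i]$) --- and the top-level equality $r = \lvert \numeral{k}\rvert$ (resp.\ $r = \numeral{k_1} \circ \numeral{k_2}$) holds. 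Matching these surviving disjuncts against the clauses defining $[\,\lvert t_1\rvert\,]$ and $[t_1 \circ t_2]$ gives the claim; the range case $t_1..t_2$ is analogous, using that after $Z \mapsto r$ the equality $r = K$ isolates $K = r$ and the comparison $I \leq K \leq J$ grounds to $\top$ exactly for the integers lying in the interval.

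The main obstacle is the division and modulo cases, where $val_t$ encodes the $\mathit{round}$ function via the auxiliary formulas $F_1$, $F_2$, and $F_3$, and I must verify that this first-order encoding agrees with the arithmetic definition of $[t_1/t_2]$ and $[t_1 \setminus t_2]$; here the work is careful integer reasoning rather than logic. First, $F_1$, which asserts $K \times \lvert J\rvert \leq \lvert I\rvert < (K+\overline{1}) \times \lvert J\rvert$, is unsatisfiable precisely when $j = 0$ (it reduces to $\overline{0} \leq \lvert i\rvert < \overline{0}$), so the disjuncts with $J = \overline{0}$ vanish --- which is exactly the side condition $n_2 \neq 0$ in the definition of $[t]$ --- while for $j \neq 0$ it forces the unique value $k = \lfloor \lvert n_1\rvert / \lvert n_2\rvert \rfloor$. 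Then the sign split in $F_2$ (respectively $F_3$) selects $Z = k$ or $Z = -k$ (resp.\ $Z = n_1 - k\,n_2$ or $Z = n_1 + k\,n_2$) according to the sign of $n_1 n_2$, and one checks directly that this equals $\mathit{round}(n_1/n_2)$ (resp.\ $n_1 - n_2\cdot \mathit{round}(n_1/n_2)$) for both signs, using $\mathit{round} = \lfloor\cdot\rfloor$ on nonnegative quotients and $\mathit{round} = \lceil\cdot\rceil$ on negative ones. Collecting the surviving disjuncts reproduces exactly the set $[t_1/t_2]$ (resp.\ $[t_1 \setminus t_2]$), which closes the induction.
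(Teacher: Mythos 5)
Your proof is correct, and its core coincides with the paper's: structural induction over the syntactic forms of $t$, with grounded integer existentials unfolding into numeral-indexed infinite disjunctions, and with the division and modulo cases resting on exactly the integer facts the paper isolates in Lemmas~\ref{lem:std:abs.val}--\ref{lem:gr:val:f3} (that $F_1(ijk)$ fails for every $k$ when $j=0$, that for $j\neq 0$ it holds only for $k=\floor{\abs{i}/\abs{j}}$, and that the sign split in $F_2$/$F_3$ then reproduces $round(i/j)$ and $i - j\cdot round(i/j)$). What you do genuinely differently is the opening reduction: you observe that $val_t(Z)_r^Z$ contains no intensional atoms, no negation, and no implication, so its grounding is an atom-free infinitary $\wedge$/$\vee$ combination of $\top$ and $\bot$ whose ht-satisfaction is the same for every \htinterp; by Fact~\ref{fact:1} (points~\ref{l:1:5} and~\ref{l:1:6}) such a formula is strongly equivalent to $\top$ or to $\bot$, and the proposition collapses to deciding a single truth value. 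The paper never makes this observation explicit; instead each case tracks strong equivalences step by step via Fact~\ref{fact:1}, Fact~\ref{fact:2}, and the replacement property (Fact~\ref{fact:5:harrison}), and argues the subcases $r\in[t]$ and $r\notin[t]$ separately against arbitrary {\htinterp}s. Your reduction buys a shorter, essentially bookkeeping-free induction; the paper's heavier idiom has the compensating advantage of staying inside the strong-equivalence calculus that the rest of the appendix (Corollary~\ref{prop:gr:val:corr}, Propositions~\ref{prop:cl}--\ref{prop:program}) must use anyway, since there the formulas do contain intensional atoms and your two-valuedness argument is unavailable. One small wording caution: your claim that $F_1$ ``is unsatisfiable precisely when $j=0$'' is accurate only as a statement about the disjunction over all $k$; for fixed $j\neq 0$ and a wrong $k$ the ground instance is of course also false, and your subsequent clause (``forces the unique value $k$'') shows this is what you intend.
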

Recall from Section~\ref{sec:syntax} that in our proposed language, a basic literal or comparison is a special type of conditional literal with an empty list of conditions.
Thus, within our extension of the \mg\ language, conditional literals form the main syntactic element.
Proposition~\ref{prop:gr:val} helps us establish the following result:
\begin{proposition}
    \label{prop:cl}
    Let $\clhead : \boldL$ be a conditional literal, 
    let $\boldZ$ be a superset of the global variables of $\clhead : \boldL$, 
    and let $\boldz$ be a tuple of precomputed terms of the same length as $\boldZ$.
    Then, for a standard interpretation $\intfolone$ and the standard partition $\boldp$, 
    $$
    \grdstd\left( (\tau^B_{\boldZ}\left(\clhead : \boldL\right))^\boldZ_\boldz \right) \equiv_s \tau \left( (\clhead : \boldL)^\boldZ_\boldz \right).
    $$
\end{proposition}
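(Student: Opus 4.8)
The plan is to peel the definitions of $\tau^B_\boldZ$ and $\tau$ apart layer by layer on each side of the claimed strong equivalence, reducing everything to an ``atomic'' statement about single basic literals and comparisons, where Proposition~\ref{prop:gr:val} does the real work. Recall that $\tau^B_\boldZ(\clhead : \boldL)$ is $\forall \boldX(\tau^B_\boldZ(\boldL) \rar \tau^B_\boldZ(\clhead))$, where $\boldX$ collects the variables of $\clhead : \boldL$ outside $\boldZ$. First I would observe that, since $\boldX$ is disjoint from $\boldZ$, the substitution $(\cdot)^\boldZ_\boldz$ commutes with $\forall \boldX$ and with $\rar$, so the left-hand side becomes $\grdstd(\forall \boldX((\tau^B_\boldZ(\boldL))^\boldZ_\boldz \rar (\tau^B_\boldZ(\clhead))^\boldZ_\boldz))$. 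Grounding a universal over a standard interpretation (simplified condition~\ref{gr:condition:7}) turns $\forall \boldX$ into the conjunction ranging over all tuples $\boldr$ of precomputed terms of length $|\boldX|$, and grounding distributes over $\rar$ (condition~\ref{gr:condition:5}). On the right-hand side, after the substitution the conditional literal $(\clhead : \boldL)^\boldZ_\boldz$ is closed and its remaining variables are exactly $\boldX$; hence $\tau((\clhead:\boldL)^\boldZ_\boldz)$ is the conjunction over the same index set of tuples $\boldr$ of the formulas $\tau((\boldL^\boldZ_\boldz)^\boldX_\boldr) \rar \tau((\clhead^\boldZ_\boldz)^\boldX_\boldr)$. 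Thus both sides are conjunctions over the identical index set, and it suffices to match the implications termwise.

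The core of the argument is the base case: for a ground basic literal or comparison $\ell$ (arising as $(l^\boldZ_\boldz)^\boldX_\boldr$ for a literal or comparison $l$ of $\clhead : \boldL$), I would show $\grdstd(\tau^B_\boldZ(\ell)) \equiv_s \tau(\ell)$. Take $\ell = p(\boldt)$; then $\tau^B_\boldZ(\ell)$ is $\exists \boldV(val_\boldt(\boldV) \wedge p(\boldV))$, and grounding the existential (condition~\ref{gr:condition:6}) yields the infinitary disjunction $\{\grdstd(val_\boldt(\boldw)) \wedge \grdstd(p(\boldw)) \mid \boldw\}^\vee$ over all tuples $\boldw$ of precomputed terms. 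Proposition~\ref{prop:gr:val}, extended to tuples through $val_\boldt(\boldV) = \bigwedge_i val_{t_i}(V_i)$ and the tuple definition of $[\boldt]$, makes $\grdstd(val_\boldt(\boldw))$ strongly equivalent to $\top$ when $\boldw \in [\boldt]$ and to $\bot$ otherwise; and since $p \in \boldp$, condition~\ref{gr:condition:2} gives $\grdstd(p(\boldw)) = p(\boldw)$. Replacing each disjunct by a strongly equivalent one and discarding the $\bot$-disjuncts (valid because $\equiv_s$ is a congruence in infinitary here-and-there logic and $\bot$ is neutral for disjunction) collapses this to $\{p(\boldw) \mid \boldw \in [\boldt]\}^\vee$, which is exactly $\tau(p(\boldt))$. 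The cases $\ell = not~p(\boldt)$ and $\ell = not~not~p(\boldt)$ are identical after replacing $p(\boldV)$ by $\neg p(\boldV)$ and $\neg\neg p(\boldV)$, using $\grdstd(\neg F) = \neg \grdstd(F)$. For $\ell = t_1 \prec t_2$, grounding the two existentials of $\tau^B_\boldZ(\ell)$ and applying Proposition~\ref{prop:gr:val} leaves a disjunction of $\grdstd(w_1 \prec w_2)$ over $w_1 \in [t_1]$, $w_2 \in [t_2]$; since $\prec$ is extensional (condition~\ref{gr:condition:3}, or~\ref{gr:condition:4} for $=$) and a standard interpretation evaluates each such comparison precisely according to whether $\prec$ holds of $(w_1,w_2)$, this is strongly equivalent to $\top$ iff $\prec$ holds for some admissible pair and to $\bot$ otherwise, matching $\tau(t_1 \prec t_2)$. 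The head case $\clhead = \bot$ is immediate, as $\tau^B_\boldZ(\bot)$, $\grdstd(\bot)$, and $\tau(\bot)$ are all $\bot$.

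Finally I would assemble the pieces. For the antecedent, $\tau^B_\boldZ(\boldL) = \bigwedge_i \tau^B_\boldZ(l_i)$ with grounding distributing over conjunction, while $\tau((\boldL^\boldZ_\boldz)^\boldX_\boldr) = \bigwedge_i \tau((l_i^\boldZ_\boldz)^\boldX_\boldr)$, so the antecedent equivalence follows from the base case applied to each $l_i$; the consequent equivalence is the base case applied to $\clhead$. Congruence of $\equiv_s$ then lifts these to the strong equivalence of $\grdstd(((\tau^B_\boldZ(\boldL))^\boldZ_\boldz)^\boldX_\boldr) \rar \grdstd(((\tau^B_\boldZ(\clhead))^\boldZ_\boldz)^\boldX_\boldr)$ and $\tau((\boldL^\boldZ_\boldz)^\boldX_\boldr) \rar \tau((\clhead^\boldZ_\boldz)^\boldX_\boldr)$ for each $\boldr$, and thence to the strong equivalence of the two index-matched conjunctions, giving the claim. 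I expect the main obstacle to be precisely the base case's collapse of a disjunction indexed by all precomputed tuples down to one indexed by $[\boldt]$: this is where Proposition~\ref{prop:gr:val} enters, where its tuple-extension must be spelled out, and where the infinitary simplifications must be justified via congruence of strong equivalence (equivalently, equivalence in the infinitary logic of here-and-there). A secondary point requiring care is checking that the quantifier-grounding index set on the left and the $\tau$-conjunction index set on the right genuinely coincide, which rests on $\boldX$ being the local variables of $(\clhead:\boldL)^\boldZ_\boldz$ and on the standard interpretation's program-sort universe $|\intfolone|^{\sortsuper}$ being exactly the precomputed terms.
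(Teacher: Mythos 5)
Your proposal is correct and follows essentially the same route as the paper's proof: ground the universal quantifier $\forall\boldX$ into an infinitary conjunction indexed by tuples of precomputed terms, observe that this index set coincides with the one in the $\tau$-translation of the closed conditional literal, reduce to a termwise base case for ground basic literals and comparisons (which the paper isolates as Lemma~\ref{lem:lit}, supported by Corollary~\ref{prop:gr:val:corr}, Lemma~\ref{lem:grd:val}, and Lemma~\ref{lem:gr:atomic:comparison.aux}), and assemble via the replacement property of strong equivalence (Fact~\ref{fact:5:harrison}). The only difference is organizational: what you prove inline (the tuple extension of Proposition~\ref{prop:gr:val}, the collapse of the disjunction over all precomputed tuples to one over $[\boldt]$, and the comparison case) the paper factors into standalone lemmas.
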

Using the preceding proposition, we can conclude the following result:
\begin{proposition}
    \label{prop:rule}
    For a rule $R$, standard interpretation $\intfolone$, and the standard partition $\boldp$, 
    $$\grdstd(\tau^*(R)) \equiv_s \tau R.$$
\end{proposition}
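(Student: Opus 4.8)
The plan is to prove the statement by case analysis on the form of the rule $R$ (basic rule, choice rule, constraint), in each case grounding the first-order sentence $\tau^*(R)$ clause-by-clause and then rewriting subformulas using Propositions~\ref{prop:gr:val} and~\ref{prop:cl} together with a handful of standard strong-equivalences. Throughout, I would rely on the fact that strong equivalence of infinitary formulas coincides with equivalence in the infinitary logic of here-and-there (cited via \cite{harlifpeval17}) and is therefore a congruence: strongly equivalent subformulas may be substituted inside $\{\cdot\}^\wedge$, $\{\cdot\}^\vee$, and implications without affecting strong equivalence of the whole. This licenses all the stepwise rewrites below.

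For a basic rule $R$ of the form $p(\boldt) \ruleo B_1,\dots,B_n$, the sentence $\tau^*(R)$ is $\widetilde{\forall}(val_{\boldt}(\boldV)\wedge\tau^B_\boldZ(B_1)\wedge\dots\wedge\tau^B_\boldZ(B_n)\to p(\boldV))$, whose free variables are exactly $\boldZ$ and $\boldV$, all of sort $\sortsuper$. Applying the (simplified) grounding clause~\ref{gr:condition:7} for each such variable turns $\grdstd(\tau^*(R))$ into the infinitary conjunction, over all precomputed $\boldz$ and $\boldv$, of $\grdstd\big((val_{\boldt}(\boldV)\wedge\bigwedge_i\tau^B_\boldZ(B_i)\to p(\boldV))^{\boldZ,\boldV}_{\boldz,\boldv}\big)$. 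Clause~\ref{gr:condition:5} distributes the grounding over $\wedge$ and $\to$, and clause~\ref{gr:condition:2} yields $p(\boldv)$ for the head. Since $\boldV$ does not occur in the body, I would first fix $\boldz$ and collapse the quantification over $\boldv$: by Proposition~\ref{prop:gr:val}, $\grdstd(val_{\boldt^\boldZ_\boldz}(\boldv))$ is strongly equivalent to $\top$ when $\boldv\in[\boldt^\boldZ_\boldz]$ and to $\bot$ otherwise, so using $\bot\to X\equiv_s\top$ the conjuncts with $\boldv\notin[\boldt^\boldZ_\boldz]$ vanish, while $\top\wedge X\equiv_s X$ simplifies the rest. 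The standard equivalence $\bigwedge_{\boldv}(A\to p(\boldv))\equiv_s A\to\bigwedge_{\boldv}p(\boldv)$ then gives $\grdstd(\bigwedge_i(\tau^B_\boldZ(B_i))^\boldZ_\boldz)\to\bigwedge_{\boldv\in[\boldt^\boldZ_\boldz]}p(\boldv)$. Applying Proposition~\ref{prop:cl} to each body literal $B_i$ rewrites the antecedent as $\tau((B_1\wedge\dots\wedge B_n)^\boldZ_\boldz)$, so the $\boldz$-indexed conjunct is exactly $\tau\rho$ for the instance $\rho$ obtained by substituting $\boldz$ for the global variables. As the substitutions $\boldz$ correspond bijectively to the instances of $R$, conjoining over $\boldz$ reproduces $\tau R$.

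The choice-rule case proceeds identically except for the extra antecedent $\neg\neg p(\boldV)$, which grounds to $\neg\neg p(\boldv)$. Here I would use exportation $A\wedge\neg\neg p(\boldv)\to p(\boldv)\equiv_s A\to(\neg\neg p(\boldv)\to p(\boldv))$ followed by the key here-and-there identity $\neg\neg p(\boldv)\to p(\boldv)\equiv_s p(\boldv)\vee\neg p(\boldv)$; the remaining simplifications are as before and yield $\tau(B')\to\bigwedge_{\boldv\in[\boldt^\boldZ_\boldz]}(p(\boldv)\vee\neg p(\boldv))$, matching~\eqref{eq:tau.choice.rule}. The constraint case is the simplest: $\tau^*(R)$ has no head variables, so grounding over $\boldZ$ and using $\grdstd(\neg F)=\neg\grdstd(F)$ together with Proposition~\ref{prop:cl} directly gives the conjunction of $\neg\tau((B_1\wedge\dots\wedge B_n)^\boldZ_\boldz)$ over all $\boldz$, which is $\tau R$.

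The main obstacle is the head treatment in the basic and choice cases, namely showing that the universal quantification over the fresh variables $\boldV$ collapses cleanly to the finite conjunction $\bigwedge_{\boldv\in[\boldt^\boldZ_\boldz]}$ dictated by $\tau$. This requires Proposition~\ref{prop:gr:val} applied componentwise to $val_{\boldt}(\boldV)$ together with the tuple definition of $[\boldt]$ (so that $\boldv\in[\boldt^\boldZ_\boldz]$ iff $v_i\in[t_i{}^\boldZ_\boldz]$ for each $i$), and care that every simplification step is carried out under strong rather than mere classical equivalence so that it survives inside the infinitary connectives. Once this collapse is justified, the remaining bookkeeping — matching grounding substitutions to rule instances and invoking Proposition~\ref{prop:cl} for the bodies — is routine.
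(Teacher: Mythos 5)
Your proposal is correct and follows essentially the same route as the paper's proof: case analysis on the rule form, distributing the grounding over connectives, collapsing the head quantification over $\boldV$ via Proposition~\ref{prop:gr:val} (in its tuple form), invoking Proposition~\ref{prop:cl} for body literals, and matching the grounding substitutions to rule instances, with the choice case handled by the here-and-there identity $\neg\neg p(\boldv)\to p(\boldv)\equiv_s p(\boldv)\vee\neg p(\boldv)$. The only difference is organizational: what you treat as inline "standard equivalences" and "routine bookkeeping" the paper factors into explicit lemmas (the head-collapse equivalence, the choice-rule identity, and the substitution-commutation fact $val_\boldt(\boldv)^{\boldZ}_{\boldz}=val_{\boldt^{\boldZ}_{\boldz}}(\boldv)$, each proved in the infinitary HT setting), and you correctly identify the head collapse as the step requiring genuine care.
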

We can extend Proposition~\ref{prop:rule} to programs:
\begin{proposition}
    \label{prop:program}
    For a program $\Pi$, standard interpretation $\intfolone$, and the standard partition $\boldp$,
    $$\grdstd(\tau^*(\Pi)) \equiv_s \tau \Pi.$$
\end{proposition}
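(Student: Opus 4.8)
The plan is to reduce the program-level statement to the rule-level statement of Proposition~\ref{prop:rule}, using the compositional behaviour of the grounding operator together with the fact that strong equivalence is a congruence for conjunction. The only genuine content lives in Proposition~\ref{prop:rule}; the passage from a single rule to a finite set of rules is pure bookkeeping once the right structural facts are in place.

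First I would unfold the definitions on both sides. By the definition of $\tau^*$, the theory $\tau^*(\Pi)$ is the set $\{\tau^* R \mid R \in \Pi\}$, and by the definition of grounding a first-order theory, $\grdstd(\tau^*(\Pi)) = \{\grdstd(\tau^* R) \mid R \in \Pi\}^\wedge$. On the other side, $\tau\Pi$ is by definition the set $\{\tau R \mid R \in \Pi\}$. Since a set of infinitary formulas and the $\wedge$-conjunction of its members have exactly the same ht-models, it suffices---invoking the paper's stated convention that $\equiv_s$ between a set and a single formula means the two share the same ht-models---to prove that $\{\grdstd(\tau^* R) \mid R \in \Pi\}^\wedge$ and $\{\tau R \mid R \in \Pi\}^\wedge$ are strongly equivalent.

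Next I would apply Proposition~\ref{prop:rule} once per rule: for every $R \in \Pi$ we have $\grdstd(\tau^* R) \equiv_s \tau R$. By the characterization of strong equivalence as equivalence in the infinitary logic of here-and-there (Theorem~3 of~\cite{harlifpeval17}), each such pair has the same ht-models. I would then lift these rule-wise equivalences to the two conjunctions using the ht-satisfaction clause for $\setprop^\wedge$: an ht-interpretation satisfies a conjunction exactly when it satisfies every conjunct. Hence an ht-interpretation $\hti$ satisfies $\{\grdstd(\tau^* R) \mid R \in \Pi\}^\wedge$ iff it satisfies $\grdstd(\tau^* R)$ for every $R$, iff (by the rule-wise ht-equivalence) it satisfies $\tau R$ for every $R$, iff it satisfies $\{\tau R \mid R \in \Pi\}^\wedge$. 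Thus the two conjunctions are ht-equivalent, hence strongly equivalent, and chaining this with the set-versus-conjunction observation from the first step yields $\grdstd(\tau^*(\Pi)) \equiv_s \tau\Pi$.

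I do not expect a substantial obstacle here, since the difficult analysis of values of terms, conditional literals, and rules has already been discharged in Propositions~\ref{prop:gr:val}--\ref{prop:rule}. The two points that merit explicit care are purely structural: first, the notational bridge between the \emph{single} infinitary formula $\grdstd(\tau^*(\Pi))$ obtained by grounding a theory and the \emph{set} $\tau\Pi$, which is resolved by the paper's same-ht-models convention; and second, making explicit that strong equivalence is a congruence for (possibly infinitary) conjunction, which follows immediately from the ht-satisfaction clause for $\setprop^\wedge$ and does not require appealing to the full universal-context definition of $\equiv_s$.
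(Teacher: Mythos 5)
Your proof is correct and takes essentially the same approach as the paper's: unfold $\grdstd(\tau^*(\Pi))$ into $\{\grdstd(\tau^* R) \mid R \in \Pi\}^\wedge$, bridge the set-versus-conjunction gap for $\tau\Pi$ via the same-ht-models convention, and apply Proposition~\ref{prop:rule} rule by rule. The only (inessential) difference is that you justify lifting the rule-wise equivalences to the conjunctions directly from the ht-satisfaction clause for $\setprop^\wedge$ together with the ht-characterization of strong equivalence, whereas the paper invokes Fact~\ref{fact:5:harrison} for the same step; both justifications are sound.
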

The last proposition is an  important result supporting one of the key theorems of this work, namely, Theorem~\ref{thrm:connecting.tsm}. 
For a (many-sorted) first-order interpretation $\intfolone$ and a set of predicates $\boldp$, we use $\intfolone^\boldp$ to denote $A(\intfolone)$ restricted to atoms whose predicate symbols occur in $\boldp$.
\begin{theorem}
    \label{thrm:connecting.tsm}
    Let $\Pi$ be a program, and let $\boldp$ be the standard partition.
    Then, a set $\I^{\boldp}$ of precomputed atoms is a Truszczy\'{n}ski stable model of $gr_{\I}^{\boldp}(\tau^*\Pi)$
    iff $\I^{\boldp}$ is a Truszczy\'{n}ski stable model of $\tau \Pi$.
\end{theorem}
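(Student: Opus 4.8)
The plan is to obtain the theorem as a direct corollary of Proposition~\ref{prop:program}, which already carries the entire analytic burden: it asserts that grounding the first-order translation with respect to a standard interpretation produces an infinitary theory sharing all {\htmodel}s with $\tau\Pi$. What remains is only the purely semantic step of transporting that ht-level agreement down to the level of Truszczy\'{n}ski stable models. So I would not try to re-derive anything about $val$, conditional literals, or arithmetic here; I would treat Proposition~\ref{prop:program} as a black box.

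First I would fix a standard interpretation $\I$ and invoke Proposition~\ref{prop:program} to obtain $\grdstd(\tau^*\Pi) \equiv_s \tau\Pi$, noting that $\grdstd$ is exactly the operator $gr_{\I}^{\boldp}$ appearing in the statement. Reading $\equiv_s$ in the abused sense introduced just before this section, this says that the single conjunction $\grdstd(\tau^*\Pi)$ and the set $\tau\Pi$ have the same {\htmodel}s; the conjunction direction is routine since $\hti \models \setprop^\wedge$ iff $\hti$ ht-satisfies every member of $\setprop$. I would then check that both theories live over the same propositional signature, namely the precomputed atoms $p(\boldr)$ with $p \in \boldp$: the grounding collapses every comparison atom (the extensional predicates in $\boldq$) and every equality to $\top$ or $\bot$, which is precisely what $\tau$ does to comparisons, so no atom outside $\boldp$ survives on either side. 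This is what makes the phrase ``$\I^{\boldp}$ is a Truszczy\'{n}ski stable model'' meaningful for both theories simultaneously.

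Next I would push {\htmodel}-equivalence up to stable models. Since an equilibrium model is defined purely through the ht-satisfaction relation (Definition~\ref{def:tru.stable.models}), two theories with identical {\htmodel}s have identical equilibrium models and hence identical Truszczy\'{n}ski stable models; this is just the $\setprop=\emptyset$ instance of the definition of strong equivalence. Applying this to $\grdstd(\tau^*\Pi)$ and $\tau\Pi$ yields that a set of precomputed atoms is a Truszczy\'{n}ski stable model of $gr_{\I}^{\boldp}(\tau^*\Pi)$ iff it is a Truszczy\'{n}ski stable model of $\tau\Pi$, and instantiating this with the particular set $\I^{\boldp}$ gives the theorem.

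Because the real work is delegated to Proposition~\ref{prop:program}, I do not expect a genuine obstacle. The one point demanding care -- and where a careless one-line argument could slip -- is confirming that the two infinitary theories really are over the identical propositional signature, so that a single propositional interpretation $\I^{\boldp}$ can be tested against both; this in turn rests on $\I$ being standard, which fixes the truth values of all comparison and equality atoms and thereby makes $gr_{\I}^{\boldp}(\tau^*\Pi)$ independent of the particular choice of standard $\I$, exactly matching the $\I$-free theory $\tau\Pi$.
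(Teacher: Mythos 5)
Your proposal is correct and takes essentially the same route as the paper's own proof: invoke Proposition~\ref{prop:program} as a black box to get $gr_{\I}^{\boldp}(\tau^*\Pi) \equiv_s \tau\Pi$, conclude that the two theories share the same {\htmodel}s, hence the same equilibrium models, hence the same Truszczy\'{n}ski stable models. The only cosmetic differences are that you justify the final step directly from Definition~\ref{def:tru.stable.models} where the paper cites Truszczy\'{n}ski's Theorem~3, and you add careful (but inessential) checks that both theories live over the signature of precomputed atoms over $\boldp$.
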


\begin{proof}
    An immediate consequence of Proposition~\ref{prop:program} is that,
    for a program $\Pi$ and interpretation $\I$, $gr_{\I}^{\boldp}(\tau^*(\Pi)) \equiv_s \tau \Pi$.
    Thus,  $gr_{\I}^{\boldp}(\tau^*(\Pi))$  and $\tau \Pi$ have the same {\htmodel}s.
    This implies that they have the same infinitary equilibrium models.
    Consequently, they have the same Truszczy\'{n}ski stable models due to Theorem~3 by Truszczy\'{n}ski~\cite{truszczynski12a}.
    %(2012) reviewed in Section~\ref{sec:ipl:tsm}.
\end{proof}
Now that we have uncovered the connection between the infinitary formulas produced by $gr_{\I}^{\boldp}(\tau^*\Pi)$ and those produced by $\tau\Pi$, we use the result by Fandinno et al. (2020) reformulated below to connect the Truszczy\'{n}ski stable models of these formulas to the $\boldp$-stable models of $\tau^*\Pi$. %We start by reviewing the following definition.
%\begin{definition}[Definition~1 by \citeauthor{falilusc20a}]
%    \label{def:inf.pstable.models}
%    Let $\Gamma$ be a theory and $\boldp$ be a list of predicate symbols.
%    Then, an interpretation $\intfolone$ is called an \infpstable model of $\Gamma$ if $\intfolone^\boldp$ is a Truszczy\'{n}ski stable model of $\gr{\intfolone}{\Gamma}$.
%\end{definition}

%For a set of intensional predicate symbols $\boldp$, \citeauthor{falilusc20a} established the correspondence between the $\boldp$-stable models of first-order theories from our two-sorted target language, and the Truszczy\'{n}ski stable models of the corresponding infinitary formulas obtained via grounding~\cite{falilusc20a}:
%
\begin{proposition}{\cite[Proposition 2]{falilusc20a}}
    \label{prop:fan20.2}
    For any finite two-sorted (target language) theory $\Gamma$, an interpretation $\intfolone$, and list of predicate symbols $\boldp$, 
    $\intfolone$ is $\boldp$-stable model of $\Gamma$ 
    if and only if
    $\intfolone^\boldp$ is a Truszczy\'{n}ski stable model of $\gr{\intfolone}{\Gamma}$.    
%    \infpstable models coincide.
\end{proposition}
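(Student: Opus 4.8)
The plan is to unfold both sides of the biconditional to their definitional cores and reduce the statement to two structural lemmas relating the grounding operator $\grdstd$ to first-order (ht-)satisfaction. On the left, $\intfolone$ is a $\boldp$-stable model of $\Gamma$ exactly when $\intfolone \models \SM_\boldp[\Gamma]$, i.e.\ (by the definition of the $\SM$ operator) when $\intfolone \models \Gamma$ and there is no predicate tuple $\boldu$ with $\boldu < \boldp$ such that $\intfolone \models \Gamma^*(\boldu)$. On the right, $\intfolone^\boldp$ is a Truszczy\'nski stable model of $\grdstd(\Gamma)$ exactly when the total ht-interpretation $\langle \intfolone^\boldp, \intfolone^\boldp \rangle$ is an equilibrium model of $\grdstd(\Gamma)$, i.e.\ when $\intfolone^\boldp \models \grdstd(\Gamma)$ and no proper subset $S \subsetneq \intfolone^\boldp$ gives $\langle S, \intfolone^\boldp \rangle \htmodels \grdstd(\Gamma)$. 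So the proof reduces to matching the ``total'' conditions and the ``minimality'' conditions of the two sides.

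The first lemma (classical satisfaction) would establish, by induction on the structure of a sentence $F$, that $\intfolone \models F$ iff $\intfolone^\boldp \models \grdstd(F)$ as a propositional interpretation. The intensional-atom case keeps the atom, the extensional-atom and equality cases collapse to $\top$ or $\bot$ according to $\intfolone$ (conditions~\ref{gr:condition:3},~\ref{gr:condition:4}), and the quantifier cases turn into infinitary conjunctions/disjunctions over $\universe{\intfolone}{s}$ (conditions~\ref{gr:condition:6},~\ref{gr:condition:7}); the connective cases are immediate. This lemma matches the two ``total'' conditions $\intfolone \models \Gamma$ and $\intfolone^\boldp \models \grdstd(\Gamma)$.

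The second, and central, lemma relates the star transformation to ht-satisfaction. For a tuple $\boldu$ with $\boldu \leq \boldp$, let $\intfoltwo$ be the interpretation obtained from $\intfolone$ by reinterpreting each intensional predicate $p_i$ by $u_i$ (so $\intfoltwo^\boldp \subseteq \intfolone^\boldp$). I would prove, again by structural induction on $F$, that $\intfolone \models F^*(\boldu)$ iff $\langle \intfoltwo^\boldp, \intfolone^\boldp \rangle \htmodels \grdstd(F)$. The intensional atom $p_i(\boldt)^* = u_i(\boldt)$ matches the ``here'' world $\intfoltwo^\boldp$; extensional atoms and equalities are evaluated by $\intfolone$ in both worlds, matching their $\top/\bot$ groundings; and the quantifier cases distribute over the infinitary connectives via the induction hypothesis. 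The decisive case is implication: the star yields $(G^* \rar H^*) \wedge (G \rar H)$, whose second conjunct is evaluated purely classically under $\intfolone$ and, by the first lemma applied to the ``there'' world, corresponds to the clause $\intfolone^\boldp \models \grdstd(G) \rar \grdstd(H)$ in the ht-satisfaction of an implication, while the first conjunct corresponds by the induction hypothesis to the remaining clause. This exact alignment with the two-part ht-clause for $\rar$ is what makes the correspondence work, and I expect it to be the main technical obstacle.

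Finally, I would assemble the biconditional by noting that the map sending $\boldu$ (with $\boldu \leq \boldp$) to the named atom set $S = \intfoltwo^\boldp$ is a bijection between predicate tuples below $\boldp$ and subsets of $\intfolone^\boldp$, under which $\boldu < \boldp$ corresponds to $S \subsetneq \intfolone^\boldp$. For the forward direction, if $\intfolone \models \SM_\boldp[\Gamma]$ then $\intfolone \models \Gamma$ gives the total condition (first lemma), and the absence of a witness $\boldu < \boldp$ with $\intfolone \models \Gamma^*(\boldu)$ gives, through the second lemma and the bijection, the absence of a proper $S \subsetneq \intfolone^\boldp$ with $\langle S, \intfolone^\boldp \rangle \htmodels \grdstd(\Gamma)$; hence $\langle \intfolone^\boldp, \intfolone^\boldp \rangle$ is an equilibrium model. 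The reverse direction runs the same equivalences backward, turning a hypothetical ht-counterexample $S \subsetneq \intfolone^\boldp$ into a witness $\boldu < \boldp$ refuting stability. Since $\Gamma$ is a finite theory and $\grdstd(\Gamma)$ is the conjunction of the $\grdstd(F)$, satisfaction and ht-satisfaction distribute over its members, so the per-sentence lemmas lift to $\Gamma$, completing the argument.
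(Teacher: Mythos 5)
This proposition is not proved in the paper at all: it is imported verbatim from Fandinno et al.\ (cited as Proposition~2 of \cite{falilusc20a}), and the paper's appendix covers only Propositions~1--4. So the relevant comparison is with the standard proof in the cited literature (Theorem~5 of Truszczy\'{n}ski~2012, lifted to the many-sorted setting with extensional predicates), and your proposal is a faithful and correct reconstruction of exactly that argument: the classical-satisfaction lemma ($\intfolone \models F$ iff $\intfolone^\boldp \models \grdstd(F)$), the central lemma matching $F^*(\boldu)$ against ht-satisfaction of $\grdstd(F)$ by $\langle \intfoltwo^\boldp, \intfolone^\boldp \rangle$ with the two-part implication clause carrying the whole weight, and the bijection between tuples $\boldu \leq \boldp$ and subsets of $\intfolone^\boldp$ are precisely the ingredients of the known proof. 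Two steps you use implicitly deserve one line each in a full write-up: first, aligning the ``total'' conditions requires the standard fact that ht-satisfaction by a total interpretation $\langle S, S \rangle$ coincides with classical satisfaction by $S$ (your Lemma~1 alone speaks of classical satisfaction, while the equilibrium-model definition speaks of an ht-model); second, the structural induction in your central lemma must be carried out over the extended signature $\sigma^{\intfolone}$ containing the names $d^*$, since the quantifier cases of grounding substitute names into $F$, so the induction hypothesis must already cover formulas with names. Neither is a gap in substance, and the assembly via the bijection (with $\boldu < \boldp$ corresponding to $S \subsetneq \intfolone^\boldp$) is sound, so the proposal would stand as a self-contained proof of the result the paper merely cites.
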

We are ready to state the main result of this section, connecting our $\SM$-based semantics for programs with the established semantics reviewed in Section~\ref{sec:semantics:ipl}:
\begin{theorem}
    \label{thrm:main}
    Let $\Pi$ be a program and let $\boldp$ be the standard partition.
    %Then, $\I$ is a $\bold{p}$\nobreak-stable model of $\tau^*\Pi$ iff
    %the set $\mathcal{I}$ of precomputed atoms is a gringo answer set of $\Pi$.
    Then, $A(\I)$ is an answer set of $\Pi$ iff $A(\I)$ is a gringo answer set of $\Pi$.
\end{theorem}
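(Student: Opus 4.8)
The plan is to establish the biconditional by chaining together Proposition~\ref{prop:fan20.2} and Theorem~\ref{thrm:connecting.tsm} through the two definitions of answer sets, using the observation that for a standard interpretation the set $A(\I)$ coincides with $\I^{\boldp}$. Since $\boldp$ is the standard partition, its intensional component consists of exactly the (non-comparison) predicate symbols occurring in $\Pi$, which---because $\sigma_\Pi$ contains no other symbolic-constant predicates---are precisely the predicate symbols over which $A(\I)$ is formed. Thus $A(\I) = \I^{\boldp}$, and it suffices to reason about $\I^{\boldp}$ throughout.

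First I would unfold Definition~\ref{def:answer.sets}: $A(\I)$ is an answer set of $\Pi$ exactly when $\I$ is a $\boldp$-stable model of $\tau^*\Pi$. Next I would apply Proposition~\ref{prop:fan20.2} with $\Gamma = \tau^*\Pi$; since $\Pi$ is finite, $\tau^*\Pi$ is a finite two-sorted theory, so the hypothesis is met. This yields that $\I$ is a $\boldp$-stable model of $\tau^*\Pi$ if and only if $\I^{\boldp}$ is a Truszczy\'{n}ski stable model of $\grdstd(\tau^*\Pi)$. Then I would invoke Theorem~\ref{thrm:connecting.tsm}, which states that $\I^{\boldp}$ is a Truszczy\'{n}ski stable model of $\grdstd(\tau^*\Pi)$ if and only if $\I^{\boldp}$ is a Truszczy\'{n}ski stable model of $\tau\Pi$. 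Finally, substituting $A(\I) = \I^{\boldp}$ and unfolding Definition~\ref{def:gringo.answer.sets}, the latter is equivalent to $A(\I)$ being a gringo answer set of $\Pi$. Concatenating these equivalences gives the theorem.

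The argument itself is a short composition of results already in hand, so the substantive content lives in the supporting propositions rather than in this final step. I expect the main (if modest) obstacle to be the careful justification that $A(\I) = \I^{\boldp}$---in particular, confirming that $A(\I)$, defined as the precomputed atoms assigned \emph{true} by $\I$, ranges only over symbolic-constant predicates and therefore excludes the comparison atoms that $\boldp$ also omits, so that no spurious atoms appear on either side. I would also verify that a standard interpretation is a legitimate instance of the interpretation $\I$ appearing in Proposition~\ref{prop:fan20.2}, and that the grounding operator $\grdstd$ used there (with $\boldp$ intensional and the comparison symbols $\boldq$ extensional) is the same one referenced in Theorem~\ref{thrm:connecting.tsm}; both are immediate from the definitions but must be stated to make the chain airtight.
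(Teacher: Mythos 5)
Your proposal is correct and follows essentially the same route as the paper's own proof: the identical chain of equivalences through Definition~\ref{def:answer.sets}, Proposition~\ref{prop:fan20.2}, Theorem~\ref{thrm:connecting.tsm}, and Definition~\ref{def:gringo.answer.sets}, bridged by the identification $A(\I) = \I^{\boldp}$ for the standard partition. Your added care in justifying $A(\I) = \I^{\boldp}$ (noting that atoms are formed only from symbolic-constant predicates occurring in $\sigma_\Pi$, so comparisons are excluded on both sides) is a point the paper passes over with a parenthetical remark, but it is the same argument.
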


\begin{proof}
    %The proof mirrors the proof of the Main Lemma from~\cite{falilusc20a}.
    %{\color{blue} By the choice of $\boldp$, $\mathcal{I}$ must be $(\I)^{\boldp}$.}\\
    
    $A(\I)$ is an answer set of $\Pi$\\
    iff $\I$ is a \pstable model of $\tau^*\Pi$ \hfill (Definition~\ref{def:answer.sets}; 
    $\boldp$ is the standard partition)\\
    %iff $\I$ is a \pstable model of $\tau^*\Pi$ \hfill ($\mathcal{I}$ must be $(\I)^{\boldp}$)\\
    %iff $\I$ is an \infpstable model of $\tau^*\Pi$ \hfill (Proposition~\ref{prop:fan20.2})\\
    iff $\I^\boldp$ is a Truszczy\'{n}ski stable model of $gr_{\I}^{\boldp}(\tau^*\Pi)$ \hfill (Proposition~\ref{prop:fan20.2})\\
    iff $\I^\boldp$ is a Truszczy\'{n}ski stable model of $\tau \Pi$ \hfill (Theorem~\ref{thrm:connecting.tsm})\\
    iff $\I^\boldp$ is a gringo answer set of $\Pi$ \hfill (Definition~\ref{def:gringo.answer.sets})\\
    iff $A(\I)$ is a gringo answer set of $\Pi$ \hfill ($\boldp$ is the standard partition).
\end{proof}

%\input{strong-equivalence}

% \subsection{Sorting}
% Consider a sorting rule from~\cite{haliya14a}:
% \begin{verbatim}
%     order(X,Y) :- p(X); p(Y); X < Y; not p(Z) : X < Z, Z < Y.
% \end{verbatim}
% %
% We can demonstrate that the $order/2$ predicate correctly encodes a partial order on the set $p/1$.
% %
% As with the GC problem, we begin by defining an instance of the Ordering problem as a set of elements $E$.
% %
% Actually, this isn't a partial order since it lacks transitivity
% %
% A solution to the Ordering problem is a partial order $\prec$ satisfying
% \begin{enumerate}[start=1,label={\bfseries C\arabic*}, leftmargin=26pt]
%     \item\label{order:stmt1} reflexivity, e.g. $x \prec x$ .
% \end{enumerate}

%Maybe a better example is the frame problem (Locally Tight Programs 3-8) re-written with conditional literals.

\section{Conclusions and Future Work}

%{\color{blue} Summarize briefly what you did, why it is useful, and then elaborate on future work}

In this paper we introduced semantics based on the $\SM$ \nobreakdash operator for logic programs containing both conditional literals and arithmetic.
The key result of this work -- Theorem~\ref{thrm:main} -- demonstrates that the definition of answer sets using our extension of the $\tau^*$ translation correctly characterizes the behavior of the answer set solver \clingo.

%
%Specifically, we extend the class of logic programs whose stable models can be captured by the $\SM$\nobreakdash-operator to an  ASP language fragment that supports both conditional literals and arithmetic.
%(\mg\ with conditional literals) in which terms can have 0, 1, or many values.
%
%
The main intuition of the $\phi$ translation~\cite{hanlie22}, which provided $\SM$-based semantics for programs with conditional literals but \emph{without} arithmetic, was that conditional literals in rule bodies behave like nested implications. 
Our proposed translation preserves this intuition. 
However, allowing for arithmetic in the language considered here made the argument of the correspondence between our definition of answer sets and gringo answer sets substantially more complex in comparison to the similar argument
made for the case of translation~$\phi$.
Specifically, the correspondence now relies on strong equivalence as opposed to syntactic identity.

The newly introduced $\SM$-based semantics enables ASP practitioners to verify programs with conditional literals and arithmetic in the style of past work on modular verification~\cite{cafali20a,fanhanlie22b}.
The definition of $\bold{p}$-answer sets offers greater flexibility (due to the possibility to distinguish intensional and extensional predicate symbols) than the traditional notion of gringo answer sets and supports this verification style.
Furthermore, conditional literals can make programs more concise and easier to verify.
For example, as illustrated in the Introduction, we can refactor Listing~\ref{list:gc_yl} to use a smaller set of  predicate symbols by employing a conditional literal in constraint~\eqref{eq:rule.color.cl.constraint}.
%
%This has the potential to make arguments of strong equivalence~\cite{lifpeaval01} simpler.
%

In future work we intend to investigate how the process of using conditional literals to eliminate auxiliary predicates can be generalized.
Automated verification is another important direction for future work.
Recent progress in this direction is manifested in the \anthem\footnote{\url{https://github.com/potassco/anthem}} system, which employs an automated theorem prover to establish strong~\cite{lifpeaval01} or external~\cite{fanhanliftem23} equivalence of \mg\ programs.
We plan to extend the theory and implementation supporting both types of verification to the language presented in this paper.
%
%Similar systems supporting modular verification of propositional programs include \textsc{ccT}~\cite{oetseitomwol09} and \textsc{lpeq}~\cite{janoik04,bomjannie20}.
Similar verification tools include \textsc{ccT}~\cite{oetseitomwol09} and \textsc{lpeq}~\cite{bomjannie20,janoik04}.
Future work will include detailed comparisons of such systems against \anthem.

Finally, it is worth investigating ways to simplify the formulas produced by our extension of $\tau^*$.
For instance, it is easy to see that the formulas in Example 1 contain several unnecessary
existential quantifiers.
By applying ht-equivalent simplifications, we obtain a considerably more readable translation:
$$
\forall V \left( \forall C (col(C) \to \neg asg(V,C)) \wedge vtx(V) \to \bot\right).
$$
Past work in this vein has been devoted to developing a ``natural" translation $\nu$ for a
broad fragment of the \mg\ language~\cite{lif21nu}.
%
%Rules in this more restricted syntax can be translated into simpler formula representations.
%furthermore, $\tau^*$ and $\nu$ can be interleaved at the rule level~\cite{}.
%
We plan to extend the $\nu$ translation with conditional literals to make (automated) verification of programs easier.

\begin{credits}
\subsubsection{\ackname} We are grateful to 
Jorge Fandinno and Vladimir Lifschitz
for their valuable comments,
and to our anonymous reviewers for their feedback.

\subsubsection{\discintname} The authors have no competing interests. 
\end{credits}

%
% ---- Bibliography ----
%
% BibTeX users should specify bibliography style 'splncs04'.
% References will then be sorted and formatted in the correct style.
%
\bibliographystyle{splncs04}
\bibliography{bib}

\appendix

\section{Proofs of Theoretical Claims of the Paper (Propositions~1-4)}
\subsection{Preliminaries: Basic Facts}

\begin{fact}
    As customary, let $\formprop$ denote an infinitary formula, and $\setprop$ denote a set of infinitary formulas.
    \label{fact:1}
    \begin{enumerate}
        \item\label{l:1:1} $\formprop\wedge \top \equiv_s \formprop$; $\big\{\setprop, \top\big\}^\wedge \equiv_s \setprop^\wedge$.
        \item\label{l:1:2} $\formprop\wedge \bot \equiv_s \bot$; $\big\{\setprop, \bot\big\}^\wedge \equiv_s \bot$.
        \item\label{l:1:3} $\formprop \vee \top \equiv_s \top$; $\big\{\setprop, \top\big\}^\vee \equiv_s \top$.
        \item\label{l:1:4} $\formprop \vee \bot \equiv_s \formprop$; $\big\{\setprop, \bot\big\}^\vee \equiv_s \setprop^\vee$.
        \item\label{l:1:7} $\bot \rar \formprop \equiv_s \top$.
        \item\label{l:1:5} $\setprop \equiv_s \top$ iff every \htinterp\ $\hti$ satisfies $\setprop$.
        \item\label{l:1:6} $\setprop \equiv_s \bot$ iff every \htinterp\ $\hti$ fails to satisfy $\setprop$.
    \end{enumerate}
\end{fact}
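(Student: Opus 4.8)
The plan is to reduce every item to a direct comparison of \htmodel s. By \cite[Theorem 3]{harlifpeval17} (reviewed in the preliminaries on strong equivalence), two sets of infinitary formulas are strongly equivalent precisely when they have the same \htmodel s, so it suffices to verify, for each claimed equivalence, that both sides are satisfied by exactly the same \htinterp s. Throughout I would lean on two base observations that follow immediately from the conventions $\top = \emptyset^\wedge$ and $\bot = \emptyset^\vee$ together with the ht-satisfaction clauses for $\setprop^\wedge$ and $\setprop^\vee$: every \htinterp\ satisfies $\top$ (vacuously, there being no member of $\emptyset$ to falsify), and no \htinterp\ satisfies $\bot$ (there being no member of $\emptyset$ to witness it).

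Items~\ref{l:1:1}--\ref{l:1:4} then fall out of the conjunction and disjunction clauses. For item~\ref{l:1:1}, $\hti \models \formprop \wedge \top$ iff $\hti \models \formprop$ and $\hti \models \top$, and since the second conjunct always holds this reduces to $\hti \models \formprop$; the set version is identical, since adjoining $\top$ to $\setprop$ does not change which \htinterp s satisfy every member of the set. Items~\ref{l:1:2}--\ref{l:1:4} are handled dually: conjoining $\bot$ destroys all \htmodel s, matching $\bot$ (item~\ref{l:1:2}); disjoining $\bot$ is inert, matching $\formprop$ (item~\ref{l:1:4}); and disjoining $\top$ makes a formula satisfied by every \htinterp, matching $\top$ (item~\ref{l:1:3}). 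Each is a one-line check against the relevant $\setprop^\wedge$ or $\setprop^\vee$ clause using the two base observations.

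The only item that genuinely exercises the two-part ht-clause for implication is item~\ref{l:1:7}. Here I would show that an arbitrary $\hti$ satisfies $\bot \rar \formprop$ by checking both conditions of that clause: condition~(2) holds because $\hti \not\models \bot$, and condition~(1), namely $S' \models \bot \rar \formprop$, holds because classically $S' \not\models \bot$. Since this argument is independent of $\hti$, the formula $\bot \rar \formprop$ shares the \htmodel s of $\top$, giving the claimed strong equivalence.

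Finally, items~\ref{l:1:5} and~\ref{l:1:6} are essentially restatements of the \htmodel\ characterization of $\equiv_s$. Since $\top$ is satisfied by every \htinterp\ and $\bot$ by none, $\setprop \equiv_s \top$ holds iff $\setprop$ shares the \htmodel s of $\top$, i.e.\ iff every \htinterp\ satisfies $\setprop$; symmetrically, $\setprop \equiv_s \bot$ holds iff $\setprop$ has no \htmodel s. I expect the only point demanding care to be item~\ref{l:1:7}: unlike the purely set-theoretic reasoning in the other cases, it requires separately verifying the ``here'' and ``there'' components of ht-implication rather than treating $\rar$ classically, whereas the remaining items are routine consequences of the empty-conjunction and empty-disjunction conventions.
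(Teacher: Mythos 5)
Your proposal is correct and follows essentially the same route as the paper's proof: items~\ref{l:1:1}--\ref{l:1:4} are immediate from the ht-satisfaction clauses and the conventions $\top = \emptyset^\wedge$, $\bot = \emptyset^\vee$; item~\ref{l:1:7} is verified by checking both components of the ht-implication clause exactly as the paper does; and items~\ref{l:1:5} and~\ref{l:1:6} reduce to the coincidence of \htmodel s with those of $\top$ and $\bot$ respectively. If anything, your explicit appeal to Theorem~3 of Harrison et al.\ to equate strong equivalence with sameness of \htmodel s is slightly more careful than the paper's phrasing, which attributes that step to ``the definition of strong equivalence.''
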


\begin{proof}
    The first four points follow directly.
    \\[5pt]
    \\\noindent
    To see point~\ref{l:1:7}, take an arbitrary \htinterp\ $\hti$ and observe that it satisfies $\bot \rar \formprop$.
    By definition of ht-satisfaction, $\hti \models \bot \rar \formprop$ iff $I \models \bot \rar \formprop$ and $\hti \not\models \bot$ or $\hti \models \formprop$.
    $I \models \bot \rar \formprop$ follows from the rules of classical satisfaction, and $\hti \not\models \bot$ follows from the rules of ht-satisfaction.
    \\[5pt]
    \\\noindent
    To see point~\ref{l:1:5}, assume $\setprop \equiv_s \top$ and take an arbitrary \htinterp\ $\hti$.
    $\hti \models \top$ since $\top$ is an abbreviation for $\emptyset^\wedge$.
    $\setprop$ and $\top$ have the same \htmodel s by the definition of strong equivalence, thus $\hti \models \setprop$. Now assume every \htinterp\ $\hti$ satisfies $\setprop$.
    We know that every $\hti \models \top$.
    Thus, $\setprop$ and $\top$ have the same \htmodel s and $\setprop \equiv_s \top$.
    \\[5pt]
    \\\noindent
    To see point~\ref{l:1:6}, assume $\setprop \equiv_s \bot$ and take an arbitrary \htinterp\ $\hti$.
    $\hti \not\models \bot$ since $\bot$ is an abbreviation for $\emptyset^\vee$.
    $\setprop$ and $\bot$ have the same \htmodel s by the definition of strong equivalence, thus $\hti \not\models \setprop$.
    Now assume every \htinterp\ $\hti$ fails to satisfy $\setprop$.
    We know that every $\hti \not\models \bot$.
    Thus, $\setprop$ and $\bot$ have the same \htmodel s and $\setprop \equiv_s \bot$.
\end{proof}

%%%%%%%%%%%%%%%%%%%%%%%%%%%%%%%%%%%%%%%%%%%

\begin{fact}
    \label{fact:2}
    Let $\formprop(u,v)$ be an infinitary formula containing precomputed terms $u$ and $v$,
    and let $S_1$ and $S_2$ denote sets of precomputed terms of the same sort as $u$ and $v$, respectively.
    Then, the infinitary formulas
    \begin{gather}
        \label{eq:fact:2:1}
        \big \{ \{\formprop(u,v) \mid u \in S_1 \}^\vee \mid v \in S_2 \big\}^\vee
    \end{gather}
    and
    \begin{gather}
        \label{eq:fact:2:2}
        \big \{ \formprop(u,v) \mid \langle u, v \rangle \in S_1 \times S_2 \big\}^\vee
    \end{gather}
    are strongly equivalent.
    This fact can be generalized in a straightforward way to tuples of more than two terms. 
\end{fact}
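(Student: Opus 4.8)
The plan is to reduce the claim to \emph{ht-equivalence} and then verify it by a direct unfolding of the ht-satisfaction clause for disjunctions. Recall the ht-characterization of strong equivalence recalled above (Theorem~3 of \cite{harlifpeval17}): two infinitary formulas are strongly equivalent if and only if they have the same {\htmodel}s. Hence it suffices to fix an arbitrary \htinterp\ $\hti$ and show that it satisfies~\eqref{eq:fact:2:1} exactly when it satisfies~\eqref{eq:fact:2:2}.

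First I would apply the satisfaction clause ``$\hti \models \setprop^\vee$ iff some member of $\setprop$ is satisfied'' twice to the nested disjunction~\eqref{eq:fact:2:1}: it is satisfied iff there is a $v \in S_2$ with $\hti \models \{\formprop(u,v) \mid u \in S_1\}^\vee$, and the latter holds iff there is a $u \in S_1$ with $\hti \models \formprop(u,v)$. Collapsing the two existentials, $\hti$ satisfies~\eqref{eq:fact:2:1} iff there exist $u \in S_1$ and $v \in S_2$ with $\hti \models \formprop(u,v)$. Applying the same clause once to the flattened disjunction~\eqref{eq:fact:2:2} shows that $\hti$ satisfies it iff there is a pair $\langle u,v \rangle \in S_1 \times S_2$ with $\hti \models \formprop(u,v)$ -- the identical condition. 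Thus the two formulas have the same {\htmodel}s and are strongly equivalent.

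I do not anticipate a real obstacle; the argument is purely a matter of reordering quantification over an indexed family, and it is insensitive to the differing \emph{ranks} of the two formulas, since ht-satisfaction does not depend on rank. The only point worth a remark is the degenerate case in which $S_1$ or $S_2$ is empty: then $S_1 \times S_2 = \emptyset$, so~\eqref{eq:fact:2:2} is $\bot$, while the nested existential in~\eqref{eq:fact:2:1} is likewise vacuously unsatisfiable (a disjunction indexed by $S_2$ of empty disjunctions), so both behave as $\bot$; this is consistent with Fact~\ref{fact:1}(\ref{l:1:6}) but already falls out of the existential reformulation. Finally, the stated generalization to tuples of length $n > 2$ follows either by the same one-step reordering applied to the $n$-fold product $S_1 \times \dots \times S_n$, or by a routine induction on $n$ that peels off a single coordinate and appeals to the two-term case just established.
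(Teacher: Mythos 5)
Your proposal is correct and follows essentially the same route as the paper's own proof: both reduce strong equivalence to ht-equivalence (via Theorem~3 of \cite{harlifpeval17}), fix an arbitrary \htinterp, and unfold the ht-satisfaction clause for disjunctions to show that both formulas are satisfied exactly when $\hti \models \formprop(u,v)$ for some $u \in S_1$ and $v \in S_2$. The paper merely writes the two directions out separately, whereas you compress them into a single chain of equivalences; your added remarks on the empty-set case and on rank are harmless extras.
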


\begin{proof}
    Recall that the condition of strong equivalence follows if we can establish that~\eqref{eq:fact:2:1} is satisfied by an \htinterp\ iff~\eqref{eq:fact:2:2} is satisfied by that same \htinterp.
    \\\noindent\emph{Left-to-right:} Take any \htinterp\ $\hti$ and assume that $\hti$ satisfies~\eqref{eq:fact:2:1}.
    By the definition of infinitary ht-satisfaction (bullet 3),  $\hti \models \{\formprop(u,y) \mid u \in S_1 \}^\vee$ for some $y$ belonging to $S_2$.
    Since $\hti \models \{\formprop(u,y) \mid u \in S_1 \}^\vee$, it follows from the definition of infinitary ht-satisfaction (bullet 3) that $\hti \models \formprop(x,y)$ for some $x$ belonging to $S_1$.
    Since $\hti \models \formprop(x,y)$ and the pair $\langle x, y \rangle$ belongs to $S_1 \times S_2$, it follows that $\hti$ satisfies~\eqref{eq:fact:2:2} (again, by the definition of infinitary ht-satisfaction).

    \emph{Right-to-left:} Take any \htinterp\ $\hti$ and assume that $\hti$ satisfies~\eqref{eq:fact:2:2}.
    It follows that $\hti \models \formprop(x,y)$ for some pair $\langle x, y \rangle \in S_1 \times S_2$.
    Thus, $\hti$ satisfies the formula $\formprop(u,y)$ for some $u$ (specifically, $u = x$, $u \in S_1$).
    It follows that $\hti$ also satisfies the formula $\{\formprop(u,y) \mid u \in S_1 \}^\vee$.
    Similarly, $\hti$ satisfies the formula $\{\formprop(u,v) \mid u \in S_1 \}^\vee$ for some $v$ (specifically, $v = y$, $v \in S_2$).
    From this, it follows that $\hti$ also satisfies~\eqref{eq:fact:2:1}.
\end{proof}

%%%%%%%%%%%%%%%%%%%%%%%%%%%%%%%%%%%%%%%%%%%

\begin{fact}
    \label{fact:3}
    Let $\intfolone$ be a standard interpretation, and let $\boldp,\boldq$ be a partition of predicate symbols in our signature (so that every comparison $\prec$ is in $\boldq$), and $t_1$ and $t_2$ are ground  terms. Then
        \begin{itemize}
            \item     $\grdstd(t_1\prec t_2)=\top$ when relation~$\prec$ holds for the pair $(t_1^\intfolone,t_2^\intfolone)$; and 
            \item $\grdstd(t_1\prec t_2)=\bot$, otherwise.    
        \end{itemize}  
\end{fact}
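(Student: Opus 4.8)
The plan is to unfold the definition of the grounding operator $\grdstd$ on the atomic comparison $t_1 \prec t_2$ and then read off the value from the conditions defining a standard interpretation, via a case split on the comparison symbol $\prec$. As a preliminary observation I would note that, since $\intfolone$ is standard, the universe of the program sort is exactly the set of precomputed terms (condition~\ref{c.interp.first}); hence both values $t_1^{\intfolone}$ and $t_2^{\intfolone}$ are themselves precomputed terms. This is what allows the relevant interpretation conditions, which are phrased for precomputed arguments, to be applied directly.

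If $\prec$ is one of the order symbols $<$, $>$, $\leq$, $\geq$, then $\prec$ is a predicate constant, and by the hypothesis of the fact it belongs to $\boldq$. I would therefore apply grounding condition~\ref{gr:condition:3}: $\grdstd(t_1 \prec t_2)$ is $\top$ precisely when the precomputed atom $t_1^{\intfolone} \prec t_2^{\intfolone}$ is satisfied by $\intfolone$, and $\bot$ otherwise. Since $t_1^{\intfolone}$ and $t_2^{\intfolone}$ are precomputed, condition~\ref{c.interp.seventh} states that $\intfolone$ satisfies $t_1^{\intfolone} \prec t_2^{\intfolone}$ iff the relation $\prec$ holds for the pair $(t_1^{\intfolone}, t_2^{\intfolone})$, which is exactly the claim. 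If $\prec$ is the equality symbol, I would instead invoke grounding condition~\ref{gr:condition:4} directly, obtaining $\top$ iff $t_1^{\intfolone} = t_2^{\intfolone}$, i.e. iff $=$ holds of the pair of values.

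The only delicate case is $\prec$ being the inequality symbol, which is not a predicate constant but abbreviates $\neg(t_1 = t_2)$, i.e. $(t_1 = t_2) \to \bot$. Using the recorded identity $\grdstd(\neg F) = \neg \grdstd(F)$ together with condition~\ref{gr:condition:4}, the grounding reduces to $\neg \top$ when the values coincide and to $\neg \bot$ when they differ. These are not literally $\bot$ and $\top$, but they are strongly equivalent to them (Fact~\ref{fact:1}, points~\ref{l:1:6} and~\ref{l:1:7}), which matches the claim since $\neq$ holds of $(t_1^{\intfolone}, t_2^{\intfolone})$ exactly when the values differ. I expect this to be the only real obstacle: the order and equality cases are immediate unfoldings yielding literal $\top$/$\bot$, whereas inequality forces the statement's equalities to be read up to strong equivalence. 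This is harmless, since every downstream use of the fact (e.g. within Proposition~\ref{prop:gr:val}) is itself stated modulo $\equiv_s$.
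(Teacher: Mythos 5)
Your proposal is correct, and its core case is exactly the paper's entire proof: the paper disposes of this fact in one sentence by citing grounding condition~\ref{gr:condition:3} together with condition~\ref{c.interp.seventh} of standard interpretations, which is precisely your argument for the order symbols $<$, $>$, $\leq$, $\geq$ (including your preliminary observation, left implicit in the paper, that $t_1^{\intfolone}$ and $t_2^{\intfolone}$ are precomputed terms so that condition~\ref{c.interp.seventh} applies). Where you go beyond the paper is in the equality and inequality cases. Note that in the signature $\sigma_\Pi$ neither $=$ nor $\neq$ is a predicate constant (item~\ref{en:1:3} of the signature definition admits only the other four comparison symbols), so neither can literally be a member of $\boldq$; under the reading the paper intends, the hypothesis that every comparison $\prec$ is in $\boldq$ confines the fact to the four order symbols, and your extra cases fall outside its scope. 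This is consistent with how the fact is used: its only invocation is for $\leq$ (Case 4 of the proof of Proposition~\ref{prop:gr:val}), and when the paper needs to ground arbitrary comparisons, in Lemma~\ref{lem:gr:atomic:comparison.aux}, it cites grounding conditions~\ref{gr:condition:3} and~\ref{gr:condition:4} directly rather than this fact. Your observation that $\neq$ only yields $\neg\top$ or $\neg\bot$, so that the stated syntactic identities would have to be weakened to $\equiv_s$, is accurate under your broader reading of ``comparison'' and is a worthwhile caution, but it does not expose an error in the paper once the narrower reading is adopted; your equality case, handled via grounding condition~\ref{gr:condition:4}, is likewise correct but simply not needed for the fact as the paper deploys it.
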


\begin{proof}
This fact follows from the definitions of grounding (condition~\ref{gr:condition:3}) and standard interpretations (condition~\ref{c.interp.seventh}).
\end{proof}

%%%%%%%%%%%%%%%%%%%%%%%%%%%%%%%%%%%%%%%%%%%

\begin{fact}
    \label{fact:4}
    \begin{enumerate}
        \item For any real numbers $n$ and $m$ such that $n < 0$ and $m = \abs{n}$, $-n = m$.
        \item For any real numbers $i$ and $j$, $\frac{\abs{i}}{\abs{j}} = \abs{\frac{i}{j}}$.
        \item For any real number $n$, $\ceil{ n } = - \floor{-n }$.
    \end{enumerate}  
\end{fact}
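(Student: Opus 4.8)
The plan is to dispatch the three items separately, since they are independent elementary facts about the reals, the absolute value, and the floor/ceiling functions that will later be invoked when reconciling the $val$-based definition of division, modulo, and absolute value with the $round$-based $\tau$ definitions. In each case the strategy is simply to unfold the governing definition and, wherever signs intervene, split into cases. Nothing beyond the ordered-field structure of the reals together with the defining properties of $\abs{\cdot}$, $\floor{\cdot}$, and $\ceil{\cdot}$ is needed.

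For item~1 I would appeal directly to the definition of absolute value: since $n < 0$ we have $\abs{n} = -n$, and combining this with the hypothesis $m = \abs{n}$ gives $-n = m$ at once. For item~2 I would use multiplicativity of the absolute value. Assuming $j \neq 0$ (so that the quotient is defined), I write $i = (i/j)\cdot j$ and apply $\abs{xy} = \abs{x}\,\abs{y}$ to get $\abs{i} = \abs{i/j}\,\abs{j}$; dividing through by $\abs{j} > 0$ yields $\abs{i/j} = \abs{i}/\abs{j}$, which is the asserted identity. A fully elementary alternative is a four-way case analysis on the signs of $i$ and $j$, each case reducing to cancellation of signs.

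For item~3 I would argue from the characterizing inequalities of the two rounding functions. The ceiling $\ceil{n}$ is the unique integer satisfying $\ceil{n} - 1 < n \le \ceil{n}$. Negating this chain reverses the inequalities and the strictness, giving $-\ceil{n} \le -n < -\ceil{n} + 1$. Since $-\ceil{n}$ is an integer and these are exactly the defining inequalities of $\floor{-n}$, namely the unique integer $k$ with $k \le -n < k+1$, uniqueness forces $\floor{-n} = -\ceil{n}$, equivalently $\ceil{n} = -\floor{-n}$.

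I do not expect a genuine obstacle here, as all three claims are routine. The only points that call for a little care are the implicit hypothesis $j \neq 0$ in item~2 (needed for the quotient to be well defined) and the correct handling of strict versus non-strict inequalities when negating the ceiling's characterization in item~3, so that one lands precisely on the defining inequalities of the floor rather than a shifted variant.
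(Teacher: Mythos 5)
Your proof is correct and takes essentially the same route as the paper's: items 1 and 2 are dispatched directly from the defining properties of absolute value, and item 3 by negating the characterizing inequalities to convert the ceiling's definition into the floor's. If anything, your version is slightly more careful than the paper's, since you make the uniqueness argument for $\floor{-n} = -\ceil{n}$ explicit and you flag the implicit hypothesis $j \neq 0$ in item 2, both of which the paper passes over silently.
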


\begin{proof}
    The first two points follow directly from the definition of absolute value.
    For the last point, $\ceil{ n } = \ceil{ -(-n) }$, and $\ceil{ -(-n) }$ is the smallest integer $k$ such that $k \geq -(-n)$.
    Thus, $k \geq n$ and $-k \leq -n$.
    Therefore, $-k$ is the largest integer less than or equal to $-n$.
    By the definition of floor, $\floor{-n}= -k$.
    Since $\ceil{ -(-n) }$ is $k$ and $-k$ is $\floor{-n }$, it follows that $\ceil{ n } = - \floor{-n }$.    
\end{proof}

%%%%%%%%%%%%%%%%%%%%%%%%%%%%%%%%%%%%%%%%%%%

\begin{fact}{\cite[Corollary 1]{harlifpeval17}}
    \label{fact:5:harrison}
    Within an infinitary propositional formula $\formprop$, (infinitely many) parts of 
    $\formprop$ can be simultaneously replaced with strongly equivalent formulas without changing the set of stable models of $\formprop$.
\end{fact}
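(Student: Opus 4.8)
The plan is to reduce the statement to a replacement property for the infinitary logic of here-and-there and then transfer it to stable models. The review already records the crucial bridge: two (sets of) infinitary formulas are strongly equivalent iff they are equivalent in the infinitary logic of here-and-there (Theorem~3 of~\cite{harlifpeval17}). Consequently, each designated part and its substitute have exactly the same {\htmodel}s, and it suffices to show that carrying out all the substitutions simultaneously preserves the {\htmodel}s of the whole formula~$\formprop$. Strong equivalence of the original and resulting formulas then follows, and since a Truszczy\'{n}ski stable model~$\intpropone$ of a formula is defined purely through whether $\langle \intpropone, \intpropone \rangle$ is an equilibrium model — a condition that refers only to the formula's {\htmodel}s — the set of stable models is left unchanged.

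First I would isolate the core as a replacement lemma: if $\formprop'$ is obtained from $\formprop$ by simultaneously replacing a family of designated subformula occurrences, each occurrence $Q$ by some $Q'$ with $Q \equiv_s Q'$, then for every \htinterp\ $\hti$ we have $\hti \models \formprop$ iff $\hti \models \formprop'$. I would prove this by induction on the rank of~$\formprop$. In the base case $\formprop$ is an atom: either it is one of the designated occurrences, in which case the claim is exactly the assumed here-and-there equivalence of $Q$ and $Q'$, or it is not, in which case $\formprop' = \formprop$ and nothing is to be shown. In the inductive step, if $\formprop$ itself is a designated occurrence the claim is again immediate; otherwise every substitution lies inside the constituents of~$\formprop$, and I appeal to the inductive hypothesis on each constituent, each of strictly smaller rank. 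For $\setprop^\wedge$ and $\setprop^\vee$ the conclusion follows directly, since ht-satisfaction of these formulas is defined constituent-wise.

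The delicate case is implication. Writing $\formprop$ as $\formprop_1 \rar \formprop_2$, recall that $\hti \models \formprop_1 \rar \formprop_2$ has two clauses: the classical clause $\intproptwo \models \formprop_1 \rar \formprop_2$ and the here-and-there clause $\hti \not\models \formprop_1$ or $\hti \models \formprop_2$. The here-and-there clause is preserved directly by applying the inductive hypothesis to $\formprop_1$ and $\formprop_2$ at $\hti$. For the classical clause I would observe that the inductive hypothesis, evaluated at the total \htinterp\ $\langle \intproptwo, \intproptwo \rangle$, yields classical equivalence of each constituent at $\intproptwo$, because ht-satisfaction by a total \htinterp\ coincides with classical satisfaction by $\intproptwo$. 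This lets both clauses be transported through the substitution, establishing the implication case.

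I expect this implication case to be the main obstacle, precisely because of the two-part definition of ht-satisfaction for $\rar$: one must check that here-and-there equivalence of the constituents also secures classical equivalence at the ``there'' world, so that \emph{both} clauses survive the replacement. A secondary subtlety is that infinitely many parts may be replaced at once; this is handled by noting that the rank of an infinitary formula is a well-founded measure and that each designated occurrence resides at a constituent of strictly smaller rank, so the simultaneous substitution is well-defined and the induction on rank covers it uniformly, with no need to appeal to any ordering of the replacements. With the replacement lemma in hand, the conclusion is immediate: $\formprop$ and $\formprop'$ have the same {\htmodel}s, hence are strongly equivalent by Theorem~3 of~\cite{harlifpeval17}, hence have the same equilibrium models and therefore the same Truszczy\'{n}ski stable models.
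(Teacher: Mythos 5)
The paper does not actually prove this fact: it is imported as \cite[Corollary 1]{harlifpeval17}, so there is no in-paper proof to compare against. Your reconstruction is correct and matches the methodology of the cited source: a replacement lemma for the infinitary logic of here-and-there, proved by induction on rank, followed by the observation that Truszczy\'{n}ski stable models (Definition~3 of the paper) are defined through equilibrium models, which refer only to {\htmodel}s, so preservation of {\htmodel}s already yields preservation of stable models. You correctly identify the implication case as the only delicate step, and your handling of it is sound. Two points deserve to be made explicit. First, your treatment of the classical clause silently uses the lemma that a total \htinterp\ $\langle \intproptwo, \intproptwo \rangle$ satisfies a formula iff $\intproptwo$ satisfies it classically; this is standard but itself requires a routine induction on rank, and since the two-clause definition of ht-satisfaction for $\rar$ hinges on it, it should be stated as a separate auxiliary lemma. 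Second, your induction tacitly assumes the designated occurrences are non-nested (when $\formprop$ itself is designated you replace it outright, disregarding any designations inside it); that is the usual convention for simultaneous replacement, but a sentence fixing it would close the case analysis cleanly. Finally, invoking Theorem~3 of~\cite{harlifpeval17} at the end to recover strong equivalence of the original and transformed formulas is harmless but not needed for the stated conclusion: once the two formulas share their {\htmodel}s, identical equilibrium models and hence identical Truszczy\'{n}ski stable models follow directly from the paper's definitions.
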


%%%%%%%%%%%%%%%%%%%%%%%%%%%%%%%%%%%%%%%%%%%

\subsection{Proof of Proposition~\ref{prop:gr:val}}

\begin{lemma}
    \label{lem:std:abs.val}
    %Let $\intfolone$ be a standard interpretation and 
    Let $\numeral{i},\numeral{j},\numeral{k}$ be numerals such that $k$ is the integer $\floor{\abs{i} / \abs{j} }$.
    Then,
    \begin{itemize}
        \item $\numeral{k}$ is $\numeral{round(i/j)}$ if $i \times j \geq 0$;
        \item $\numeral{-k}$ is $\numeral{round(i/j)}$ if $i \times j < 0$.
    \end{itemize}
\end{lemma}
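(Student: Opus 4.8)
The plan is to reduce both cases of the lemma to the definition of $round$ in~\eqref{def:round} together with the absolute-value identities collected in Fact~\ref{fact:4}. The first thing I would establish is that the sign of the product $i \times j$ coincides with the sign of the quotient $i/j$: since $j \neq 0$ (inherited from the division case of the $val$ construction) we have $i/j = (i \times j)/j^2$ with $j^2 > 0$, so $i \times j \geq 0$ iff $i/j \geq 0$, and $i \times j < 0$ iff $i/j < 0$. This observation lets me rewrite the case hypotheses, which are phrased in terms of $i \times j$, as statements about the sign of $i/j$ --- and it is precisely this sign that controls which branch of $round$ is selected. Note also that, by the one-to-one correspondence between numerals and integers, it suffices to prove $k = round(i/j)$ in the first case and $-k = round(i/j)$ in the second.

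For the first case, suppose $i \times j \geq 0$, so $i/j \geq 0$ and hence $round(i/j) = \floor{i/j}$ by~\eqref{def:round}. Since $i/j \geq 0$ we have $i/j = \abs{i/j}$, and the second item of Fact~\ref{fact:4} gives $\abs{i/j} = \abs{i}/\abs{j}$. Therefore $round(i/j) = \floor{\abs{i}/\abs{j}} = k$, which is exactly the claim $\numeral{k} = \numeral{round(i/j)}$.

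For the second case, suppose $i \times j < 0$, so $i/j < 0$ and hence $round(i/j) = \ceil{i/j}$ by~\eqref{def:round}. Here I would apply the third item of Fact~\ref{fact:4} to rewrite $\ceil{i/j} = -\floor{-i/j}$. Because $i/j < 0$ we have $-i/j > 0$, so $-i/j = \abs{-i/j} = \abs{i/j} = \abs{i}/\abs{j}$, using the evenness of absolute value and again the second item of Fact~\ref{fact:4}. Hence $\floor{-i/j} = \floor{\abs{i}/\abs{j}} = k$, and therefore $round(i/j) = -k$, giving $\numeral{-k} = \numeral{round(i/j)}$.

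I do not anticipate any substantive obstacle: the argument is a direct two-way case analysis driven by the sign of $i/j$. The one place requiring care is the sign bookkeeping --- specifically, ensuring that the branch of $round$ (floor versus ceiling) is matched to the sign of the quotient $i/j$ rather than to the sign of the product $i \times j$, and that each absolute-value identity of Fact~\ref{fact:4} is applied to the correct expression. The preliminary sign-equivalence step is what bridges this gap cleanly, so once it is in place the remaining manipulations are routine.
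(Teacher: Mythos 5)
Your proposal is correct and follows essentially the same route as the paper's proof: a case analysis on the sign of $i \times j$, reducing each branch to the definition of $round$ in~\eqref{def:round} and the absolute-value identities of Fact~\ref{fact:4}. The only difference is organizational: you treat the case $i \times j < 0$ uniformly via $\ceil{i/j} = -\floor{-i/j}$, whereas the paper splits it into the subcases $\ceil{i/j} < 0$ and $\ceil{i/j} = 0$; your uniform argument is valid, since $i/j < 0$ holds throughout that case and that is all Fact~\ref{fact:4} requires.
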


\begin{proof}
    First note that the lemma condition implies that $j \neq 0$.
    \\\noindent Assume $i \times j \geq 0$. 
    It follows that $round(i/j) = \floor{i/j}$ and $i / j = \abs{i} / \abs{j}$, thus $\floor{i/ j } = \floor{\abs{i}/ \abs{j} }$.
    By the lemma conditions, $k = \floor{\abs{i}/ \abs{j} }$ and consequently, $k = \floor{i/ j }= round(i/j)$. It follows that
    $\numeral{k} = \numeral{round(i/j)}$.
    % thus $\numeral{k}^\intfolone = \numeral{\floor{\abs{i}/ \abs{j} }}^\intfolone = \numeral{\floor{i / j }}^\intfolone = \numeral{round(i/j)}^\intfolone$.
    % Thus, $\numeral{k}$ is $\numeral{round(i/j)}$.
    \\[5pt]
    \\\noindent Assume $i \times j < 0$.
    It follows that $round(i/j) = \ceil{i/j}$ and $\ceil{i/j} \leq 0$.
    %$\numeral{\ceil{i/j}}$
    %Thus, $\grdstd(r = \numeral{-k})$ is $\top$ iff $\numeral{-k}$ is $\numeral{\ceil{ i / j }}$.
    \\\noindent\emph{Case 1}: 
    Assume $\ceil{ i / j } < 0$.
    Thus, $\abs{j} \leq \abs{i}$.
    Denote $\frac{i}{j}$ by $n$, and denote $\frac{\abs{i}}{\abs{j}}$ by $m$.
    It follows from points 1 and 2 of Fact~\ref{fact:4} that $-n = m$.
    Furthermore, it follows from point 3 of Fact~\ref{fact:4} that $\ceil{n} = - \floor{-n}$, and therefore $\ceil{n} = - \floor{m}$.
    Thus, $\ceil{ i / j } = - \floor{\abs{i} / \abs{j} } = -k$ (recall that $k$ is $\floor{\abs{i}/\abs{j} }$), which entails that $\numeral{-k}$ is $\numeral{\ceil{ i / j }}$.
    \\\noindent\emph{Case 2}: Assume $\ceil{ i / j } = 0$.
    Thus, $\abs{j} > \abs{i}$, and $0 \leq \abs{i}/\abs{j} < 1$. Consequently, $\floor{\abs{i}/\abs{j}}= 0$, recall that $k$ is $\floor{\abs{i}/\abs{j} }$.
    It follows that $k = 0$, and thus $-k = 0$. Consequently, $\numeral{-k}$ is $\numeral{\ceil{ i / j }}$.
\end{proof}

% %%%%%%%%%%%%%%%%%%%%%%%%%%%%%%%%%%%%%%%%%%%

\begin{lemma}
    \label{lem:gr:val:f1}
    Let $\intfolone$ be a standard interpretation and let $F_1(IJK)$ be the formula
    $$K \times \lvert J \rvert \leq \lvert I \rvert < (K + \overline{1}) \times \lvert J \rvert$$
    Then $\grdstd(F_1(\numeral{i}\numeral{j}\numeral{k})) \equiv_s \top$ if 
    $\numeral{i},\numeral{j},\numeral{k}$ are numerals such that integer $k=\floor{\abs{i} / \abs{j} }$
    %$i,j,k$ are integers such that $\abs{j}$ evenly divides $\abs{i}$ exactly $k$ times 
    and $\grdstd(F_1(\numeral{i}\numeral{j}\numeral{k})) \equiv_s \bot$ otherwise.
\end{lemma}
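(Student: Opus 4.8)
The plan is to unfold $F_1$ into a conjunction of two comparison atoms, ground each one separately via Fact~\ref{fact:3}, and then read off the result from the elementary defining property of the floor function. Recall that $F_1(IJK)$ abbreviates the conjunction
\[
(K \times \lvert J \rvert \leq \lvert I \rvert) \wedge (\lvert I \rvert < (K+\overline{1}) \times \lvert J \rvert),
\]
so after substituting the numerals $\numeral{i},\numeral{j},\numeral{k}$ and applying grounding, which commutes with $\wedge$ by condition~\ref{gr:condition:5}, $\grdstd(F_1(\numeral{i}\numeral{j}\numeral{k}))$ equals
\[
\grdstd(\numeral{k} \times \lvert \numeral{j} \rvert \leq \lvert \numeral{i} \rvert) \wedge \grdstd(\lvert \numeral{i} \rvert < (\numeral{k}+\overline{1}) \times \lvert \numeral{j} \rvert).
\]
First I would evaluate the terms under the standard interpretation: by conditions~\ref{c.interp.fourth} and~\ref{c.interp.fifth}, $(\numeral{k}\times\lvert\numeral{j}\rvert)^\intfolone = \numeral{k\cdot\abs{j}}$, $\lvert\numeral{i}\rvert^\intfolone = \numeral{\abs{i}}$, and $((\numeral{k}+\overline{1})\times\lvert\numeral{j}\rvert)^\intfolone = \numeral{(k+1)\cdot\abs{j}}$. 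Then Fact~\ref{fact:3} turns each conjunct into $\top$ or $\bot$: the first is $\top$ iff $k\cdot\abs{j} \leq \abs{i}$, and the second is $\top$ iff $\abs{i} < (k+1)\cdot\abs{j}$.

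The arithmetic core of the argument is to show that these two inequalities hold simultaneously exactly when $k = \floor{\abs{i}/\abs{j}}$. For $j \neq 0$ we have $\abs{j} > 0$, so dividing the pair $k\cdot\abs{j} \leq \abs{i} < (k+1)\cdot\abs{j}$ through by $\abs{j}$ yields the equivalent pair $k \leq \abs{i}/\abs{j} < k+1$; since $\floor{x}$ is by definition the unique integer $n$ with $n \leq x < n+1$, this holds iff $k = \floor{\abs{i}/\abs{j}}$. I would treat $j = 0$ separately: then $\abs{j}=0$, the first inequality $0 \leq \abs{i}$ always holds while the second $\abs{i} < 0$ never does, so the conjunction of the two groundings can never be $\top$; this matches the lemma, since $\floor{\abs{i}/\abs{j}}$ is undefined and no integer $k$ satisfies the hypothesis, placing us in the ``otherwise'' case.

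Finally I would assemble the two cases using Fact~\ref{fact:1}. When $k = \floor{\abs{i}/\abs{j}}$, both conjuncts ground to $\top$, and by point~\ref{l:1:1} of Fact~\ref{fact:1} the conjunction is $\equiv_s \top$; otherwise at least one conjunct grounds to $\bot$, and by point~\ref{l:1:2} of Fact~\ref{fact:1} any conjunction containing $\bot$ is $\equiv_s \bot$. This gives exactly the claimed dichotomy. I expect no serious obstacle here; the only points requiring care are the degenerate case $j=0$, where the floor is undefined, and keeping the term evaluations under the standard interpretation straight, since the chained-inequality notation hides the conjunction that makes the whole reduction go through.
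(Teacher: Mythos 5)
Your proof is correct, and it follows the same overall skeleton as the paper's: unfold $F_1$ into its two comparison conjuncts, push $\grdstd$ through the conjunction, evaluate the terms under the standard interpretation, and reduce everything to the integer inequalities $k\cdot\abs{j} \leq \abs{i} < (k+1)\cdot\abs{j}$, finishing with Fact~\ref{fact:1} (points~\ref{l:1:1} and~\ref{l:1:2}). Where you differ is the arithmetic core and its completeness. The paper argues by contradiction: assuming $k = \floor{\abs{i}/\abs{j}}$, it rules out $n_1 > n_2$, $n_2 > n_3$, and $n_2 = n_3$ in three separate subcases, which explicitly establishes only the $\equiv_s \top$ half of the claim; the $\equiv_s \bot$ half for $k \neq \floor{\abs{i}/\abs{j}}$ with $j \neq 0$ is left implicit, and the degenerate case $j = 0$ is not addressed in this proof at all but is dealt with in a separate statement (Lemma~\ref{cor:gr:val:f1}). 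You instead invoke the defining characterization of the floor function --- $k \leq x < k+1$ iff $k = \floor{x}$ --- after dividing by $\abs{j} > 0$, which yields the biconditional in one step and makes both halves of the dichotomy immediate; you also dispatch $j = 0$ inside the proof by observing that the second conjunct grounds to $\bot$. Your route is shorter and, strictly speaking, more complete than the paper's own: it explicitly covers the ``otherwise'' branch of the lemma statement, which the paper's proof by contradiction only gives you after an extra (unstated) appeal to the uniqueness of the floor.
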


\begin{proof}
    Recall that $F_1(IJK)$ is an abbreviation for the formula
    $$(K \times \lvert J \rvert \leq \lvert I \rvert) \wedge (\lvert I \rvert < (K + \overline{1}) \times \lvert J \rvert)$$
    thus, 
    \begin{align*}
        \grdstd(F_1(\numeral{i}\numeral{j}\numeral{k})) &= \grdstd(\numeral{k} \times \lvert \numeral{j} \rvert \leq \lvert \numeral{i} \rvert)) \wedge \grdstd(\lvert \numeral{i} \rvert < (\numeral{k} + \numeral{1}) \times \lvert \numeral{j} \rvert)
    \end{align*}
    Let $\numeral{n_1}$ denote the domain element $(\numeral{k} \times \lvert \numeral{j} \rvert)^\intfolone$, and let $\numeral{n_2}$ denote the domain element $(\abs{\numeral{i}})^\intfolone$.
    By the definition of grounding, $\grdstd(\numeral{k} \times \lvert \numeral{j} \rvert \leq \lvert \numeral{i} \rvert)) \equiv_s \top$ if the relation $\leq$ holds between $\numeral{n_1}$ and $\numeral{n_2}$, and $\grdstd(\numeral{k} \times \lvert \numeral{j} \rvert \leq \lvert \numeral{i} \rvert)) \equiv_s \bot$ otherwise.
    Since $\intfolone$ is a standard interpretation, $\numeral{n_1}$ and $\numeral{n_2}$ are the numerals $\numeral{k \times \abs{j}}$ and $\numeral{\abs{i}}$, respectively.

    Similarly, let $\numeral{n_3}$ denote the domain element $((\numeral{k} + \overline{1}) \times \abs{\numeral{j}})^\intfolone$.
    By the definition of grounding, $\grdstd(\lvert \numeral{i} \rvert < (\numeral{k} + \overline{1}) \times \lvert \numeral{j} \rvert) \equiv_s \top$ if the relation $<$ holds between $\numeral{n_2}$ and $\numeral{n_3}$, and $\grdstd(\lvert i \rvert < (k + \overline{1}) \times \lvert j \rvert) \equiv_s \bot$ otherwise.
    Since $\intfolone$ is a standard interpretation, $\numeral{n_2}$ and $\numeral{n_3}$ are the numerals $\numeral{\abs{i}}$ and $\numeral{(k + 1) \times \abs{j}}$, respectively.

    Since the total order on numerals mirrors the total order on integers, this means that we need to establish that $n_1 \leq n_2 < n_3$ for integers $n_1$, $n_2$, and $n_3$.
    We proceed by contradiction and assume that either $n_1 > n_2$ or $n_2 \geq n_3$.
    %In the following, let $m = \floor{\abs{i}/\abs{j}}$.
    %Thus, $\abs{j}$ evenly divides $\abs{i}$ exactly $m$ times.
    
    If $n_1 > n_2$, then $k \times \abs{j} > \abs{i}$. 
    Thus, $k > \abs{i} / \abs{j} \geq \floor{\abs{i} / \abs{j}}$. 
    But this contradicts the lemma's condition that $k$ is $\floor{\abs{i}/\abs{j}}$.

    If $n_2 > n_3$, then $\abs{i} > (k+1) \times \abs{j}$.
    Thus, $k+1 < \abs{i} / \abs{j}$. Let $l=\abs{i} / \abs{j} - \floor{\abs{i}/\abs{j}}$. Note how $l < 1$ and we can represent $\abs{i} / \abs{j}=\floor{\abs{i}/\abs{j}}+ l$.
    It is easy to see that $k+1 < \floor{\abs{i}/\abs{j}} + l$ and consequently, $k <  \floor{\abs{i}/\abs{j}} + (l - 1)$.
    Since $l-1 < 0$, $k <  \floor{\abs{i}/\abs{j}}$.
    This again contradicts the lemma's condition that $k = \floor{\abs{i}/\abs{j}}$.
    
    Finally, if $n_2 = n_3$, then $\abs{i} = (k+1) \times \abs{j}$ and consequently, $\abs{i} / \abs{j} = (k+1)$.
    Thus, \hbox{$\abs{i} / \abs{j} - 1 = k$}.
    Recall that $\abs{i} / \abs{j} = \floor{\abs{i}/\abs{j}} + l$, where $l$ is strictly less than $1$.
    Consequently, \hbox{$k =  \floor{\abs{i}/\abs{j}} + (l - 1)$},
    which contradicts that $k = \floor{\abs{i}/\abs{j}}$.
\end{proof}

%%%%%%%%%%%%%%%%%%%%%%%%%%%%%%%%%%%%%%%%%%%

\begin{lemma}
    \label{cor:gr:val:f1}
    Let $\intfolone$ be a standard interpretation and let $F_1(IJK)$ be the formula
    $$K \times \lvert J \rvert \leq \lvert I \rvert < (K + \overline{1}) \times \lvert J \rvert$$
    If $\numeral{i}$ and $\numeral{k}$ are any numerals and $\numeral{j}$ is $\overline{0}$, then $\grdstd(F_1(\numeral{i}\numeral{j}\numeral{k})) \equiv_s \bot$.
\end{lemma}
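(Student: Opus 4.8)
The plan is to reduce $F_1$ to its underlying conjunction and show that the grounding of its second conjunct is $\bot$, whence the claim follows immediately from the behaviour of strong equivalence under conjunction. Recall that $F_1(IJK)$ abbreviates $(K \times \lvert J \rvert \leq \lvert I \rvert) \wedge (\lvert I \rvert < (K + \overline{1}) \times \lvert J \rvert)$. Substituting $\numeral{i}$ for $I$, $\overline{0}$ for $J$, and $\numeral{k}$ for $K$, and then applying condition~\ref{gr:condition:5} of grounding (which distributes over $\wedge$), I obtain
$$\grdstd(F_1(\numeral{i}\,\overline{0}\,\numeral{k})) = \grdstd(\numeral{k} \times \lvert \overline{0} \rvert \leq \lvert \numeral{i} \rvert) \wedge \grdstd(\lvert \numeral{i} \rvert < (\numeral{k} + \overline{1}) \times \lvert \overline{0} \rvert).$$

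The key step concerns the second conjunct. Since $\intfolone$ is a standard interpretation, conditions~\ref{c.interp.fourth} and~\ref{c.interp.fifth} yield $(\lvert \overline{0} \rvert)^\intfolone = \overline{0}$ and hence $((\numeral{k} + \overline{1}) \times \lvert \overline{0} \rvert)^\intfolone = \overline{0}$; moreover $(\lvert \numeral{i} \rvert)^\intfolone = \numeral{\abs{i}}$. Thus evaluating both sides of the strict inequality collapses it to the comparison of the pair $(\numeral{\abs{i}}, \overline{0})$. Because $\abs{i} \geq 0$ for every integer $i$, and the total order on numerals mirrors that on the integers, the relation $<$ fails to hold for this pair. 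By Fact~\ref{fact:3} it follows that $\grdstd(\lvert \numeral{i} \rvert < (\numeral{k} + \overline{1}) \times \lvert \overline{0} \rvert) = \bot$, independently of the value of $\numeral{k}$.

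Finally I would invoke point~\ref{l:1:2} of Fact~\ref{fact:1}, namely $\formprop \wedge \bot \equiv_s \bot$, to conclude that the entire conjunction, and therefore $\grdstd(F_1(\numeral{i}\,\overline{0}\,\numeral{k}))$, is strongly equivalent to $\bot$ regardless of the first conjunct's truth value. There is no genuine obstacle here; the only point requiring care is the bookkeeping of term evaluation under the standard interpretation -- specifically, recognising that multiplying by $\lvert \overline{0} \rvert$ forces the upper bound to $\overline{0}$, which renders the strict inequality against the nonnegative quantity $\lvert \numeral{i} \rvert$ unsatisfiable. This argument is the degenerate ($j = 0$) counterpart of Lemma~\ref{lem:gr:val:f1} and reuses exactly the same machinery.
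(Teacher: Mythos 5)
Your proof is correct and follows essentially the same route as the paper's: both decompose $\grdstd(F_1(\numeral{i}\,\overline{0}\,\numeral{k}))$ into the two grounded conjuncts, observe that the strict inequality $\lvert \numeral{i} \rvert < (\numeral{k}+\overline{1}) \times \lvert \overline{0} \rvert$ must ground to $\bot$ because $\abs{i} \geq 0 = (k+1) \times \abs{0}$, and conclude via the conjunction-with-$\bot$ identity. Your write-up is in fact slightly more explicit than the paper's (which leaves the appeal to Fact~\ref{fact:3} and Fact~\ref{fact:1} implicit), but there is no substantive difference.
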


\begin{proof}
    Recall that 
    \begin{align*}
        \grdstd(F_1(\numeral{i}\numeral{j}\numeral{k})) &= \grdstd(\numeral{k} \times \lvert \numeral{j} \rvert \leq \lvert \numeral{i} \rvert)) \wedge \grdstd(\lvert \numeral{i} \rvert < (\numeral{k} + \numeral{1}) \times \lvert \numeral{j} \rvert)
    \end{align*}
%    Let $n_1$ denote the numeral $(\abs{\numeral{i}})^\intfolone$ and let $n_2$ denote the numeral $((\numeral{k} + \overline{1}) \times \abs{\numeral{j}})^\intfolone$.
    By the definition of grounding, $\grdstd(\lvert \numeral{i} \rvert < (\numeral{k} + \overline{1}) \times \lvert \numeral{j} \rvert) \equiv_s \bot$ if the relation $<$ does not hold between 
    $(\abs{\numeral{i}})^\intfolone$
%    $\overline{n_1}$ 
    and 
    $((\numeral{k} + \overline{1}) \times \abs{\numeral{j}})^\intfolone$
  %  $\overline{n_2}$, 
  which is the case when $\abs{i}\geq (k+1)\times  \lvert {j} \rvert$.
Given that $\numeral{j}=\numeral{0}$, $(k+1)\times  \lvert {j} \rvert=0$. For any integer $i$, $\abs{i}\geq 0$ due to the definition of an absolute value.
\end{proof}

%%%%%%%%%%%%%%%%%%%%%%%%%%%%%%%%%%%%%%%%%%%

\begin{lemma}
    \label{lem:gr:val:f2}
    Let $\intfolone$ be a standard interpretation, let $r$ be a precomputed term, and let $F_2(IJKZ)$ be the formula
    $$(I \times J \geq \overline{0} \wedge Z = K) \vee (I \times J < \overline{0} \wedge Z = -K)$$
    %Further, let $\numeral{i},\numeral{j},\numeral{k}$ be numerals such that $\abs{j}$ evenly divides $\abs{i}$ exactly $k$ times.
    Further, let $\numeral{i},\numeral{j},\numeral{k}$ be numerals such that $k$ is $\floor{\abs{i} / \abs{j} }$.
    Then $\grdstd(F_2(\numeral{i}\numeral{j}\numeral{k}r)) \equiv_s \top$ if $r$ is 
    the numeral $\numeral{round(i/j)}$ and $\grdstd(F_2(\numeral{i}\numeral{j}\numeral{k}r)) \equiv_s \bot$ otherwise.
    %an integer such that $r = \floor{i/j }$ if $i/j \geq 0$ and $r = \ceil{ i/j }$ if $i/j < 0$, and $\grdstd(F_2(ijkr)) \equiv_s \bot$ otherwise.
\end{lemma}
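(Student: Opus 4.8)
The plan is to substitute the numerals into $F_2$, push the grounding operator inward through the connectives, evaluate each resulting ground atom, and then split on the sign of $i \times j$ so that exactly one of the two disjuncts survives. First I would apply grounding condition~\ref{gr:condition:5} to distribute $\grdstd$ over the disjunction and the two conjunctions in $F_2(\numeral{i}\numeral{j}\numeral{k}r)$, obtaining
\begin{gather*}
\left(\grdstd(\numeral{i} \times \numeral{j} \geq \overline{0}) \wedge \grdstd(r = \numeral{k})\right) \vee \left(\grdstd(\numeral{i} \times \numeral{j} < \overline{0}) \wedge \grdstd(r = -\numeral{k})\right).
\end{gather*}

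Next I would evaluate the four ground subformulas. By Fact~\ref{fact:3}, $\grdstd(\numeral{i} \times \numeral{j} \geq \overline{0})$ is $\top$ exactly when the relation holds for $((\numeral{i} \times \numeral{j})^\intfolone, \overline{0}^\intfolone)$, i.e.\ when $i \times j \geq 0$ (recall $\intfolone$ interprets $\numeral{i} \times \numeral{j}$ as $\numeral{i \times j}$ by condition~\ref{c.interp.fourth}), and is $\bot$ otherwise; the strict comparison behaves symmetrically. By grounding condition~\ref{gr:condition:4}, together with the standard interpretation of negation (the term $-\numeral{k}$ abbreviates $\overline{0} - \numeral{k}$, which $\intfolone$ interprets as $\numeral{-k}$), the equality $\grdstd(r = \numeral{k})$ is $\top$ iff $r$ is $\numeral{k}$ and $\bot$ otherwise, and $\grdstd(r = -\numeral{k})$ is $\top$ iff $r$ is $\numeral{-k}$ and $\bot$ otherwise.

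I would then split into the two exhaustive and mutually exclusive cases $i \times j \geq 0$ and $i \times j < 0$. In the first case the left comparison grounds to $\top$ and the right to $\bot$, so the simplifications of Fact~\ref{fact:1} (points~\ref{l:1:1}, \ref{l:1:2}, and~\ref{l:1:4}) collapse the whole formula to $\grdstd(r = \numeral{k})$; in the second case it collapses symmetrically to $\grdstd(r = -\numeral{k})$. Hence in either case the grounding is strongly equivalent to $\top$ precisely when $r$ equals the numeral named by the surviving branch ($\numeral{k}$ or $\numeral{-k}$, respectively) and to $\bot$ otherwise.

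The final and genuinely load-bearing step is to identify the surviving numeral with $\numeral{round(i/j)}$, which is exactly the content of Lemma~\ref{lem:std:abs.val}: under the hypothesis $k = \floor{\abs{i}/\abs{j}}$, the numeral $\numeral{k}$ is $\numeral{round(i/j)}$ when $i \times j \geq 0$, and $\numeral{-k}$ is $\numeral{round(i/j)}$ when $i \times j < 0$. Invoking it in each case lets me replace the case-dependent target numeral by the uniform $\numeral{round(i/j)}$, yielding the claim. The only real subtlety is this sign-bookkeeping and the correct appeal to Lemma~\ref{lem:std:abs.val}; the connective manipulation and the atomic evaluations are routine.
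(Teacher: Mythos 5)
Your proposal is correct and follows essentially the same route as the paper's proof: distribute $\grdstd$ through the connectives, evaluate the ground comparisons and equalities to $\top$/$\bot$, collapse each sign case to $\grdstd(r = \numeral{k})$ or $\grdstd(r = \numeral{-k})$ via Fact~\ref{fact:1}, and close with Lemma~\ref{lem:std:abs.val}. The only difference is cosmetic — you case-split on the sign of $i \times j$ first whereas the paper splits first on whether $r$ is $\numeral{round(i/j)}$ — and your explicit note that $-\numeral{k}$ abbreviates $\overline{0}-\numeral{k}$ and is interpreted as $\numeral{-k}$ is a small point of care the paper glosses over.
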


\begin{proof}
    \begin{align*}
        \grdstd(F_2(\numeral{i}\numeral{j}\numeral{k}r))
        &=\grdstd\big((\numeral{i} \times \numeral{j} \geq \overline{0} \wedge r = \numeral{k}) \vee (\numeral{i} \times \numeral{j} < \overline{0} \wedge r = \numeral{-k})\big)\\
        &=\grdstd(\numeral{i} \times \numeral{j} \geq \overline{0} \wedge r = \numeral{k}) \vee gr(\numeral{i} \times \numeral{j} < \overline{0} \wedge r = \numeral{-k})\\
        &=\big(\grdstd(\numeral{i} \times \numeral{j} \geq \overline{0}) \wedge \grdstd(r = \numeral{k})\big) \vee 
        \big(\grdstd(\numeral{i} \times \numeral{j} < \overline{0}) \wedge \grdstd(r = \numeral{-k})\big)\\
    \end{align*}
    \noindent\emph{Case 1.} Assume $r$ is $\numeral{round(i/j)}$. 
    We need to show that $\grdstd(F_2(\numeral{i}\numeral{j}\numeral{k}r)) \equiv_s \top$.
    \\\noindent\emph{Case 1.1} Assume $i \times j \geq 0$. 
    By the definition of grounding, 
    $$
    \grdstd(F_2(\numeral{i}\numeral{j}\numeral{k}r)) = \big(\top \wedge \grdstd(r = \numeral{k})\big) \vee \big(\bot \vee \grdstd(r = \numeral{-k})\big)
    $$
    which entails (by Fact~\ref{fact:1}) that $\grdstd(F_2(\numeral{i}\numeral{j}\numeral{k}r)) \equiv_s \grdstd(r = \numeral{k})$.
    $\grdstd(r = \numeral{k})$ is $\top$ iff $r^\intfolone = \numeral{k}^\intfolone$ iff $\numeral{round(i/j)}^\intfolone = \numeral{k}^\intfolone$. 
    This last condition follows immediately from Lemma~\ref{lem:std:abs.val}.
    \\\noindent\emph{Case 1.2} Assume $i \times j < 0$. %Prove $gr(r= -k)$ is $\top$.
    By the definition of grounding, 
    $$
    \grdstd(F_2(\numeral{i}\numeral{j}\numeral{k}r)) = \big(\bot \wedge \grdstd(r = \numeral{k})\big) \vee \big(\top \vee \grdstd(r = \numeral{-k})\big)
    $$
    which entails (by Fact~\ref{fact:1}) that $\grdstd(F_2(\numeral{i}\numeral{j}\numeral{k}r)) \equiv_s \grdstd(r = \numeral{-k})$.
    $\grdstd(r = \numeral{-k})$ is $\top$ iff $r^\intfolone = \numeral{-k}^\intfolone$ iff $\numeral{round(i/j)}^\intfolone = \numeral{-k}^\intfolone$. 
    This last condition follows immediately from Lemma~\ref{lem:std:abs.val}.
    \\[5pt]
    \\\noindent\emph{Case 2.} Assume $r$ is not $\numeral{round(i/j)}$.
    We need to show that $gr(F_2(\numeral{i}\numeral{j}\numeral{k}r)) \equiv_s \bot$.
    \\\noindent\emph{Case 2.1} Assume $i \times j \geq 0$. 
    Lemma~\ref{lem:std:abs.val} implies that $\numeral{k}$ is $\numeral{round(i/j)}$.
    %Prove $gr(r=k)$ is $\bot$.
    As in Case 1.1, we derive that $\grdstd(F_2(\numeral{i}\numeral{j}\numeral{k}r)) \equiv_s \grdstd(r = \numeral{k})$. Thus,
    $\grdstd(r = \numeral{k})$ is $\bot$ iff $r^\intfolone \neq \numeral{k}^\intfolone$.
    As $r$ is not $\numeral{round(i/j)}$, $r^\intfolone\neq \numeral{round(i/j)}$. It follows that 
    $r^\intfolone\neq\numeral{k}^\intfolone$.    
    \\\noindent\emph{Case 2.2} Assume $i \times j < 0$. 
    Lemma~\ref{lem:std:abs.val} implies that $\numeral{-k}$ is $\numeral{round(i/j)}$.
    As in Case 1.2, we derive that  $\grdstd(F_2(\numeral{i}\numeral{j}\numeral{k}r)) \equiv_s \grdstd(r = \numeral{-k})$.
    Thus, $\grdstd(r = \numeral{-k})$ is $\bot$ iff 
    $r^\intfolone \neq \numeral{-k}^\intfolone$ or, in other words, $r^\intfolone \neq \numeral{-k}$.
    Just as in Case 1.2, $r$ is not $\numeral{round(i/j)}$, but $\numeral{-k}$ is.
\end{proof}

%%%%%%%%%%%%%%%%%%%%%%%%%%%%%%%%%%%%%%%%%%%

\begin{lemma}
    \label{lem:gr:val:f3}
    Let $\intfolone$ be a standard interpretation, let $r$ be a precomputed term, and let $F_3(IJKZ)$ be the formula
    $$(I \times J \geq \overline{0} \wedge Z = I - K \times J) \vee (I \times J \leq \overline{0} \wedge Z = I + K \times J)$$
    Further, let $\numeral{i},\numeral{j},\numeral{k}$ be numerals such that $k$ is $\floor{\abs{i} / \abs{j} }$.
    Then $\grdstd(F_3(\numeral{i}\numeral{j}\numeral{k}r)) \equiv_s \top$ if $r$ is 
    the numeral $\numeral{i - j \times round(i/j)}$ and $\grdstd(F_3(\numeral{i}\numeral{j}\numeral{k}r)) \equiv_s \bot$ otherwise.
\end{lemma}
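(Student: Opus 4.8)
The plan is to mirror the proof of Lemma~\ref{lem:gr:val:f2} essentially verbatim, exploiting the tight parallel between the division and modulo cases; the only genuinely new ingredient is an arithmetic identity relating the two forms of the remainder. First I would push the grounding operator through the connectives of $F_3$ using conditions~\ref{gr:condition:4} and~\ref{gr:condition:5} of the definition of grounding, obtaining
\begin{align*}
    \grdstd(F_3(\numeral{i}\numeral{j}\numeral{k}r)) =
    &\big(\grdstd(\numeral{i} \times \numeral{j} \geq \overline{0}) \wedge \grdstd(r = \numeral{i} - \numeral{k} \times \numeral{j})\big)\\
    &\vee \big(\grdstd(\numeral{i} \times \numeral{j} \leq \overline{0}) \wedge \grdstd(r = \numeral{i} + \numeral{k} \times \numeral{j})\big).
\end{align*}
By Fact~\ref{fact:3} the guard $\grdstd(\numeral{i} \times \numeral{j} \geq \overline{0})$ is $\top$ exactly when $i \times j \geq 0$ and $\bot$ otherwise, and symmetrically for $\leq$. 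I would then split into the same four cases as in Lemma~\ref{lem:gr:val:f2}: on the sign of $i \times j$, and on whether $r$ equals the target remainder $\numeral{i - j \times round(i/j)}$.

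The central observation, supplied by Lemma~\ref{lem:std:abs.val}, is that the target remainder coincides with the term appearing in whichever disjunct has a true guard. Concretely, when $i \times j \geq 0$ the first bullet of Lemma~\ref{lem:std:abs.val} gives $round(i/j) = k$, so $i - j \times round(i/j) = i - k \times j$, matching the first disjunct; when $i \times j < 0$ the second bullet gives $round(i/j) = -k$, so $i - j \times round(i/j) = i + k \times j$, matching the second disjunct. In each case, after applying Fact~\ref{fact:1} to collapse the disjunction with its $\bot$-guarded branch, the whole grounding reduces to $\grdstd(r = \numeral{i} - \numeral{k} \times \numeral{j})$ (resp. $\grdstd(r = \numeral{i} + \numeral{k} \times \numeral{j})$), whose value is $\top$ iff $r^\intfolone$ equals the corresponding numeral. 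Conditions~\ref{c.interp.third}-\ref{c.interp.fourth} of standard interpretations ensure $(\numeral{i} - \numeral{k} \times \numeral{j})^\intfolone = \numeral{i - k \times j}$, so this truth value is exactly the assertion that $r$ is the target remainder.

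I expect the main obstacle to be the boundary case $i \times j = 0$, i.e. $i = 0$. Because the lemma states $F_3$ with $\leq$ in the second guard (in contrast to the $<$ used in the definition of $val_t$ in Section~\ref{sec:prelim:vt}), both disjuncts are simultaneously active when $i = 0$, so I must check that the overlap is harmless rather than producing a spurious extra value. Here $k = \floor{\abs{i}/\abs{j}} = 0$, so both $i - k \times j$ and $i + k \times j$ collapse to $i = 0$, which is also $i - j \times round(i/j)$; hence the two disjuncts agree and no inconsistency arises. Apart from this verification, the remaining three cases are a routine transcription of the argument for Lemma~\ref{lem:gr:val:f2}, with the constant term $K$ (resp. $-K$) replaced throughout by $I - K \times J$ (resp. $I + K \times J$).
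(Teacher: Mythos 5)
Your proposal is correct and follows essentially the same route as the paper's own proof: push $\grdstd$ through the connectives, use Lemma~\ref{lem:std:abs.val} to identify $\numeral{k}$ (resp.\ $\numeral{-k}$) with $\numeral{round(i/j)}$ according to the sign of $i \times j$, collapse the $\bot$-guarded disjunct via Fact~\ref{fact:1}, and evaluate the remaining equality using the standard-interpretation conditions. Your explicit treatment of the boundary case $i = 0$ (where the lemma's $\leq$ makes both guards true simultaneously) is in fact more careful than the paper, whose proof silently restates $F_3$ with a strict inequality in the second guard; your check that the two overlapping disjuncts agree on the value $\numeral{0}$ cleanly resolves that discrepancy.
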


\begin{proof}
    \begin{align*}
        \grdstd(F_3(\numeral{i}\numeral{j}\numeral{k}r)) &= \grdstd\big((\numeral{i} \times \numeral{j} \geq \overline{0} \wedge r = \numeral{i} - \numeral{k} \times \numeral{j}) \vee (\numeral{i} \times \numeral{j} < \overline{0} \wedge r = \numeral{i} + \numeral{k} \times \numeral{j})\big)\\
        &= \big(\grdstd(\numeral{i} \times \numeral{j} \geq \overline{0}) \wedge \grdstd(r = \numeral{i} - \numeral{k} \times \numeral{j})\big) \vee \big(\grdstd(\numeral{i} \times \numeral{j} < \overline{0}) \wedge \grdstd(r = \numeral{i} + \numeral{k} \times \numeral{j})\big)
    \end{align*}
    \emph{Case 1.} Assume $i \times j \geq 0$.
    It follows that $\grdstd(F_3(\numeral{i}\numeral{j}\numeral{k}r)) \equiv_s \grdstd(r = \numeral{i} - \numeral{k} \times \numeral{j})$.
    From Lemma~\ref{lem:std:abs.val} it follows that $\numeral{k}$ is $\numeral{round(i/j)}$.
    %\emph{Case 1.} Assume $r$ is $\numeral{i - j \times round(i/j)}$.
    Thus, by the definition of grounding it follows that \hbox{$\grdstd(F_3(\numeral{i}\numeral{j}\numeral{k}r)) \equiv_s \top$} if $r^\intfolone = (\numeral{i} - \numeral{round(i/j)} \times \numeral{j})^\intfolone$, and $\grdstd(F_3(\numeral{i}\numeral{j}\numeral{k}r)) \equiv_s \bot$ otherwise.
    Consequently, \hbox{$\grdstd(F_3(\numeral{i}\numeral{j}\numeral{k}r)) \equiv_s \top$} if $r$ is 
    the numeral $\numeral{i - j \times round(i/j)}$ and $\grdstd(F_3(\numeral{i}\numeral{j}\numeral{k}r)) \equiv_s \bot$ otherwise.
    
    \emph{Case 2.} Assume $i \times j < 0$.
    It follows that $\grdstd(F_3(\numeral{i}\numeral{j}\numeral{k}r)) \equiv_s \grdstd(r = \numeral{i} + \numeral{k} \times \numeral{j})$.
    From Lemma~\ref{lem:std:abs.val} it follows that $\numeral{-k}$ is $\numeral{round(i/j)}$.
    Thus, $\numeral{-k}^\intfolone = \numeral{round(i/j)}^\intfolone$.
    By properties over numerals, $\numeral{k} = \numeral{-round(i/j)}$.
    Therefore, it follows from the definition of grounding that $\grdstd(F_3(\numeral{i}\numeral{j}\numeral{k}r)) \equiv_s \top$ if $r^\intfolone = (\numeral{i} + \numeral{-round(i/j)} \times \numeral{j})^\intfolone$ and $\grdstd(F_3(\numeral{i}\numeral{j}\numeral{k}r)) \equiv_s \bot$ otherwise.
    Consequently, \hbox{$\grdstd(F_3(\numeral{i}\numeral{j}\numeral{k}r)) \equiv_s \top$} if $r$ is 
    the numeral $\numeral{i - j \times round(i/j)}$ and $\grdstd(F_3(\numeral{i}\numeral{j}\numeral{k}r)) \equiv_s \bot$ otherwise.
\end{proof}

%%%%%%%%%%%%%%%%%%%%%%%%%%%%%%%%%%%%%%%%%%%

\paragraph{Proof of Proposition~\ref{prop:gr:val}}

\begin{proof}
    We proceed by structural induction across the distinct cases of forms of terms.
    For the duration of this proof, we write $gr$ to denote  $\grdstd$.
    Furthermore, we do not distinguish numerals from integers in this proof using these terms interchangeably, due to the fact that numerals and integers are in one to one correspondence.
    Thus, we will abuse the notation and drop the overline symbol denoting a numeral.
    \\[5pt]
    \\\noindent\emph{Case 1}: Term $t$ is a numeral, symbolic constant, $\mathit{inf}$, or $\mathit{sup}$. 
    Then,  $val_t(r)$ is $r = t$ and $[t] = \{t\}$.
    By definition, $gr(r = t) = \top$ if $r^\intfolone = t^\intfolone$ and $\bot$ otherwise.
    Given that $I$ is a standard interpretation, $r^I = t^I$ if and only if $r \in [t]$.
    \\[5pt]
    %, with an induction hypothesis property that the formula $gr(val_t(r))$ is strongly equivalent to $\top$ if $r \in [t]$ and to $\bot$ otherwise
    
    In the remainder of the proof, we say that a term $t$ has the induction hypothesis property when 
     the formula $gr(val_{t}(r))$ is strongly equivalent to $\top$ if $r \in [t]$ and to $\bot$ otherwise.
    \\\noindent\emph{Case 2}: Term $t$ is $\lvert t_1 \rvert$, where $t_1$ has the induction hypothesis property.    
    By definition, $val_t(r)$ is $\exists I (val_{t_1}(I) \wedge r = \vert I \rvert))$. Then,
 \begin{align*}
        gr(val_t(r)) &=gr(\exists I (val_{t_1}(I) \wedge r = \vert I \rvert))\\ 
        &= \{gr(val_{t_1}(i)) \wedge gr(r = \lvert i \rvert) \mid i \in |\I|^{\sortint}\}^\vee\\
        &= \big\{
        (gr(val_{t_1}(j)) \wedge gr(r = \lvert j \rvert)),\\
        &~~~~~~~(gr(val_{t_1}(k))\wedge gr(r = \lvert k \rvert)) \mid j,k \in |\I|^{\sortint},j\in[t_1],k\not\in[t_1]\big\}^\vee
        % &= \Big(\{gr(val_{t_1}(i)) \wedge gr(r = \lvert i \rvert) \mid i \in |\I|^{\sortint}, i \in [t_1]\}\cup \{gr(r = \lvert i \rvert) \wedge gr(val_{t_1}(i)) \mid i \in \universe{\intfolone}{\sortint}, i \not\in [t_1]\}\Big)^\vee \\            
        % &\equiv_s \Big(\{\top \wedge gr(r = \lvert i \rvert) \mid i \in \universe{\intfolone}{\sortint}, i \in [t_1]\} \cup
        % \{\bot \wedge gr(r = \lvert i \rvert) \mid i \in \universe{\intfolone}{\sortint}, i \not\in [t_1]\}\Big)^\vee \hfill \text{(Induction hypothesis)}\\
        % &\equiv_s \{gr(r = \lvert i \rvert) \mid i \in \universe{\intfolone}{\sortint}, i \in [t_1]\}^\vee
    \end{align*}
% Due to the fact that numerals and integers are in one to one correspondence, in the remainder of the proof we will abuse the notation and drop the overline symbol denoting a numeral (we also use terms numerals and integers interchangeably). So for example, the concluding line of the last derivation will look as follows utilizing this convention:
    
%     \begin{align*}
%         gr(val_t(r)) 
%         &= \big\{
%         (gr(val_{t_1}(j)) \wedge gr(r = \lvert j \rvert)),\\
%         &~~~~~~~(gr(val_{t_1}(k))\wedge gr(r = \lvert k \rvert)) \mid j,k \in |\I|^{\sortint},j\in[t_1],k\not\in[t_1]\big\}^\vee.
%         % &= \Big(\{gr(val_{t_1}(i)) \wedge gr(r = \lvert i \rvert) \mid i \in |\I|^{\sortint}, i \in [t_1]\}\cup \{gr(r = \lvert i \rvert) \wedge gr(val_{t_1}(i)) \mid i \in \universe{\intfolone}{\sortint}, i \not\in [t_1]\}\Big)^\vee \\            
%         % &\equiv_s \Big(\{\top \wedge gr(r = \lvert i \rvert) \mid i \in \universe{\intfolone}{\sortint}, i \in [t_1]\} \cup
%         % \{\bot \wedge gr(r = \lvert i \rvert) \mid i \in \universe{\intfolone}{\sortint}, i \not\in [t_1]\}\Big)^\vee \hfill \text{(Induction hypothesis)}\\
%         % &\equiv_s \{gr(r = \lvert i \rvert) \mid i \in \universe{\intfolone}{\sortint}, i \in [t_1]\}^\vee
%     \end{align*}
    By the fact that $t_1$ has the inductive property, it follows 
    that
    \begin{align*}
        gr(val_t(r)) 
        &\equiv_s\big\{
        (\top \wedge gr(r = \lvert j \rvert)),\\
        &~~~~~~~~( \bot\wedge gr(r = \lvert k \rvert)) \mid j,k \in |\I|^{\sortint},j\in[t_1],k\not\in[t_1]\big\}^\vee
     \end{align*}   
     In turn, this formula is strongly equivalent the following formula, by  Fact~\ref{fact:1} (conditions \ref{l:1:1}, \ref{l:1:2}) and Fact~\ref{fact:5:harrison}).
     %Corollary~1 by~\citeN{harlifpeval17} (intuitively, stating that substituting strongly equivalent subformulas of a formula results in strongly equivalent formulas).
    \begin{align}
        \label{eq:prop1:case2}
        \big\{
        gr(r = \lvert j \rvert)
         \mid j \in |\I|^{\sortint},j\in[t_1]\big\}^\vee.
    \end{align}   
    \\[5pt]
    \\\noindent Case 2.1: $r \in [t]$, where $t = \lvert t_1 \rvert$. 
    
    Our goal is to show that~\eqref{eq:prop1:case2} is strongly equivalent to $\top$, which holds
    \\iff, for an arbitrary \htinterp $\hti$, $\hti \models$~\eqref{eq:prop1:case2}\hfill{(Fact~\ref{fact:1}, point 3)}
%    \\iff, for an arbitrary \htinterp $\hti$, $\hti \models H$ for some $H$ in~\eqref{eq:prop1:case2}\hfill{(Definition)}
    \\iff, for an arbitrary \htinterp $\hti$, \\
    \phantom{iff, }  $\hti \models gr(r = \lvert j \rvert)$ for some $j \in |\I|^{\sortint},j\in[t_1]$
    \\iff, $gr(r = \lvert j \rvert)=\top$ for some $j \in |\I|^{\sortint},j\in[t_1]$
    \\iff, for some numeral $j \in [t_1]$, $r^\intfolone = \lvert j \rvert^\intfolone$\hfill{(Definition of grounding)}
    \\iff, for some numeral $j \in [t_1]$, $r = \lvert j \rvert$\hfill{(Conditions~\ref{c.interp.third} and~\ref{c.interp.fifth} of standard interpretations)}
    
    Now, by our assumption that $r \in [t]$ it follows that $r \in \{\lvert j \rvert \mid j \in [t_1]\}$.
    Hence, there exists a $j \in [t_1]$ such that $r = \lvert j \rvert$, and our goal is satisfied.
    \\[5pt]
    \\\noindent Case 2.2: $r \not\in [t]$, where $t = \lvert t_1 \rvert$. 
    
    Our goal is to show that~\eqref{eq:prop1:case2} is strongly equivalent to $\bot$, which holds
    \\iff, for an arbitrary \htinterp $\hti$, $\hti \not\models$~\eqref{eq:prop1:case2}\hfill{(Fact~\ref{fact:1}, point 4)}
    \\iff, for an arbitrary \htinterp $\hti$, $\hti \not\models H$ for any $H$ in~\eqref{eq:prop1:case2}\hfill{(Definition)}
    \\iff, for an arbitrary \htinterp $\hti$, $\hti \not\models gr(r = \lvert j \rvert)$ for any $j \in |\I|^{\sortint},j\in[t_1]$
     \\iff, $gr(r = \lvert j \rvert)=\bot$ for any $j \in |\I|^{\sortint},j\in[t_1]$
    \\iff, for any numeral $j \in [t_1]$, $r^\intfolone \neq \lvert j \rvert^\intfolone$\hfill{(Definition of grounding)}
    \\iff, for any numeral $j \in [t_1]$, $r \neq \lvert j \rvert$\hfill{(Conditions~\ref{c.interp.third} and~\ref{c.interp.fifth} of standard interpretations)}
    
    Now, by our assumption that $r \not\in [t]$ it follows that $r \not\in \{\lvert j \rvert \mid j \in [t_1]\}$. Consequently, for any  $j$ in $[t_1]$,  $r\neq|j|$.
    \\[5pt]
    \\\noindent\emph{Case 3}: Term $t$ is $t_1 \circ t_2$, where $\circ$ is one of the operation names $+\;$, $ -\;$,  $ \times$, and $t_1$ and $t_2$ have the induction hypothesis property.
    Then, $val_t(r)$ is $\exists I J (r = I \circ J \wedge val_{t_1}(I) \wedge val_{t_2}(J))$, and
    \begin{align*}
        gr(val_t(r)) &= gr(\exists I J (r = I \circ J \wedge val_{t_1}(I) \wedge val_{t_2}(J)))\\
        &= \{ gr(\exists J (r = i \circ J \wedge val_{t_1}(i) \wedge val_{t_2}(J))) \mid i \in \universe{\intfolone}{\sortint} \}^\vee\\
        &= \big\{ \{ gr(r = i \circ j) \wedge gr(val_{t_1}(i)) \wedge gr(val_{t_2}(j)) \mid i \in \universe{\intfolone}{\sortint} \}^\vee \mid j \in \universe{\intfolone}{\sortint} \big\}^\vee\\
        &= \big\{ \{(gr(r = i_1 \circ j) \wedge gr(val_{t_1}(i_1)) \wedge gr(val_{t_2}(j))),\\
        &~~~~~~~~~~(gr(r = i_2 \circ j) \wedge gr(val_{t_1}(i_2)) \wedge gr(val_{t_2}(j))) \mid i_1, i_2 \in \universe{\intfolone}{\sortint}, i_1 \in [t_1], i_2 \not\in [t_1] \}^\vee \mid j \in \universe{\intfolone}{\sortint} \big\}^\vee
        % &= \{ gr(\exists J (r = i \circ J \wedge val_{t_1}(i) \wedge val_{t_2}(J))) \mid i \in \universe{\intfolone}{\sortint} \}^\vee\\
        % gr(val_t(r)) &= \{gr(r = i \circ j) \wedge gr(val_{t_1}(i)) \wedge gr(val_{t_2}(j)) \mid i, j \in \universe{\intfolone}{\sortint}\}^\vee \\
        % &= \big\{
        % (gr(r = i \circ j) \wedge gr(val_{t_1}(i)) \wedge gr(val_{t_2}(j))),\\
        % &~~~~~~~(gr(r = i \circ j') \wedge gr(val_{t_1}(i)) \wedge gr(val_{t_2}(j'))),\\
        % &~~~~~~~(gr(r = i' \circ j) \wedge gr(val_{t_1}(i')) \wedge gr(val_{t_2}(j))),\\
        % &~~~~~~~(gr(r = i' \circ j') \wedge gr(val_{t_1}(i')) \wedge gr(val_{t_2}(j')))\\
        % &~~~~~~~~~~~~ \mid 
        % \langle i, i', j, j' \rangle \in \universe{\intfolone}{\sortint} \times \universe{\intfolone}{\sortint} \times \universe{\intfolone}{\sortint} \times \universe{\intfolone}{\sortint},
        % i \in [t_1], i' \not\in [t_1], j \in [t_2], j' \not\in [t_2]\big\}^\vee
    \end{align*}
    By the fact that $t_1$ has the inductive property, Fact~\ref{fact:1} (conditions \ref{l:1:1}, \ref{l:1:2}), and Fact~\ref{fact:5:harrison},
    %Corollary~1 by~\citeN{harlifpeval17}, 
    it follows that
    \begin{align*}
        gr(val_t(r))
        %&\equiv_s \big\{ \{(gr(r = i_1 \circ j) \wedge \top \wedge gr(val_{t_2}(j))),\\        
        %&~~~~~~~~~~(gr(r = i_2 \circ j) \wedge \bot \wedge gr(val_{t_2}(j))) \mid i_1 \in \universe{\intfolone}{\sortint}, i_2 \in \universe{\intfolone}{\sortint}, i_1 \in [t_1], i_2 \not\in [t_1] \}^\vee \mid j \in \universe{\intfolone}{\sortint} \big\}^\vee\\        
        &\equiv_s \big\{ \{gr(r = i_1 \circ j) \wedge gr(val_{t_2}(j)) \mid i_1 \in \universe{\intfolone}{\sortint}, i_1 \in [t_1] \}^\vee \mid j \in \universe{\intfolone}{\sortint} \big\}^\vee
    \end{align*}
    % \begin{align*}
    %     gr(val_t(r)) &\equiv_s \big\{
    %     (gr(r = i \circ j) \wedge \top \wedge gr(val_{t_2}(j))),\\
    %     &~~~~~~~(gr(r = i \circ j') \wedge \top \wedge gr(val_{t_2}(j'))),\\
    %     &~~~~~~~(gr(r = i' \circ j) \wedge \bot \wedge gr(val_{t_2}(j))),\\
    %     &~~~~~~~(gr(r = i' \circ j') \wedge \bot \wedge gr(val_{t_2}(j')))\\
    %     &~~~~~~~~~~~~ \mid 
    %     \langle i, i', j, j' \rangle \in \universe{\intfolone}{\sortint} \times \universe{\intfolone}{\sortint} \times \universe{\intfolone}{\sortint} \times \universe{\intfolone}{\sortint},
    %     i \in [t_1], i' \not\in [t_1], j \in [t_2], j' \not\in [t_2]\big\}^\vee\\
    %     &\equiv_s \big\{
    %     (gr(r = i \circ j) \wedge gr(val_{t_2}(j))), (gr(r = i \circ j') \wedge gr(val_{t_2}(j'))),\\
    %     &~~~~~~~~~~~~ \mid 
    %     \langle i, j, j' \rangle \in \universe{\intfolone}{\sortint} \times \universe{\intfolone}{\sortint} \times \universe{\intfolone}{\sortint},
    %     i \in [t_1], j \in [t_2], j' \not\in [t_2]\big\}^\vee
    % \end{align*}
    In turn, this formula can be rewritten as
    \begin{align*}
    \Big\{ \big\{
        &\big(gr(r = i_1 \circ j_1) \wedge gr(val_{t_2}(j_1))\big), \big(gr(r = i_1 \circ j_2) \wedge gr(val_{t_2}(j_2))\big) \bmid  i_1 \in \universe{\intfolone}{\sortint}, i_1 \in [t_1]\big\}^\vee \\
        &~~~~~~~~~~~~~~~~~~~~~~~~~~~~~~~~~~~~~~~~\bmid  j_1,  j_2 \in \universe{\intfolone}{\sortint},
        j_1 \in [t_2], j_2 \not\in [t_2]\Big\}^\vee 
    \end{align*}
    By the fact that $t_2$ has the inductive property,   Fact~\ref{fact:1} (conditions \ref{l:1:1}, \ref{l:1:2}), and Fact~\ref{fact:5:harrison},
    %Corollary~1 by~\citeN{harlifpeval17}, 
    it follows that
    \begin{align*}
        gr(val_t(r)) 
        %&\equiv_s \big\{ \{
        %(gr(r = i \circ j_1) \wedge \top), (gr(r = i \circ j_2) \wedge \bot),\\
        %&~~~~~~~~~~~~ \mid  i \in \universe{\intfolone}{\sortint}, i \in [t_1]\}^\vee \mid  j_1 \in \universe{\intfolone}{\sortint},  j_2 \in \universe{\intfolone}{\sortint},
        %j_1 \in [t_2], j_2 \not\in [t_2]\big\}^\vee \\
        &\equiv_s \big\{ \{
        gr(r = i_1 \circ j_1) \mid  i_1 \in \universe{\intfolone}{\sortint}, i_1 \in [t_1]\}^\vee \mid  j_1 \in \universe{\intfolone}{\sortint}, j_1 \in [t_2] \big\}^\vee 
    \end{align*}
    By Fact~\ref{fact:2}, $gr(val_t(r))$ is strongly equivalent to (where we rename $i_1$ by $i$ and $j_1$ by $j$ for readability)
    \begin{gather}
        \label{eq:prop1:case3}
        \{gr(r = i \circ j) \mid i, j \in \universe{\intfolone}{\sortint}, i \in [t_1], j \in [t_2]\}^\vee
    \end{gather}
    \\\noindent Case 3.1: $r \in [t]$, where $t = t_1 \circ t_2$. 
    
    Our goal is to show that~\eqref{eq:prop1:case3} is strongly equivalent to $\top$, which holds
    \\iff, for an arbitrary \htinterp $\hti$, $\hti \models gr(r = i \circ j)$ for some 
    \hbox{$i, j \in \universe{\intfolone}{\sortint}$}, $i \in [t_1], j \in [t_2]$ 
    \\iff, for some pair of numerals $i, j$ such that $i \in [t_1], j \in [t_2]$, $gr(r = i \circ j) = \top$
    \\iff, for some pair of numerals $i, j$ such that $i \in [t_1], j \in [t_2]$, $r^\intfolone = (i \circ j)^\intfolone$\hfill{(Definition of grounding and satisfaction)}
    \\iff, for some pair of numerals $i, j$ such that $i \in [t_1], j \in [t_2]$, $r = i \circ j$\hfill{(Conditions~\ref{c.interp.third} and~\ref{c.interp.fourth} of standard interpretations)}\\
    Now, by our assumption that $r \in [t]$ it follows that $r \in \{{i \circ j} \mid i \in [t_1], j \in [t_2]\}$. 
    Hence, there exists  $i \in [t_1]$ and  $j \in [t_2]$  such that $r = i \circ j$, and the goal immediately follows.
    \\\noindent Case 3.2: $r \not\in [t]$, where $t = t_1 \circ t_2$.
    Our goal is to show that~\eqref{eq:prop1:case3} is strongly equivalent to $\bot$, which holds
    \\iff, for an arbitrary \htinterp $\hti$, $\hti \not\models gr(r = i \circ j)$ for any 
    \hbox{$i, j \in \universe{\intfolone}{\sortint}$}, $i \in [t_1], j \in [t_2]$ 
    \\iff, for any pair of numerals $i, j$ such that $i \in [t_1], j \in [t_2]$, $gr(r = i \circ j) = \bot$
    \\iff, for any pair of numerals $i, j$ such that $i \in [t_1], j \in [t_2]$, $r^\intfolone \neq (i \circ j)^\intfolone$\hfill{(Definition of grounding and satisfaction)}
    \\iff, for any pair of numerals $i, j$ such that $i \in [t_1], j \in [t_2]$, $r \neq i \circ j$\hfill{(Conditions~\ref{c.interp.third} and~\ref{c.interp.fourth} of standard interpretations)}\\
    Now, by our assumption that $r \not\in [t]$ it follows that $r \not\in \{{i \circ j} \mid i \in [t_1], j \in [t_2]\}$. 
    Consequently, for any $i \in [t_1]$ and  $j \in [t_2]$ it follows that $r \neq i \circ j$.
    \\[5pt]
    \\\noindent\emph{Case 4}: Term $t$ is $(t_1..t_2)$ where $t_1$ and $t_2$ have the induction hypothesis property.
    Then, $val_t(r)$ is $\exists I J K (r = K \wedge I \leq K \wedge K \leq J \wedge val_{t_1}(I) \wedge val_{t_2}(J))$, and
    \begin{align*}
        &gr(val_t(r)) = \\
        &~~~\big\{ \big\{ \{ gr(r = k) \wedge gr(i \leq k) \wedge gr(k \leq j) \wedge gr(val_{t_1}(i)) \wedge gr(val_{t_2}(j)) \mid k \in \universe{\intfolone}{\sortint} \}^\vee\\ 
        &~~~~~\mid j \in \universe{\intfolone}{\sortint} \big\}^\vee\\ 
        &~~~~~~~~~\mid i \in \universe{\intfolone}{\sortint}\big\}^\vee
    \end{align*}
    Using the same style of transformations and line of reasoning as in the prior cases and the fact that~$t_1$ and~$t_2$ have the inductive property, we are 
    able to show that $gr(val_t(r))$ is strongly equivalent to 
    $$
    \big\{gr(r = k) \wedge gr(i \leq k) \wedge gr(k \leq j) \bmid  i, j, k \in \universe{\intfolone}{\sortint}, i \in [t_1], j \in [t_2]\big\}^\vee
    $$
    Every element in this infinitary formula belongs to one of the following sets
    \begin{gather}
        \label{eq:prop1:case4:in}
        \{gr(r = k) \wedge gr(i \leq k) \wedge gr(k \leq j) \mid i, j, k \in \universe{\intfolone}{\sortint}, i \in [t_1], j \in [t_2], i \leq k \leq j\}
    \end{gather}
    and 
    \begin{gather}
        \label{eq:prop1:case4:notin}
        \{gr(r = k) \wedge gr(i \leq k) \wedge gr(k \leq j) \mid i, j, k \in \universe{\intfolone}{\sortint}, i \in [t_1], j \in [t_2], \neg(i \leq k \leq j)\}
    \end{gather}

By Fact~\ref{fact:3} and the definition of HT-satisfaction,
    $$
    gr(i \leq k) \wedge gr(k \leq j) \equiv_s \top
    $$
    iff $i \leq k \leq j$.
    Similarly,
    $$
    gr(i \leq k) \wedge gr(k \leq j) \equiv_s \bot
    $$
    iff $\neg(i \leq k \leq j)$.

    Thus, every formula in~\eqref{eq:prop1:case4:in} is strongly equivalent to $gr(r = k)$, and every formula in~\eqref{eq:prop1:case4:notin} is strongly equivalent to $\bot$.
    It follows that 
    \begin{gather}
        \label{eq:prop1:case4:final}
        gr(val_t(r)) \equiv_s \{gr(r = k) \mid i, j, k \in \universe{\intfolone}{\sortint}, i \in [t_1], j \in [t_2], i \leq k \leq j\}^\vee
    \end{gather}

    The following chain of reasoning mirrors the proof by cases ($r \in [t]$, $r \not\in [t]$) detailed earlier.
    First note that by the definition of the values of $t$ ($t$ is $t_1..t_2$), $r \in [t]$ if $r$ belongs to the set of numerals
    \begin{gather}
        \label{eq:val:set:interval}
        \{m \mid n_1 \in [t_1], n_2 \in [t_2], n_1 \leq m \leq n_2\}
    \end{gather}
    and $r \not\in [t]$ if $r$ does not belong to~\eqref{eq:val:set:interval}.
    
    It follows from~\eqref{eq:prop1:case4:final} that $gr(val_t(r)) \equiv_s \top$ when $gr(r = k)=\top$ for some triple of numerals $i,j,k$ such that $i \in [t_1], j \in [t_2]$, $i \leq k \leq j$, and $gr(val_t(r)) \equiv_s \bot$ otherwise.
    The condition $gr(r = k)=\top$ for some triple of numerals $i,j,k$ such that $i \in [t_1], j \in [t_2]$, $i \leq k \leq j$ holds when $r^\intfolone = k^\intfolone$.
    Therefore, $gr(val_t(r)) \equiv_s \top$ when $r$ belongs to~\eqref{eq:val:set:interval} (take $k$ to be $r$) and $gr(val_t(r)) \equiv_s \bot$, otherwise.
    Consequently, $gr(val_t(r)) \equiv_s \top$ if $r \in [t]$, and $gr(val_t(r)) \equiv_s \bot$ otherwise.
    \\[5pt]
    \\\noindent\emph{Case 5}: Term $t$ is $t_1/t_2$, where $t_1$ and $t_2$ have the induction hypothesis property.
    Then, $val_t(r)$ is $\exists I J K(val_{t_1}(I) \wedge val_{t_2}(J) \wedge F_1(IJK) \wedge F_2(IJKr)$, and
    \begin{align*}
        &gr(val_t(r)) = \\
        &~~~\big\{ \big\{ \{ gr(val_{t_1}(i)) \wedge gr(val_{t_2}(j)) \wedge gr(F_1(ijk)) \wedge gr(F_2(ijkr)) \bmid k \in \universe{\intfolone}{\sortint} \}^\vee\\ 
        &~~~~~\bmid j \in \universe{\intfolone}{\sortint} \big\}^\vee\\ 
        &~~~~~~~~~\bmid i \in \universe{\intfolone}{\sortint}\big\}^\vee
    \end{align*}
    By Fact~\ref{fact:2},
    \begin{align*}
        gr(val_t(r)) &\equiv_s  \big\{ gr(val_{t_1}(i)) \wedge gr(val_{t_2}(j)) \wedge gr(F_1(ijk)) \wedge gr(F_2(ijkr)) \bmid i,j,k \in  \universe{\intfolone}{\sortint}\big\}^\vee
    \end{align*}
    Since $t_1$ and $t_2$ have the inductive property, we can use analogous reasoning to previous cases to conclude that
    \begin{gather}
        gr(val_t(r)) \equiv_s  \big\{ gr(F_1(ijk)) \wedge gr(F_2(ijkr)) \bmid i,j,k \in  \universe{\intfolone}{\sortint}, i \in [t_1], j \in [t_2] \big\}^\vee
        \label{eq:prop1:case5}
    \end{gather}
  %   Recall that by the definition of the values of $t$ ($t$ is $t_1/t_2$), the condition $r \in [t]$ holds when  $r$ belongs to the set of numerals
  %   \begin{gather}
  %       \label{eq:val:set:integer:division}
  %       \{round(n_1/n_2) \mid n_1 \in [t_1], n_2 \in [t_2], n_2 \neq 0\}
  %   \end{gather}
  %   and the condition $r \not\in [t]$ holds when $r$ does not belong to~\eqref{eq:val:set:integer:division}. We now illustrate that when  $r \in [t]$, then $gr(val_t(r)) \equiv_s \top$ and 
  % $gr(val_t(r)) \equiv_s \bot$, otherwise  (when  $r \not \in [t]$).

    We now illustrate that $gr(val_t(r)) \equiv_s \top$, when $r \in [t]$. 
      
    1.     It follows from~\eqref{eq:prop1:case5} that $gr(val_t(r)) \equiv_s \top$ iff $gr(F_1(ijk)) \equiv_s \top$ and $gr(F_2(ijkr)) \equiv_s \top$ for some triple  $i,j,k$ of numerals such that $i \in [t_1]$ and $j \in [t_2]$.
    
    2. From Lemma~\ref{lem:gr:val:f1} it follows that $gr(F_1(ijk)) \equiv_s \top$ when $k=\floor{\abs{i} / \abs{j} }$.
        
    3. By 1 and 2,    $gr(val_t(r)) \equiv_s \top$ if $gr(F_2(ijkr)) \equiv_s \top$ for some  $i,j,k$ triple of numerals  such that $i \in [t_1]$, $j \in [t_2]$, and $k=\floor{\abs{i} / \abs{j} }$. 
       
    4.    From Lemma~\ref{lem:gr:val:f2} it follows that $gr(F_2(ijkr)) \equiv_s \top$ when $r$ is 
        the numeral $round(i/j)$ for some triple  $i,j,k$ of numerals  such that  $k=\floor{\abs{i} / \abs{j} }$.
        
    5. By 3 and 4,  $gr(val_t(r)) \equiv_s \top$ when there is some triple  $i,j,k$ of numerals  such that $i \in [t_1]$, $j \in [t_2]$,  and $r$ is the numeral $round(i/j)$.

    6. By the definitions of the values of $t$, when $r\in[t]$, then there exist numerals $i,j$ such that $i \in [t_1]$, $j \in [t_2]$, $j \neq 0$, and $r$ is the numeral $round(i/j)$. 
      
    7. By 5 and 6,  $gr(val_t(r)) \equiv_s \top$  when $r \in [t]$.
    
    We now illustrate that $gr(val_t(r)) \equiv_s \bot$, when  $r \not \in [t]$.
    
    1. It follows from~\eqref{eq:prop1:case5} that $gr(val_t(r)) \equiv_s \bot$ when for all triples $i,j,k$ of numerals such that $i \in [t_1]$ and $j \in [t_2]$, either $gr(F_1(ijk)) \equiv_s \bot$ or $gr(F_2(ijkr)) \equiv_s \bot$.
    
    Consider an arbitrary triple $i,j,k$ of numerals such that $i \in [t_1]$ and $j \in [t_2]$.
    
    2.  From Lemma~\ref{lem:gr:val:f1} it follows that $gr(F_1(ijk)) \equiv_s \bot$ when $k \neq \floor{\abs{i} / \abs{j} }$.
    
    3. From Lemma~\ref{cor:gr:val:f1} it follows that $gr(F_1(ijk)) \equiv_s \bot$ when $j=0$.
    
    From now on, we consider the case when $j\neq 0$ and $k=\floor{\abs{i} / \abs{j} }$.
    
    4. Under these assumptions it is sufficient to show that $gr(F_2(ijkr)) \equiv_s \bot$ to conclude that
    $gr(val_t(r)) \equiv_s \bot$ (see 1,2,3).
    
    5. By the definition of the value of $t$, when $r\not\in[t]$, then there do not exist numerals $i,j$ such that $i \in [t_1]$, $j \in [t_2]$, $j \neq 0$, and $r$ is the numeral $round(i/j)$. 
    
    6. From Lemma~\ref{lem:gr:val:f2}, %and step {\color{blue} 4}, 
    we derive that 
    $gr(F_2(ijkr)) \equiv_s \bot$ in case % $k=\floor{\abs{i} / \abs{j} }$ %and $j \neq 0$
    %and
    $r$ is not the numeral $round(i/j)$.
    
    7. By 4, { 5} and 6, we conclude that $gr(val_t(r)) \equiv_s \bot$ when $r \not\in [t]$.
    \\[5pt]
    \\\noindent\emph{Case 6}:
   
    Term $t$ is $t_1 \setminus t_2$, where $t_1$ and $t_2$ have the induction hypothesis property.
    Then, $val_t(r)$ is $\exists I J K\big(val_{t_1}(I) \wedge val_{t_2}(J) \wedge F_1(IJK) \wedge F_3(IJKr)\big)$.
    By similar reasoning to that employed in Case 5, we  conclude that
    \begin{gather}
        \label{eq:prop1:case6}
        gr(val_t(r)) \equiv_s  \big\{ gr\big(F_1(ijk)\big) \wedge gr\big(F_3(ijkr)\big) \bmid i,j,k \in  \universe{\intfolone}{\sortint}, i \in [t_1], j \in [t_2] \big\}^\vee
    \end{gather}
    % By the definition of the values of $t$ ($t$ is $t_1 \setminus t_2$), the condition $r \in [t]$ holds when $r$ belongs to the set of numerals
    % \begin{gather}
    %     \label{eq:val:set:remainder}
    %     \{n_1 - n_2 \cdot round(n_1/n_2) \mid n_1 \in [t_1], n_2 \in [t_2], n_2 \neq 0\}
    % \end{gather}
    % and the condition $r \not\in [t]$ holds when $r$ does not belong to~\eqref{eq:val:set:remainder}.
    We now illustrate that $gr(val_t(r)) \equiv_s \top$, when $r \in [t]$. 
    
    1. It follows from~\eqref{eq:prop1:case6} that $gr(val_t(r)) \equiv_s \top$ if $gr(F_1(ijk)) \equiv_s \top$ and \hbox{$gr(F_3(ijkr)) \equiv_s \top$} for some triple of numerals $i,j,k$ such that $i \in [t_1]$ and $j \in [t_2]$.

    2. From Lemma~\ref{lem:gr:val:f1} it follows that $gr(F_1(ijk)) \equiv_s \top$ when $k=\floor{\abs{i} / \abs{j} }$.  
    Consequently, 
    $gr(val_t(r)) \equiv_s \top$ if $gr(F_3(ijkr)) \equiv_s \top$ for some triple of numerals $i,j,k$ such that $i \in [t_1]$, $j \in [t_2]$, and $k=\floor{\abs{i} / \abs{j} }$. 
    
    3. From Lemma~\ref{lem:gr:val:f3} it follows that $gr(F_3(ijkr)) \equiv_s \top$ if $r$ is 
    the numeral $i - j \times round(i/j)$ for $i,j,k$.
    
    4. By 2 and 3, $gr(val_t(r)) \equiv_s \top$ if there exists a triple of numerals $i,j,k$ such that $i \in [t_1]$, $j \in [t_2]$,  $k=\floor{\abs{i} / \abs{j} }$, and $r$ is $i - j \times round(i/j)$.
    
    5. By the definition of the value of $t$, when $r\in[t]$, then there exist numerals $i,j$ such that $i \in [t_1]$, $j \in [t_2]$, $j \neq 0$, and $r$ is the numeral $i - j \cdot round(i/j)$. 

    6. By 4 and 5,  $gr(val_t(r)) \equiv_s \top$ when $r \in [t]$.

\noindent
    We now illustrate that $gr(val_t(r)) \equiv_s \bot$, when  $r \not \in [t]$.

    1. It follows from~\eqref{eq:prop1:case6} that $gr(val_t(r)) \equiv_s \bot$ when for all triples of numerals $i,j,k$ such that $i \in [t_1]$ and $j \in [t_2]$, either $gr(F_1(ijk)) \equiv_s \bot$ or $gr(F_3(ijkr)) \equiv_s \bot$.

    Consider an arbitrary triple $i,j,k$ of numerals such that $i \in [t_1]$ and $j \in [t_2]$.

    2. Recall from Case 5 that $gr(F_1(ijk)) \equiv_s \bot$ when $k \neq \floor{\abs{i} / \abs{j} }$ or $j=0$.

    3. By 1 and 2, $gr(val_t(r)) \equiv_s \bot$ when $gr(F_3(ijkr)) \equiv_s \bot$ in case $k = \floor{\abs{i} / \abs{j} }$ and $j \neq 0$.
    
    4. By the definition of the value of $t$, when $r\not\in[t]$, then there do not exist numerals $i,j$ such that $i \in [t_1]$, $j \in [t_2]$, $j \neq 0$, and $r$ is the numeral $i - j \cdot round(i/j)$. 

%    5. From Lemma~\ref{lem:gr:val:f3} and step 3, we derive that $gr(F_3(ijkr)) \equiv_s \bot$ in case $k = \floor{\abs{i} / \abs{j} }$ and $j \neq 0$ and $r$ is not the numeral $i - j \cdot round(i/j)$.

    5. By 3, 4, and Lemma~\ref{lem:gr:val:f3}, we conclude that $gr(val_t(r)) \equiv_s \bot$  when $r \not\in [t]$.
\end{proof}

% %%%%%%%%%%%%%%%%%%%%%%%%%%%%%%%%%%%%%%%%%%%

\paragraph{Proof of Corollary~\ref{prop:gr:val:corr}}

The following corollary is a consequence of Proposition~\ref{prop:gr:val} and is essentially a restatement of its claim for the case of a tuple of variables in place of a single variable.
Recall that for a tuple of terms $t_1,\dots,t_k$, abbreviated as $\bold{t}$, and a tuple of variables $V_1,\dots,V_k$, abbreviated as $\bold{V}$, we use $val_{\bold{t}}(\bold{V})$ to denote the formula 
$$val_{t_1}(V_1) \wedge \dots \wedge val_{t_k}(V_{t_k}).$$
\begin{corollary}
    Let $\intfolone$ be a standard interpretation, and let $\boldp$ be the standard partition.
    Then, for any $k$-tuple of program variables $\boldZ$, $k$-tuple of ground program terms $\boldt$, and $k$-tuple of precomputed terms $\boldr$, the formula $\grdstd(val_\boldt(\boldZ)_\boldr^\boldZ)$ is strongly equivalent to $\top$ if $\langle r_1, \dots r_k \rangle \in [t_1,\dots,t_k]$ and to $\bot$ otherwise.
    \label{prop:gr:val:corr}
\end{corollary}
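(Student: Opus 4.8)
The plan is to reduce the claim to Proposition~\ref{prop:gr:val} by exploiting the fact that both the formula $val_\boldt(\boldZ)$ and the tuple-values operator $[\,\cdot\,]$ are defined componentwise. First I would unfold the definitions on the left-hand side. Writing $\boldZ$ as $Z_1,\dots,Z_k$, the formula $val_\boldt(\boldZ)$ abbreviates $val_{t_1}(Z_1) \wedge \dots \wedge val_{t_k}(Z_k)$, so the simultaneous substitution $val_\boldt(\boldZ)_\boldr^\boldZ$ is the conjunction $val_{t_1}(r_1) \wedge \dots \wedge val_{t_k}(r_k)$. Applying condition~\ref{gr:condition:5} of the definition of grounding (which distributes $\grdstd$ over $\wedge$) then gives
$$\grdstd(val_\boldt(\boldZ)_\boldr^\boldZ) = \grdstd(val_{t_1}(r_1)) \wedge \dots \wedge \grdstd(val_{t_k}(r_k)).$$

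Next I would invoke Proposition~\ref{prop:gr:val} on each conjunct: for every $i$, $\grdstd(val_{t_i}(r_i))$ is strongly equivalent to $\top$ if $r_i \in [t_i]$ and to $\bot$ otherwise. I would also recall the clause defining the values of a tuple in Section~\ref{sec:semantics:ipl}, namely that $\langle r_1,\dots,r_k \rangle \in [t_1,\dots,t_k]$ holds exactly when $r_i \in [t_i]$ for every $i$. Together these two observations recast the membership condition of the corollary as a statement about the individual conjuncts.

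It then remains to carry out the case analysis. If $\boldr \in [\boldt]$, every conjunct is strongly equivalent to $\top$, so using Fact~\ref{fact:5:harrison} to replace each $\grdstd(val_{t_i}(r_i))$ by $\top$ and then Fact~\ref{fact:1}(\ref{l:1:1}) repeatedly collapses the whole conjunction to a formula strongly equivalent to $\top$. If $\boldr \not\in [\boldt]$, there is some index $i$ with $r_i \not\in [t_i]$, so that conjunct is strongly equivalent to $\bot$; replacing it by $\bot$ (again via Fact~\ref{fact:5:harrison}) and applying Fact~\ref{fact:1}(\ref{l:1:2}) collapses the conjunction to a formula strongly equivalent to $\bot$.

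I do not anticipate a genuine obstacle here, since the argument is essentially bookkeeping. The only point requiring care is that the replacement of each conjunct must be licensed by strong equivalence rather than syntactic identity, which is exactly what Fact~\ref{fact:5:harrison} provides and what the proof of Proposition~\ref{prop:gr:val} already relies on. The generalization to tuples of length greater than two is immediate from the componentwise definitions, in the same spirit as the tuple generalization noted at the end of Fact~\ref{fact:2}.
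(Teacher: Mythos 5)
Your proposal is correct and follows essentially the same route as the paper's own proof: decompose $\grdstd(val_\boldt(\boldZ)_\boldr^\boldZ)$ into the conjunction of the $\grdstd(val_{t_i}(r_i))$, apply Proposition~\ref{prop:gr:val} to each conjunct, and split into the two cases $\boldr \in [\boldt]$ and $\boldr \not\in [\boldt]$ using the componentwise definition of tuple values. Your version is merely more explicit about the bookkeeping (citing grounding condition~\ref{gr:condition:5}, Fact~\ref{fact:5:harrison}, and Fact~\ref{fact:1}), which the paper leaves implicit.
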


\begin{proof}
    As before, we write $val_\boldt(\boldr)$ instead of $val_\boldt(\boldZ)_\boldr^\boldZ$ to denote the formula $val_\boldt(\boldZ)$ where $r_1$ replaces $Z_1$, and so forth.
    Thus,
    $$
    \grdstd(val_\boldt(\boldr)) = \grdstd(val_{t_1}(r_1)) \wedge \dots \wedge \grdstd(val_{t_k}(r_k))
    $$
    \\\noindent Case 1: $\langle r_1, \dots r_k \rangle \in [t_1,\dots,t_k]$.
    By definition, $r_i \in [t_i]$ ($1 \leq i \leq k$), thus by Proposition~\ref{prop:gr:val} each formula $\grdstd(val_{t_i}(r_i)) \equiv_s \top$.
    Since every conjunctive term in $\grdstd(val_\boldt(\boldr))$ is strongly equivalent to $\top$, it follows that $\grdstd(val_\boldt(\boldr))$ is strongly equivalent to $\top$.
    \\\noindent Case 2: $\langle r_1, \dots r_k \rangle \not\in [t_1,\dots,t_k]$.
    By definition, there exists an $r_i$ ($1 \leq i \leq k$) such that $r_i \not\in [t_i]$.
    By Proposition~\ref{prop:gr:val} it follows that the corresponding formula $\grdstd(val_{t_i}(r_i)) \equiv_s \bot$.
    Thus, $\grdstd(val_\boldt(\boldr))$ is strongly equivalent to $\bot$.
\end{proof}

%%%%%%%%%%%%%%%%%%%%%%%%%%%%%%%%%%%%%%%%%%%
\subsection{Proof of Proposition~\ref{prop:cl}}

%%%%%%%%%%%%%%%%%%%%%%%%%%%%%%%%%%%%%%%%%%

\begin{lemma}
    \label{lem:grd:val}
    Let $\intfolone$ be a standard interpretation and let $\boldp$ be the standard partition.
    Then, for any $n$-tuple of program variables $\boldZ$, $n$-tuple of ground program terms $\boldt$, some universe $|\intfolone|$ and formula $G$, 
    it follows that
    $$
    \{\grdstd(val_{\boldt}(\boldZ)_\boldr^\boldZ) \wedge \grdstd(G(\boldZ)_\boldr^\boldZ) \mid \boldr \in |\intfolone|^n \}^\vee
    \equiv_s 
    \{\grdstd(G(\boldr)) \mid \boldr \in |\intfolone|^n, \boldr \in [\boldt]\}^\vee
    $$
\end{lemma}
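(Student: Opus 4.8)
The plan is to reduce everything to Corollary~\ref{prop:gr:val:corr}, which pins down the grounding of the value formula index-by-index, and then to clean up the resulting disjunction with the elementary strong-equivalence facts. First I would record the harmless notational point that $G(\boldZ)_\boldr^\boldZ$ and $G(\boldr)$ denote the same formula, so that $\grdstd(G(\boldZ)_\boldr^\boldZ) = \grdstd(G(\boldr))$; this aligns the summands on the two sides of the claimed equivalence.

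The key step is to split the index set $|\intfolone|^n$ of the left-hand disjunction into the two classes $\{\boldr : \boldr \in [\boldt]\}$ and $\{\boldr : \boldr \not\in [\boldt]\}$, and to simplify the corresponding disjuncts. For each $\boldr \in [\boldt]$, Corollary~\ref{prop:gr:val:corr} gives $\grdstd(val_{\boldt}(\boldZ)_\boldr^\boldZ) \equiv_s \top$, so by Fact~\ref{fact:1}(\ref{l:1:1}) the disjunct $\grdstd(val_{\boldt}(\boldZ)_\boldr^\boldZ) \wedge \grdstd(G(\boldr))$ is strongly equivalent to $\grdstd(G(\boldr))$. Symmetrically, for each $\boldr \not\in [\boldt]$ the same corollary gives $\grdstd(val_{\boldt}(\boldZ)_\boldr^\boldZ) \equiv_s \bot$, so by Fact~\ref{fact:1}(\ref{l:1:2}) that disjunct is strongly equivalent to $\bot$. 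I would then invoke Fact~\ref{fact:5:harrison} to perform all of these replacements simultaneously, concluding that the left-hand formula is strongly equivalent to
$$\Big(\{\grdstd(G(\boldr)) \mid \boldr \in |\intfolone|^n, \boldr \in [\boldt]\} \cup \{\bot \mid \boldr \in |\intfolone|^n, \boldr \not\in [\boldt]\}\Big)^{\vee}.$$
Discarding the $\bot$ disjuncts via Fact~\ref{fact:1}(\ref{l:1:4}) collapses this to exactly the right-hand side, establishing the claim.

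The only delicate point I anticipate is the legitimacy of replacing disjuncts across a possibly infinite index set: this is precisely what Fact~\ref{fact:5:harrison} licenses, and it is used in the same manner throughout the proof of Proposition~\ref{prop:gr:val}, so once the term-wise strong equivalences are in hand the argument proceeds without further obstacle. Everything else is a routine application of the simplification identities collected in Fact~\ref{fact:1}.
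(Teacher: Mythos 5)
Your proposal is correct and follows essentially the same route as the paper's own proof: both partition $|\intfolone|^n$ into the tuples belonging to $[\boldt]$ and their complement, apply Corollary~\ref{prop:gr:val:corr} to reduce each $\grdstd(val_{\boldt}(\boldZ)_\boldr^\boldZ)$ to $\top$ or $\bot$, perform the simultaneous replacements via Fact~\ref{fact:5:harrison}, and clean up with the simplification identities of Fact~\ref{fact:1}. The paper merely names the two partition classes $\bold{J}$ and $\bold{K}$ explicitly; there is no substantive difference.
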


\begin{proof}
    First note that $\boldr$ is an $n$-tuple of precomputed terms, and that $\grdstd(val_{\boldt}(\boldZ)_\boldr^\boldZ)$ is more conveniently written as
    $\grdstd(val_{\boldt}(\boldr))$ (similarly, rewrite $\grdstd(G(\boldZ)_\boldr^\boldZ)$ as $\grdstd(G(\boldr))$).
    %
%    Now let us use $\bold{R}$ to denote the set of tuples $\langle r_1,\dots,r_n \rangle \in |\intfolone|^n$.
    %
    We now consider the unique partition of $|\intfolone|^n$ into sets $\bold{J}$ and $\bold{K}$ so that:
    $\bold{J}$ denotes the set of tuples of the form $\langle j_1,\dots,j_n \rangle$  such that  every $j_i \in [t_i]$ ($1 \leq i \leq n$).
    As a result,  every tuple $\langle k_1,\dots,k_n \rangle \in \bold{K}$ is such that for some $i$ ($1 \leq i \leq n$), $k_i \not\in [t_i]$.
    By construction of $\bold{J}$ and $\bold{K}$, 
    \begin{align*}
        &\{\grdstd(val_{\boldt}(\boldZ)_\boldr^\boldZ) \wedge \grdstd(G(\boldZ)_\boldr^\boldZ) \mid \boldr \in |\intfolone|^n \}^\vee\\
        &= \{\left( \grdstd(val_{\boldt}(\bold{j})) \wedge \grdstd(G(\bold{j})) \right), \left( \grdstd(val_{\boldt}(\bold{k})) \wedge \grdstd(G(\bold{k})) \right) \mid \bold{j} \in \bold{J}, \bold{k} \in \bold{K}\}^\vee
    \end{align*}
    From Corollary~\ref{prop:gr:val:corr}, it follows that $\grdstd(val_{\boldt}(\bold{j})) \equiv_s \top$ and $\grdstd(val_{\boldt}(\bold{k})) \equiv_s \bot$ for each $\bold{j} \in \bold{J}, \bold{k} \in \bold{K}$.
    Thus,
    \begin{align*}
        &\{\grdstd(val_{\boldt}(\boldZ)_\boldr^\boldZ) \wedge \grdstd(G(\boldZ)_\boldr^\boldZ) \mid \boldr \in |\intfolone|^n \}^\vee\\
        &\equiv_s \{\left( \top \wedge \grdstd(G(\bold{j})) \right), \left( \bot \wedge \grdstd(G(\bold{k})) \right) \mid \bold{j} \in \bold{J}, \bold{k} \in \bold{K}\}^\vee \hfill{ (Fact~\ref{fact:5:harrison})}\\
        &\equiv_s \{ \grdstd(G(\bold{j})) \mid \bold{j} \in \bold{J} \}^\vee \hfill{ \text{(Fact~\ref{fact:1}, construction of $\bold{J}$)}}\\
        &\equiv_s \{\grdstd(G(\boldr)) \mid \boldr \in |\intfolone|^n, \boldr \in [\boldt]\}^\vee
    \end{align*}
\end{proof}

% \color{blue}
% This lemma could be used to simplify the proof of proposition 1, except instead of relying on the corollary, we would assume the inductive hypothesis holds between $\boldt$ and $\boldr$
% \color{black}

%%%%%%%%%%%%%%%%%%%%%%%%%%%%%%%%%%%%%%%%%%%

\begin{lemma}
    \label{lem:gr:atomic:comparison.aux}
    Let $\intfolone$ be a standard interpretation and let $\boldp$ be the standard partition.
    If $t_1 \prec t_2$ is a closed comparison, then the formula
    \begin{gather}
        \label{eq:gr:atomic:comparison.aux}
        \{\grdstd(r_1 \prec r_2)
                \mid \langle r_1, r_2 \rangle \in |\intfolone|^{\sortsuper} \times |\intfolone|^{\sortsuper}, \langle r_1, r_2 \rangle \in [t_1,t_2] \}^\vee
    \end{gather}
    is strongly equivalent to $\tau(t_1 \prec t_2)$.
\end{lemma}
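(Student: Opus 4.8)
The plan is to evaluate the infinitary disjunction~\eqref{eq:gr:atomic:comparison.aux} disjunct-by-disjunct and then collapse it using the basic facts about $\top$ and $\bot$ in the infinitary here-and-there setting. The first observation is that every pair $\langle r_1, r_2 \rangle$ ranging over the disjunction consists of precomputed terms (indeed $r_1 \in [t_1]$ and $r_2 \in [t_2]$, and all values are precomputed), so by condition~\ref{c.interp.third} of the definition of standard interpretations we have $r_1^\intfolone = r_1$ and $r_2^\intfolone = r_2$. Consequently Fact~\ref{fact:3} applies to each comparison $r_1 \prec r_2$ and tells us that $\grdstd(r_1 \prec r_2)$ equals $\top$ when the relation $\prec$ holds for the pair $(r_1, r_2)$ and equals $\bot$ otherwise. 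Thus every disjunct of~\eqref{eq:gr:atomic:comparison.aux} is literally $\top$ or $\bot$.

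I would then split into two cases mirroring the definition of $\tau(t_1 \prec t_2)$. In the first case there exists a pair with $r_1 \in [t_1]$, $r_2 \in [t_2]$ for which $\prec$ holds; then $\tau(t_1 \prec t_2)$ is $\top$ by definition, and the corresponding disjunct of~\eqref{eq:gr:atomic:comparison.aux} is $\top$. Since every \htinterp\ satisfies that disjunct, every \htinterp\ satisfies the whole disjunction, so by point~\ref{l:1:5} of Fact~\ref{fact:1} the disjunction is strongly equivalent to $\top$, and hence to $\tau(t_1 \prec t_2)$. In the second case no such pair exists; then $\tau(t_1 \prec t_2)$ is $\bot$, and every disjunct of~\eqref{eq:gr:atomic:comparison.aux} is $\bot$. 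Since no \htinterp\ satisfies any disjunct, none satisfies the disjunction, so by point~\ref{l:1:6} of Fact~\ref{fact:1} it is strongly equivalent to $\bot$, matching $\tau(t_1 \prec t_2)$.

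The only point requiring care is the treatment of the possibly infinite disjunction: rather than removing disjuncts one at a time via the finite simplifications in points~\ref{l:1:3}--\ref{l:1:4} of Fact~\ref{fact:1}, I would argue directly at the level of ht-satisfaction using points~\ref{l:1:5}--\ref{l:1:6}, which characterize strong equivalence to $\top$ and $\bot$ by whether every (respectively, some) \htinterp\ satisfies the formula. This also handles the empty disjunction uniformly, since the case where $[t_1]$ or $[t_2]$ is empty falls automatically under the second case. Otherwise the argument is entirely routine, and I expect no genuine obstacle.
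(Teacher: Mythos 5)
Your proof is correct and follows essentially the same route as the paper's: each disjunct is evaluated to $\top$ or $\bot$ via the extensional treatment of comparisons under grounding (the paper cites grounding conditions 3--4 directly, which is exactly what Fact~\ref{fact:3} packages together with condition~\ref{c.interp.third} of standard interpretations), and the collapsed disjunction is then matched against condition 5 of the definition of $\tau$. The only cosmetic difference is the collapse step itself -- the paper splits the disjunction into its $\top$-part and $\bot$-part and applies the simplification laws of Fact~\ref{fact:1}, while you invoke the ht-satisfaction characterizations (points~\ref{l:1:5}--\ref{l:1:6}); both handle the infinite and empty index sets equally well.
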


\begin{proof}
    Since $t_1 \prec t_2$ is a closed comparison and grounding is performed with respect to an empty set of intensional functions,
    it follows from conditions 3 and 4 of grounding that~\eqref{eq:gr:atomic:comparison.aux}
    is strongly equivalent to
    \begin{align*}
        \{\top \mid r_1 \in [t_1], r_2 \in [t_2], \prec(r_1,r_2) \}^\vee \vee
        \{\bot \mid r_1 \in [t_1], r_2 \in [t_2], \not\prec(r_1,r_2) \}^\vee
    \end{align*}
    which, by Fact~\ref{fact:1}, is strongly equivalent to
    $\{\top \mid r_1 \in [t_1], r_2 \in [t_2], \prec(r_1,r_2) \}^\vee.$
    Therefore, it follows that~\eqref{eq:gr:atomic:comparison.aux} is strongly equivalent to $\top$
    if some $\langle r_1, r_2 \rangle \in [t_1,t_2]$ satisfies $r_1 \prec r_2$ and to $\bot$ otherwise.
    Therefore, the result follows from condition 5 of the definition of~$\tau$. %$\grdstd(\tau^B_{\boldZ}\left( A \right)} \equiv_s \tau( A )$.
\end{proof}

\begin{lemma}
    \label{lem:lit}
    Let $l$ be a basic literal or comparison, 
    let $\boldZ$ be a superset of the global variables of $l$, 
    and let $\boldz$ be a tuple of precomputed terms of the same length as $\boldz$.
    Then, for a standard interpretation $\intfolone$ and the standard partition $\boldp$,
    $$\grdstd \left((\tau^B_{\boldZ}(l))_\boldz^\boldZ \right) \equiv_s \tau \left( l_\boldz^\boldZ \right).$$
\end{lemma}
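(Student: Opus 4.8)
The plan is to prove the statement by case analysis on the syntactic form of $l$, namely $p(\boldt)$, $not~p(\boldt)$, $not~not~p(\boldt)$, and $t_1 \prec t_2$. Since every variable of a basic literal or comparison is global and $\boldZ$ is a superset of the global variables of $l$, the substitution $l_\boldz^\boldZ$ replaces every variable by a precomputed term; hence $l_\boldz^\boldZ$ is ground, so that $\tau(l_\boldz^\boldZ)$ and the value sets $[\cdot]$ appearing below are well defined. Write $\boldt'$ for $\boldt_\boldz^\boldZ$ and $t_i'$ for $(t_i)_\boldz^\boldZ$; since $\tau^B_\boldZ$ commutes with the substitution, $(\tau^B_\boldZ(p(\boldt)))_\boldz^\boldZ$ is $\exists \boldV(val_{\boldt'}(\boldV) \wedge p(\boldV))$, and analogously for the remaining forms.

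For the atom case, I would first unfold the grounding of the existential. Using conditions~\ref{gr:condition:5} and~\ref{gr:condition:6} of grounding and then flattening the resulting nested disjunction with Fact~\ref{fact:2}, one obtains that $\grdstd((\tau^B_\boldZ(p(\boldt)))_\boldz^\boldZ)$ is strongly equivalent to
$$\{\grdstd(val_{\boldt'}(\boldr)) \wedge \grdstd(p(\boldr)) \mid \boldr \in |\intfolone|^n\}^\vee,$$
where $n$ is the length of $\boldt$. Applying Lemma~\ref{lem:grd:val} with $G$ taken to be $p$ removes the $val$ conjuncts and restricts the index set, yielding $\{\grdstd(p(\boldr)) \mid \boldr \in |\intfolone|^n, \boldr \in [\boldt']\}^\vee$. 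Since $\boldp$ is the standard partition we have $p \in \boldp$, and since each $\boldr$ is a tuple of precomputed terms interpreted by $\intfolone$ as itself, $\grdstd(p(\boldr)) = p(\boldr)$; thus the disjunction is literally $\tau(p(\boldt'))$. The cases $not~p(\boldt)$ and $not~not~p(\boldt)$ are identical after taking $G$ to be $\neg p$ and $\neg\neg p$ respectively and using $\grdstd(\neg F) = \neg\grdstd(F)$ to move the grounding inside the negations.

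For the comparison case, $(\tau^B_\boldZ(t_1 \prec t_2))_\boldz^\boldZ$ is $\exists Z_1 Z_2(val_{t_1'}(Z_1) \wedge val_{t_2'}(Z_2) \wedge Z_1 \prec Z_2)$. Observing that $val_{t_1'}(Z_1) \wedge val_{t_2'}(Z_2)$ is precisely $val_{\langle t_1', t_2' \rangle}(\langle Z_1, Z_2\rangle)$, I would ground the two existentials, flatten with Fact~\ref{fact:2} into a single disjunction indexed by pairs $\langle r_1, r_2\rangle \in |\intfolone|^{\sortsuper} \times |\intfolone|^{\sortsuper}$, and apply Lemma~\ref{lem:grd:val} with the two-term tuple $\langle t_1', t_2'\rangle$ and $G$ taken to be $Z_1 \prec Z_2$. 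This produces exactly formula~\eqref{eq:gr:atomic:comparison.aux} with $t_1', t_2'$ in place of $t_1, t_2$, which Lemma~\ref{lem:gr:atomic:comparison.aux} shows to be strongly equivalent to $\tau(t_1' \prec t_2') = \tau((t_1 \prec t_2)_\boldz^\boldZ)$, completing this case.

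The argument is largely bookkeeping, since its substance has been isolated into Lemma~\ref{lem:grd:val} and Lemma~\ref{lem:gr:atomic:comparison.aux}. The one point requiring care is the passage from the nested disjunction that grounding produces for a multi-argument existential to the single index set $\boldr \in |\intfolone|^n$ demanded by Lemma~\ref{lem:grd:val}; this flattening must be justified by Fact~\ref{fact:2}, and each replacement of a subformula by a strongly equivalent one inside the disjunctions must be licensed by Fact~\ref{fact:5:harrison}. I expect no genuine difficulty beyond tracking these reductions carefully.
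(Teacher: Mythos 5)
Your proposal is correct and follows essentially the same route as the paper's own proof: the same case analysis on the form of $l$, the reduction of the atom (and negated-atom) cases to Lemma~\ref{lem:grd:val} with $G$ taken to be $p(\boldV)$ (resp.\ its negations), and the reduction of the comparison case to Lemma~\ref{lem:gr:atomic:comparison.aux}. Your explicit invocation of Fact~\ref{fact:2} and Fact~\ref{fact:5:harrison} to flatten the nested disjunctions produced by grounding a multi-variable existential is a point the paper glosses over, but it is the same argument made slightly more carefully.
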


\begin{proof}
    We proceed by cases.
    \\[5pt]
    \\\noindent \emph{Case 1:} $l$ is an atom, $p(t_1,\dots,t_k)$ where $p$ is an intensional symbol ($p \in \boldp$).
    Then, 
    $$
    \tau^B_{\boldZ}(l) = \exists V_1, \dots V_k (val_{t_1}(V_1) \wedge \dots \wedge val_{t_k}(V_k) \wedge p(V_1,\dots,V_k))
    $$
    which we will abbreviate as 
    $$
    \tau^B_{\boldZ}(l) = \exists \boldV (val_{\boldt}(\boldV) \wedge p(\boldV))
    $$
    thus,
    \begin{align*}
        \grdstd(\tau^B_{\boldZ}\left( l \right)_{\boldz}^{\boldZ})
        &=\grdstd(
            \exists \boldV ( val_{\boldt}(\boldV) \wedge p(\boldV) )_{\boldz}^{\boldZ} ) \\
        &=\grdstd(
            \exists \boldV ( val_{\boldt_{\boldz}^{\boldZ}}(\boldV) \wedge p(\boldV) ) ) \\
        &=\{ \grdstd(
                val_{\boldt_{\boldz}^{\boldZ}}(\boldV)_{\boldr}^{\boldV} \wedge p(\boldr)
            ) \mid \boldr \in |\intfolone|^{\sortsuper} \}^\vee  \\
        &=\{ \grdstd(val_{\boldt_{\boldz}^{\boldZ}}(\boldV)_{\boldr}^{\boldV})
                \wedge \grdstd(p(\boldr))
                    \mid \boldr \in |\intfolone|^{\sortsuper} \}^\vee  \\
        &\equiv_s \{ \grdstd(p(\boldr))
                    \mid \boldr \in |\intfolone|^{\sortsuper}, \boldr \in [\boldt_{\boldz}^{\boldZ}] \}^\vee \hfill \text{     (Lemma~\ref{lem:grd:val})}  \\
        &= \{ p(\boldr) \mid \boldr \in |\intfolone|^{\sortsuper}, \boldr \in [\boldt_{\boldz}^{\boldZ}] \}^\vee \hfill{\text{  (Definition of } \grdstd \text{)}} \\
        &= \tau \left( p(\boldt_{\boldz}^{\boldZ}) \right)  ~\hfill{\text{(Definition of $\tau$, condition 2)}}\\
        &= \tau(l_{\boldz}^{\boldZ}) 
    \end{align*}
    % \begin{align*}
    %     &\grdstd(\tau^B_{\boldZ}(l)) \\
    %     &= \{\grdstd(val_{t_1}(V_1)_{r_1}^{V_1}) \wedge \dots \wedge \grdstd(val_{t_k}(V_k)_{r_k}^{V_k}) \wedge \grdstd(p(V_1,\dots,V_k)_{r_1,\dots,r_k}^{V_1,\dots,V_k}) \mid \langle r_1, \dots, r_k \rangle \in |\intfolone|^{\sortsuper} \times \dots \times |\intfolone|^{\sortsuper} \}^\vee \\
    %     &\equiv_s \{ \grdstd(p(r_1,\dots,r_k)) \mid \langle r_1, \dots, r_k \rangle \in |\intfolone|^{\sortsuper} \times \dots \times |\intfolone|^{\sortsuper}, \langle r_1, \dots, r_k \rangle \in [t_1,\dots,t_k]\}^\vee \hfill{(\text{Lemma}~\ref{lem:grd:val})} \\
    %     &= \{ p(r_1,\dots,r_k) \mid \langle r_1, \dots, r_k \rangle \in [t_1,\dots,t_k]\}^\vee~\hfill{(\text{Definition of }\grdstd)} \\
    %     &\equiv_s \tau(l) ~\hfill{\text{(Definition of $\tau$, conditions 2, 3)}}
    % \end{align*}
    \\[5pt]
    \noindent\emph{Cases 2,3:} $l$ is a singly or doubly negated intensional atom. The proof of these cases mimics the proof of Case 1.
    \\[5pt]
    \\\noindent \emph{Case 4:} $l$ is a comparison, $a \prec b$.
    Then, 
    $$
    \tau^B_{\boldZ}(l) = \exists V_1, V_2 (val_{a}(V_1) \wedge val_{b}(V_2) \wedge V_1 \prec V_2)
    $$
    thus,
    \begin{align*}
        &~~~~~\grdstd(\tau^B_{\boldZ}\left( l \right)_{\boldz}^{\boldZ})\\
        &=\grdstd( \exists V_1, V_2 (val_{a}(V_1) \wedge val_{b}(V_2) \wedge V_1 \prec V_2)_{\boldz}^{\boldZ}) \\
        &=\grdstd(
            \exists V_1 V_2 ( val_{a_{\boldz}^{\boldZ}}(V_1) \wedge val_{b_{\boldz}^{\boldZ}}(V_2) \wedge V_1 \prec V_2 ) ) \\
        &=\{\grdstd(
            val_{a_{\boldz}^{\boldZ}}(V_1)_{r_1}^{V_1} \wedge val_{b_{\boldz}^{\boldZ}}(V_2)_{r_2}^{V_2} \wedge (V_1 \prec V_2)_{r_1,r_2}^{V_1,V_2})
                \mid \langle r_1, r_2 \rangle \in |\intfolone|^{\sortsuper} \times |\intfolone|^{\sortsuper} \}^\vee \\
        &\equiv_s\{\grdstd(r_1 \prec r_2)
                \mid \langle r_1, r_2 \rangle \in |\intfolone|^{\sortsuper} \times |\intfolone|^{\sortsuper}, \langle r_1, r_2 \rangle \in [a_{\boldz}^{\boldZ},b_{\boldz}^{\boldZ}] \}^\vee \\
        &\equiv_s \tau(a_{\boldz}^{\boldZ} \prec b_{\boldz}^{\boldZ}) \text{ (Lemma~\ref{lem:gr:atomic:comparison.aux})} \\
        &= \tau(l_{\boldz}^{\boldZ}) 
    \end{align*}
    % and
    % \begin{align*}
    %     &\grdstd(\tau^B_{\boldZ}(l)) \\
    %     &= \{\grdstd(val_{t_1}(V_1)_{r_1}^{V_1}) \wedge \grdstd(val_{t_2}(V_2)_{r_2}^{V_2}) \wedge \grdstd((V_1 \prec V_2)_{r_1,r_2}^{V_1,V_2}) \mid \langle r_1, r_2 \rangle \in |\intfolone|^{\sortsuper} \times |\intfolone|^{\sortsuper} \}^\vee\\
    %     &\equiv_s \{\grdstd(r_1 \prec r_2) \mid \langle r_1, r_2 \rangle \in |\intfolone|^{\sortsuper} \times |\intfolone|^{\sortsuper}, \langle r_1, r_2 \rangle \in [t_1, t_2] \}^\vee \hfill{(\text{Lemma}~\ref{lem:grd:val})} \\
    %     &\equiv_s \{\top \mid r_1 \in [t_1], r_2 \in [t_2], \prec(r_1,r_2) \}^\vee \vee 
    %     \{\bot \mid r_1 \in [t_1], r_2 \in [t_2], \not\prec(r_1,r_2) \}^\vee~\hfill{(\text{Definition of }\grdstd\text{ conditions 3, 4})}\\
    %     &\equiv_s \{\top \mid r_1 \in [t_1], r_2 \in [t_2], \prec(r_1,r_2) \}^\vee \vee \bot \hfill{(\text{Fact~\ref{fact:1}, condition~\ref{l:1:4}})}\\
    %     &\equiv_s \{\top \mid r_1 \in [t_1], r_2 \in [t_2], \prec(r_1,r_2) \}^\vee \hfill{(\text{Fact~\ref{fact:1}, condition~\ref{l:1:4}})}
    % \end{align*}
    % The preceding argument illustrates that $\grdstd(\tau^B_{\boldZ}(l))$
    % is strongly equivalent to $\top$ if the set $\{\top \mid r_1 \in [t_1], r_2 \in [t_2], \prec(r_1,r_2) \}$ is not empty, and strongly equivalent to $\bot$ otherwise.    
    % Therefore, $\grdstd(\tau^B_{\boldZ}(l))$ is strongly equivalent to $\top$ if some $\langle r_1, r_2 \rangle \in [t_1,t_2]$ satisfies $r_1 \prec r_2$ and $\bot$ otherwise.
    % By condition 6 of the definition of $\tau$, $\grdstd(\tau^B_{\boldZ}(l)) \equiv_s \tau(l)$.
\end{proof}

\paragraph{Proof of Proposition~\ref{prop:cl}}
Since $\boldZ$ is a superset of the global variables of our conditional literal,
it follows that $(\clhead : \boldL)_{\boldz}^{\boldZ}$ is a closed expression (only non-global variables remain).
%
%Let $\boldY$ denote the set of variables in $(\clhead : \boldL)_{\boldz}^{\boldZ}$.
%
%It follows that $\boldY$ is a subset of the local variables of $\clhead : \boldL$
%and that $\boldY \cap \boldZ = \emptyset$.
%

By definition, $\tau^B_\boldZ(\clhead : \boldL)$ is
\begin{gather*}
    \forall \boldX \left(\tau^B_\boldZ(\boldL) \rar \tau^B_\boldZ(\clhead)\right)
\end{gather*}
where $\boldX$ denotes the variables of $\clhead : \boldL$ that do not occur in $\boldZ$, 
observe that $\boldX$ is also the set of all variables occurring in $(\clhead : \boldL)_{\boldz}^{\boldZ}$.
Thus, for any expression $E$ occurring in $\clhead : \boldL$ and tuple of precomputed terms $\boldx$ of the same length as $\boldX$,
$E^{\boldX\boldZ}_{\boldx\boldz}$ is a ground expression.
Similarly, for any expression $E$ occurring in $(\clhead : \boldL)^{\boldZ}_{\boldz}$,
$E^{\boldX}_{\boldx}$ is a ground expression.
Thus,
\begin{align*}
    &\grdstd((\tau^B_{\boldZ}\left(\clhead : \boldL\right))_{\boldz}^{\boldZ}) \\
    &= \grdstd(\forall \boldX \left(\tau^B_\boldZ(\boldL) \rar \tau^B_\boldZ(\clhead)\right)_{\boldz}^{\boldZ}) \\
    &= \grdstd(\forall \boldX \left(\tau^B_\boldZ(l_1) \wedge \dots \wedge \tau^B_\boldZ(l_m) \rar \tau^B_\boldZ(\clhead)\right)_{\boldz}^{\boldZ})\\
    &= \{\grdstd{\left(\tau^B_{\boldZ}(l_1)_{\boldz}^{\boldZ} \wedge
        \dots \wedge \tau^B_{\boldZ}(l_m)_{\boldz}^{\boldZ}
        \rar \tau^B_{\boldZ}(H)_{\boldz}^{\boldZ}\right)_{\boldx}^{\boldX}} \mid \boldx \in |\intfolone|^{\sortsuper} \}^\wedge \\
    &= \{\grdstd{(\tau^B_{\boldZ}(l_1)_{\boldz}^{\boldZ})_{\boldx}^{\boldX}} \wedge
        \dots \wedge \grdstd{(\tau^B_{\boldZ}(l_m)_{\boldz}^{\boldZ})_{\boldx}^{\boldX}}
        \rar \grdstd{(\tau^B_{\boldZ}(H)_{\boldz}^{\boldZ})_{\boldx}^{\boldX}} \mid \boldx \in |\intfolone|^{\sortsuper} \}^\wedge
\end{align*}
In the preceding, each expression
%$\grdstd{(\tau^B_{\boldZ}(l_1)_{\boldx}^{\boldX})_{\boldz}^{\boldZ}}$
$(\tau^B_{\boldZ}(l_i)_{\boldz}^{\boldZ})_{\boldx}^{\boldX}$
where $l_i \in \{l_1, \dots, l_m\}$
denotes the expression obtained from the first-order formula $\tau^B_{\boldZ}(l_i)$
by first substituting precomputed terms $\boldz$ for variables in $\boldZ$,
and then substituting precomputed terms $\boldx$ for any of the remaining variables that occur in $\boldX$.

But, since $\boldX$ and $\boldZ$ are disjoint, the order of replacement doesn't matter.
That is,
$
(E_{\boldz}^{\boldZ})_{\boldx}^{\boldX}
= E_{\boldz\boldx}^{\boldZ\boldX}
= (E_{\boldx}^{\boldX})_{\boldz}^{\boldZ}
$.
Since $\boldX \cup \boldZ$ is a superset of the (global) variables of $l_i$,
the resulting expression is a ground basic literal.
The same is true for $(\tau^B_{\boldZ}(H)_{\boldz}^{\boldZ})_{\boldx}^{\boldX}$
when $H$ is a basic literal.

Now observe that, since $(\clhead : \boldL)_{\boldz}^{\boldZ}$ is closed, $\tau((\clhead : \boldL)_{\boldz}^{\boldZ})$ is
\begin{align*}
    & \{ \tau(\boldL_{\boldz\boldx}^{\boldZ\boldX}) \rar \tau(\clhead_{\boldz\boldx}^{\boldZ\boldX}) \mid \boldx \in |\intfolone|^{\sortsuper} \}^\wedge \\
    =& \{ \tau([l_1]_{\boldz\boldx}^{\boldZ\boldX}) \wedge \dots \wedge \tau([l_m]_{\boldz\boldx}^{\boldZ\boldX}) \rar \tau(\clhead_{\boldz\boldx}^{\boldZ\boldX}) \mid \boldx \in |\intfolone|^{\sortsuper} \}^\wedge.
\end{align*}
It follows from Lemma~\ref{lem:lit}
that for any basic literal $l_i \in \{l_1, \dots, l_m\}$,
$$
    \grdstd((\tau^B_{\boldZ}(l_i)_{\boldz}^{\boldZ})_{\boldx}^{\boldX})
    = \grdstd(\tau^B_{\boldZ}(l_i)_{\boldz\boldx}^{\boldZ\boldX})
    \equiv_s \tau([l_i]_{\boldz\boldx}^{\boldZ\boldX}).
$$
The same holds for $H$ when $H$ is a basic literal.
When $H$ is the symbol $\bot$, observe that
$$
    \grdstd((\tau^B_{\boldZ}(H)_{\boldz}^{\boldZ})_{\boldx}^{\boldX})
    = \grdstd(\bot)
    = \bot
    = \tau(\bot)
    = \tau(H_{\boldz\boldx}^{\boldZ\boldX}).
$$
From these conclusions and from Fact~\ref{fact:5:harrison}, it follows that
\begin{align*}
    & ~\{\grdstd{(\tau^B_{\boldZ}(l_1)_{\boldz}^{\boldZ})_{\boldx}^{\boldX}} \wedge
        \dots \wedge \grdstd{(\tau^B_{\boldZ}(l_m)_{\boldz}^{\boldZ})_{\boldx}^{\boldX}}
        \rar \grdstd{(\tau^B_{\boldZ}(H)_{\boldz}^{\boldZ})_{\boldx}^{\boldX}} \mid \boldx \in |\intfolone|^{\sortsuper} \}^\wedge\\
    \equiv_s& ~\{ \tau([l_1]_{\boldz\boldx}^{\boldZ\boldX}) \wedge \dots \wedge \tau([l_m]_{\boldz\boldx}^{\boldZ\boldX}) \rar \tau(\clhead_{\boldz\boldx}^{\boldZ\boldX}) \mid \boldy \in |\intfolone|^{\sortsuper} \}^\wedge.
\end{align*}
from which the result follows.

% \begin{proof}
%     First note that by assumption $\clhead : \boldL$ is closed -- thus, every global variable has been replaced with a precomputed term and only local variables $\boldX$ remain.
%     %
%     Consequently, any expression $E^\boldX_\boldx$ where~$\boldx$ is a tuple of precomputed terms of the same length as $\boldX$ is a ground expression.
%     %
%     Thus,
%     \begin{align*}
%         &\grdstd\left(\tau^B_{\boldZ}\left(H : \boldL\right)\right) \\
%         &= \grdstd\left(\forall \boldX (\tau^B_\boldZ(\boldL) \rar \tau^B_\boldZ(H)\right) \\
%         &= \grdstd\left(\forall \boldX (\tau^B_\boldZ(l_1) \wedge \dots \wedge \tau^B_\boldZ(l_m) \rar \tau^B_\boldZ(H)\right) \\
%         &= \{\grdstd(\tau^B_{\boldZ}([l_1]_{\boldx}^{\boldX})) \wedge 
%         \dots \wedge \grdstd(\tau^B_{\boldZ}([l_m]_{\boldx}^{\boldX}))
%         \rar \grdstd(\tau^B_{\boldZ}(H_{\boldx}^{\boldX})) \mid \boldx \in |\intfolone|^{\sortsuper} \times \dots \times |\intfolone|^{\sortsuper} \}^\wedge\\
%         &\equiv_s \{\tau([l_1]_{\boldx}^{\boldX}) \wedge \dots \wedge \tau([l_m]_{\boldx}^{\boldX}) \rar \tau(H_{\boldx}^{\boldX}) \mid \boldx \in |\intfolone|^{\sortsuper} \times \dots \times |\intfolone|^{\sortsuper} \}^\wedge \hfill \hfill{\text{(Lemma~\ref{lem:ground:cl})}}   \\
%         &= \tau(H : \boldL)
%     \end{align*}    
% \end{proof}

%%%%%%%%%%%%%%%%%%%%%%%%%%%%%%%%%%%%%%%%%%

\subsection{Proofs of Propositions~\ref{prop:rule} and~\ref{prop:program}}
In the following, it is convenient to abuse notation for a $k$-tuple $\boldv$ composed of terms  from a universe $|\intfolone|$: we write $\boldv \in |\intfolone|$ to denote  $\boldv \in |\intfolone|^k$.

The proof of Proposition~\ref{prop:rule} relies on a rule \emph{instantiation} process used to obtain closed rules.
Let~$\boldZ$ denote the global variables of rule $R$.
By $inst_\oc(R)$ we denote the set of  instances of rule $R$ w.r.t. a set $\oc$ of precomputed terms, i.e.,
$$inst_\oc(R) = \{R_\boldz^\boldZ \mid \boldz \in \oc\}=\{ [H]_\boldz^\boldZ \ruleo [B_1]_\boldz^\boldZ, \dots, [B_n]_\boldz^\boldZ \mid \boldz \in \oc\}.$$
If we take $\oc$ to be the universe of precomputed terms, then 
 $\tau(R) = \{\tau(\rho) \mid \rho \in inst_{\oc}(R)\}^{\wedge}$.
% \begin{align*}
%     \tau(R) &= \{\tau(r) \mid r \in inst_{\oc}(R)\}^{\wedge} %\\
%     %         &= \{\tau([B_1]_\boldz^\boldZ) \wedge \dots \wedge \tau([B_n]_\boldz^\boldZ) \to \tau([H]_\boldz^\boldZ) \mid \boldz \in |\intfolone|^{\sortsuper}\}^\wedge
% \end{align*}

\begin{lemma}
    % I initially proved this for an atom p(\boldr), not a formula G(\boldr)
    % but i want to generalize it to support formulas p(\boldr) \vee \neg p(\boldr)
    \label{lem:rule.head.tau.helper}
    Let $\boldv$ be a tuple of precomputed terms,  let $F$ and $G$ %$G(\boldv)$   
    be infinitary propositional formulas, and
     let $\boldt$ be a tuple of ground terms of the same length as $\boldv$.
    For a tuple of precomputed terms $\boldr$ of the same length as $\boldv$,   
    %, let $G(\boldr)$ denote the result of replacing $\boldv$ with $\boldr$ in $G$.
    %Then,
    \begin{gather}
        \label{eq:lem:rule.head.tau.helper.left}
        \{ F \to G^\boldv_\boldr \mid \boldr \in [\boldt]\}^\wedge
    \end{gather}
    is strongly equivalent to
    \begin{gather}
        \label{eq:lem:rule.head.tau.helper.right}
        F \to \bigwedge_{\boldr \in [\boldt]} G^\boldv_\boldr
    \end{gather}
\end{lemma}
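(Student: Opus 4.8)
The plan is to reduce strong equivalence to equivalence in the infinitary logic of here-and-there, which is legitimate by the characterization recalled in Section~\ref{sec:consem} (Theorem~3 of Harrison et al., establishing that two sets of infinitary formulas are strongly equivalent iff they are ht\nobreakdash-equivalent). Thus it suffices to show that an arbitrary \htinterp\ $\hti$ satisfies~\eqref{eq:lem:rule.head.tau.helper.left} if and only if it satisfies~\eqref{eq:lem:rule.head.tau.helper.right}. Throughout I will use that the consequent $\bigwedge_{\boldr \in [\boldt]} G^\boldv_\boldr$ of~\eqref{eq:lem:rule.head.tau.helper.right} is by definition the set conjunction $\{G^\boldv_\boldr \mid \boldr \in [\boldt]\}^\wedge$, and crucially that the antecedent $F$ is a fixed formula not depending on $\boldr$.

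First I would unfold the left-hand side. By the ht\nobreakdash-satisfaction clause for $\wedge$, $\hti$ satisfies~\eqref{eq:lem:rule.head.tau.helper.left} iff $\hti \models F \to G^\boldv_\boldr$ for every $\boldr \in [\boldt]$; and by the ht\nobreakdash-satisfaction clause for $\to$, each such conjunct holds iff both (i)~$\intproptwo \models F \to G^\boldv_\boldr$ and (ii)~$\hti \not\models F$ or $\hti \models G^\boldv_\boldr$. Hence $\hti$ satisfies~\eqref{eq:lem:rule.head.tau.helper.left} iff (i) holds for every $\boldr \in [\boldt]$ and (ii) holds for every $\boldr \in [\boldt]$.

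The key step is to distribute the universal quantification over $\boldr$ through the disjunctions, which is sound precisely because $F$ is constant in $\boldr$. For the classical component~(i), the statement ``for every $\boldr$, $\intproptwo \not\models F$ or $\intproptwo \models G^\boldv_\boldr$'' is equivalent to ``$\intproptwo \not\models F$, or for every $\boldr$, $\intproptwo \models G^\boldv_\boldr$,'' which is exactly $\intproptwo \models F \to \{G^\boldv_\boldr \mid \boldr \in [\boldt]\}^\wedge$. For the here-and-there component~(ii), ``for every $\boldr$, $\hti \not\models F$ or $\hti \models G^\boldv_\boldr$'' is likewise equivalent to ``$\hti \not\models F$, or $\hti \models G^\boldv_\boldr$ for every $\boldr$,'' i.e.\ $\hti \not\models F$ or $\hti \models \{G^\boldv_\boldr \mid \boldr \in [\boldt]\}^\wedge$. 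Reading these two statements back through the ht\nobreakdash-satisfaction clause for $\to$, they together say precisely that $\hti$ satisfies~\eqref{eq:lem:rule.head.tau.helper.right}. Chaining the equivalences yields the desired biconditional, and hence the strong equivalence.

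The only real obstacle is that ht\nobreakdash-satisfaction of an implication carries two conditions at once -- a classical condition on $\intproptwo$ together with a genuinely here-and-there condition -- so the distributivity argument must be carried out twice, once for each, rather than invoking a single classical tautology. I would also record the degenerate case $[\boldt] = \emptyset$: then~\eqref{eq:lem:rule.head.tau.helper.left} is $\emptyset^\wedge = \top$ and~\eqref{eq:lem:rule.head.tau.helper.right} is $F \to \top$, and the argument above specializes correctly since both conditions~(i) and~(ii) become vacuously true, so every \htinterp\ satisfies both formulas.
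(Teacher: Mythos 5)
Your proof is correct and takes essentially the same route as the paper's: both reduce strong equivalence to ht-equivalence via Harrison et al.'s Theorem~3, unfold the ht-satisfaction clause for implication into its classical and here-and-there components, and exploit the fact that $F$ is independent of $\boldr$ to distribute the quantification over $[\boldt]$ (the paper phrases this as two separate directions with case analysis on whether $F$ is satisfied, while you phrase it as a chain of biconditionals, but the content is identical).
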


\begin{proof}
    Theorem~3 by~\cite{harlifpeval17} states that two (sets of) infinitary formulas are strongly equivalent if and only if they are equivalent in the infinitary logic of here-and-there. 
    This proof will illustrate that infinitary formulas~\eqref{eq:lem:rule.head.tau.helper.left} and~\eqref{eq:lem:rule.head.tau.helper.right} are equivalent in the infinitary logic of here-and-there.

    Take any \htinterp\ $\hti$. We will consider two cases. In the first case, we assume that $\hti \models \eqref{eq:lem:rule.head.tau.helper.left}$ and illustrate that from this assumption it follows that  $\hti \models \eqref{eq:lem:rule.head.tau.helper.right}$.  
    In the second case, we assume that $\hti \models\eqref{eq:lem:rule.head.tau.helper.right}$  and illustrate that from this assumption it follows that   $\hti \models \eqref{eq:lem:rule.head.tau.helper.left}$. 
    These arguments illustrate that formulas~\eqref{eq:lem:rule.head.tau.helper.left} and~\eqref{eq:lem:rule.head.tau.helper.right} are equivalent in the infinitary logic of here-and-there.
    \\\noindent\emph{Case 1:} 
    Assume $\hti \models \eqref{eq:lem:rule.head.tau.helper.left}$.
    It follows that $\hti \models F \to G^{\boldv}_{\boldr}$ for any $\boldr \in [\boldt]$.
    From the definition of ht-satisfaction, we conclude that for any $\boldr \in [\boldt]$ 
    \begin{itemize}
    \item $\intproptwo \models F \to G^{\boldv}_{\boldr}$ and 
    \item $\hti \not\models F$ or $\hti \models G^{\boldv}_{\boldr}$.    
    \end{itemize}
    We now illustrate that $\intproptwo \models F \to \bigwedge_{\boldr \in [\boldt]} G^{\boldv}_{\boldr}$.
    For the case when  $\intproptwo \not\models F$, the statement vacuously holds.
    Now assume that $\intproptwo \models F$.
    Since $\intproptwo \models F \to G^{\boldv}_{\boldr}$ for any $\boldr \in [\boldt]$, it follows that $\intproptwo \models G^{\boldv}_{\boldr}$ for any $\boldr \in [\boldt]$.
    Equivalently, $\intproptwo \models \bigwedge_{\boldr \in [\boldt]} G^{\boldv}_{\boldr}$.
    Thus, $\intproptwo \models F \to \bigwedge_{\boldr \in [\boldt]} G^{\boldv}_{\boldr}$.
    \\[5pt]
    Since $\hti \not\models F$ or $\hti \models G^{\boldv}_{\boldr}$ for any $\boldr \in [\boldt]$, and since the satisfaction of $F$ is independent of $\boldr$,
    it follows that 
    \begin{itemize}
    \item $\hti \not\models F$ or 
    \item $\hti \models G^{\boldv}_{\boldr}$ for any $\boldr \in [\boldt]$.
    \end{itemize}
    Consequently, $\hti \not\models F$ or $\hti \models \bigwedge_{\boldr \in [\boldt]} G^{\boldv}_{\boldr}$.
    \\[5pt]
    It now follows from the definition of ht-satisfaction that $\hti \models \eqref{eq:lem:rule.head.tau.helper.right}$.

    \noindent \emph{Case 2:}
    Assume $\hti \models \eqref{eq:lem:rule.head.tau.helper.right}$.
    It follows that 
    \begin{itemize}
        \item $\intproptwo \models F \to \bigwedge_{\boldr \in [\boldt]} G^{\boldv}_{\boldr}$ and
        \item $\hti \not\models F$ or $\hti \models \bigwedge_{\boldr \in [\boldt]} G^{\boldv}_{\boldr}$.
    \end{itemize}
    To illustrate that $\hti \models \eqref{eq:lem:rule.head.tau.helper.left}$, it is sufficient to show that $\hti \models F \to G^\boldv_{\bold{n}}$ for an arbitrary $\bold{n} \in [\boldt]$.
    \\[5pt]
    \\Now assume that $\intproptwo \models F$.
    It follows that
    $\intproptwo \models \bigwedge_{\boldr \in [\boldt]} G^{\boldv}_{\boldr}$, therefore, $\intproptwo \models G^{\boldv}_{\bold{n}}$.
    Thus, $\intproptwo \models F \to G^{\boldv}_{\bold{n}}$.
    \\[5pt]
    Since $\hti \not\models F$ or $\hti \models \bigwedge_{\boldr \in [\boldt]} G^{\boldv}_{\boldr}$, it follows that
    $\hti \not\models F$ or $\hti \models G^{\boldv}_{\bold{n}} \wedge \bigwedge_{\boldr \in [\boldt] \setminus \bold{n}} G^{\boldv}_{\boldr}$.
    Therefore, $\hti \not\models F$ or $\hti \models G^{\boldv}_{\bold{n}}$.
    \\[5pt]
    Thus, $\hti \models F \to G^{\boldv}_{\bold{n}}$ for arbitrary $\bold{n} \in [\boldt]$.
    Consequently, $\hti \models \eqref{eq:lem:rule.head.tau.helper.left}$.
\end{proof}

The claim  below follows from the definition of ht- and classical satisfaction using well known classical equivalences.
\begin{lemma}
    \label{lem:rule.head.choice.tau.helper}
    Let $F$ and $G$ be infinitary propositional formulas.
    Then, $$F \wedge \neg\neg G \to G \equiv_s F \to G \vee \neg G.$$
    % \begin{gather}
    %    % \label{eq:lem:rule.head.choice.tau.helper.left}
    %     F \wedge \neg\neg G \to G \equiv_s F \to G \vee \neg G.
    % \end{gather}
 %   is strongly equivalent to
 %   \begin{gather}
      %  \label{eq:lem:rule.head.choice.tau.helper.right}
  %      F \to G \vee \neg G
  %  \end{gather}
\end{lemma}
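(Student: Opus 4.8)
The plan is to appeal to Theorem~3 of Harrison et al.~\cite{harlifpeval17}, which states that two infinitary formulas are strongly equivalent if and only if they are equivalent in the infinitary logic of here-and-there. It therefore suffices to show that an arbitrary \htinterp\ $\hti$ satisfies $F \wedge \neg\neg G \to G$ exactly when it satisfies $F \to G \vee \neg G$.

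First I would record two auxiliary observations about $\hti = \langle S, S' \rangle$, obtained by unwinding the clause for implication (recall that $\neg G$ abbreviates $G \to \bot$ and that no \htinterp\ satisfies $\bot$). The first is that $\hti \models \neg G$ iff $S' \not\models G$ and $\hti \not\models G$. The second, which is the crucial one, is that $\hti \models \neg\neg G$ iff $S' \models G$; this is derived by applying the implication clause once more, noting that its classical component contributes $S' \not\models \neg G$, i.e.\ $S' \models G$, while its here-and-there component $\hti \not\models \neg G$ is already implied by $S' \models G$.

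Next I would evaluate both sides against $\hti$. For the left-hand formula, the classical component $S' \models F \wedge \neg\neg G \to G$ of the implication clause holds vacuously, since classically $\neg\neg G$ is equivalent to $G$ and so $F \wedge \neg\neg G \to G$ is a classical tautology; what remains is the here-and-there component, $\hti \not\models (F \wedge \neg\neg G)$ or $\hti \models G$, which by the crucial observation rewrites to $\hti \not\models F$, or $S' \not\models G$, or $\hti \models G$. For the right-hand formula, the classical component $S' \models F \to G \vee \neg G$ again holds vacuously because $G \vee \neg G$ is a classical tautology, and the here-and-there component $\hti \not\models F$ or $\hti \models G \vee \neg G$ simplifies, using the first observation that $\hti \models \neg G$ iff $S' \not\models G$ and $\hti \not\models G$, to $\hti \not\models F$, or $\hti \models G$, or $S' \not\models G$. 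The two reduced conditions are identical, so the formulas are ht-equivalent and hence strongly equivalent.

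The main point requiring care is the two-part definition of ht-satisfaction for implications: one must keep the classical condition on $S'$ separate from the here-and-there condition at each step, and in particular resist treating $\neg\neg G$ as $G$ at the ``here'' world. Once the observation that $\hti \models \neg\neg G$ iff $S' \models G$ is in hand, the remaining work is a routine simplification of disjunctions.
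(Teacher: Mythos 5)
Your proof is correct and takes essentially the route the paper intends: the paper states this lemma without a detailed proof, remarking only that it ``follows from the definition of ht- and classical satisfaction using well known classical equivalences,'' which is precisely the calculation you carry out, combined with the ht-equivalence/strong-equivalence bridge of Harrison et al.\ that the paper invokes throughout. Your key observation that $\hti \models \neg\neg G$ iff $S' \models G$, and the reduction of both implications to the common condition ($\hti \not\models F$, or $S' \not\models G$, or $\hti \models G$), are sound.
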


%\begin{proof}
 %   \color{blue}
 %   This is a well known result.
    % but i proved it anyway - see whiteboard thoughts / ht-basics / disjunctive-head
%\end{proof}

\begin{lemma}
    \label{lem:term.replacement}
    Let $t$ be a term from a rule containing global variables $\boldZ$, and let $\boldz$ be a tuple of precomputed terms of the same length as $\boldZ$.
    Let $V$ be a program variable which does not occur in $\boldZ$ or $t$, and let $v$ be a precomputed term.
    Then, 
    \begin{gather*}
        \left( val_t(V)_v^V \right)_\boldz^\boldZ = \left( val_{t_\boldz^\boldZ}(V) \right)_v^V
    \end{gather*}
\end{lemma}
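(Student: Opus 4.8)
The plan is to prove a slightly more general claim by structural induction on the term $t$, and then derive the lemma by a routine argument about commuting substitutions. The generalized claim I would establish is that for any value variable $W$ occurring neither in $\boldZ$ nor in $t$,
$$
\left( val_t(W) \right)_\boldz^\boldZ = val_{t_\boldz^\boldZ}(W).
$$
In words, substituting the precomputed terms $\boldz$ for the global variables $\boldZ$ commutes with the construction of the value formula, provided the value variable is left untouched. Because $\boldz$ is a tuple of \emph{precomputed} terms, it introduces no variables, so no bound variable of $val_t(W)$ can be captured and no variable of $\boldZ$ can be reintroduced; this is precisely what lets the substitution slide harmlessly through the recursive definition of $val$ given in Section~\ref{sec:prelim:vt}.

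For the base case, where $t$ is a numeral, symbolic constant, program variable, $\mathit{inf}$, or $\mathit{sup}$, the formula $val_t(W)$ is $W = t$. Since $W \notin \boldZ$, applying $_\boldz^\boldZ$ yields $W = t_\boldz^\boldZ$, which is exactly $val_{t_\boldz^\boldZ}(W)$; this covers in particular the subcase where $t$ is itself one of the global variables in $\boldZ$. For the inductive cases ($t$ of the form $|t_1|$, $t_1 \circ t_2$ with $\circ \in \{+,-,\times\}$, $t_1/t_2$, $t_1 \setminus t_2$, or $t_1..t_2$), the formula $val_t(W)$ is an existential quantification over fresh integer variables (the $I,J,K$ of the definition) of a conjunction involving $val_{t_1}(\cdot)$, $val_{t_2}(\cdot)$, and auxiliary formulas such as $F_1,F_2,F_3$ built only from those fresh variables, $W$, and numerals. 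I would push $_\boldz^\boldZ$ through the existential quantifiers — legitimate because the fresh variables are chosen distinct from $\boldZ$ and from $W$ — observe that it acts as the identity on the auxiliary subformulas (which contain no member of $\boldZ$), and apply the induction hypothesis to the subterms $t_1$ and $t_2$ with the fresh variables as their value variables. Reassembling gives $val_{t_\boldz^\boldZ}(W)$, using the evident identities $(|t_1|)_\boldz^\boldZ = |(t_1)_\boldz^\boldZ|$, $(t_1 \circ t_2)_\boldz^\boldZ = (t_1)_\boldz^\boldZ \circ (t_2)_\boldz^\boldZ$, and so on.

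With the generalized claim in hand, the lemma follows quickly. Since $V$ occurs neither in $\boldZ$ nor in $t$, and since both $v$ and $\boldz$ are precomputed (hence variable-free), the substitutions $_v^V$ and $_\boldz^\boldZ$ act on disjoint variables and commute:
$$
\left( val_t(V)_v^V \right)_\boldz^\boldZ = \left( (val_t(V))_\boldz^\boldZ \right)_v^V.
$$
Applying the generalized claim with $W = V$ rewrites the inner formula as $val_{t_\boldz^\boldZ}(V)$, producing $\left( val_{t_\boldz^\boldZ}(V) \right)_v^V$, which is exactly the right-hand side of the lemma. The main obstacle I anticipate is purely one of bookkeeping: making precise that the fresh bound variables introduced by the recursive definition of $val$ are chosen canonically, so that $val_t$ and $val_{t_\boldz^\boldZ}$ use literally the same fresh variables and the claimed equality is a genuine syntactic identity rather than equality up to renaming. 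Since substituting precomputed terms does not alter which variables are available as fresh, this canonical-choice convention is harmless, and the remainder of the argument is a mechanical structural induction.
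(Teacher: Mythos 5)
Your proposal is correct and follows essentially the same route as the paper: the paper's proof is also a structural induction on $t$ whose induction hypothesis is exactly your generalized claim $val_t(V)_\boldz^\boldZ = val_{t_\boldz^\boldZ}(V)$, applied to the fresh integer variables in the inductive cases. The only cosmetic difference is that the paper threads the $_v^V$ substitution through each case of the induction directly, whereas you factor it out at the end via the (valid) observation that $_v^V$ and $_\boldz^\boldZ$ commute when $V\notin\boldZ$ and $v$, $\boldz$ are variable-free.
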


\begin{proof}
    The proof uses structural induction across the forms of terms; we say that a term $t$ has the induction hypothesis property when the following equivalence holds for $t$ and a program variable $V$ which does not occur in $\boldZ$:
    $$
    val_t(V)_\boldz^\boldZ = val_{t_\boldz^\boldZ}(V)
    $$
    We illustrate the first two cases here.
    \\[5pt]
    \emph{Case 1:} $t$ is a numeral, symbolic constant, variable, $\mathit{inf}$ or~$\mathit{sup}$.
    \begin{align*}
        \left( val_t(V)_v^V \right)_\boldz^\boldZ 
        &= \left( (V = t)_v^V \right)_\boldz^\boldZ \\
        &= \left( v = t \right)_\boldz^\boldZ \\
        &= \left( v = t_\boldz^\boldZ \right) \\
        &= \left( V = t_\boldz^\boldZ \right)_v^V \\
        &= \left( val_{t_\boldz^\boldZ}(V) \right)_v^V
    \end{align*}
    \\[5pt]
    \emph{Case 2:} $t$ is $|t_1|$, where $|t_1|$ has the induction hypothesis property.
    \begin{align*}
        \left( val_t(V)_v^V \right)_\boldz^\boldZ 
        &= \left( \exists I \left(val_{t_1}(I) \wedge V = |I| \right)_v^V \right)_\boldz^\boldZ \\
        &= \left( \exists I \left(val_{t_1}(I) \wedge v = |I| \right) \right)_\boldz^\boldZ   \\
        &= \exists I \left( val_{t_1}(I)_\boldz^\boldZ \wedge (v = |I|)_\boldz^\boldZ \right) \\
        &= \exists I \left( val_{{t_1}_\boldz^\boldZ}(I) \wedge v = |I| \right)~~~~~~~~\text{(induction hypothesis)} \\
        &= \left( \exists I \left( val_{{t_1}_\boldz^\boldZ}(I) \wedge V = |I| \right) \right)_v^V \\
        &= \left( val_{t_\boldz^\boldZ}(V) \right)_v^V
    \end{align*}
\end{proof}

\begin{lemma}
    \label{lem:remove.grounding}
    Let $\boldt$ be a $k$-tuple of ground terms, $\boldV$ be a $k$-tuple of program variables,  $F, G$ be infinitary propositional formulas, $\intfolone$ be a standard interpretation,  and $\boldp$ be the standard partition. It holds that
    \begin{align*}
        &\{ \grdstd(val_\boldt(\boldV))_\boldv^\boldV \wedge F  \rar G  \mid \boldv \in |\intfolone|^{\sortsuper}\}^\wedge \equiv_s %\label{eq:lem.rem.gr.left}
        \\
        &\{ F  \rar G  \mid \boldv \in |\intfolone|^{\sortsuper}, \boldv \in [\boldt]\}^\wedge %\label{eq:lem.rem.gr.right}
    \end{align*}
\end{lemma}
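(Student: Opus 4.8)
The plan is to simplify the left-hand conjunction conjunct-by-conjunct, exploiting the fact that the grounded value formula collapses to a truth constant. First I would invoke Corollary~\ref{prop:gr:val:corr} with the tuple of variables taken to be $\boldV$ and the tuple of precomputed terms taken to be $\boldv$, which gives $\grdstd(val_\boldt(\boldV))_\boldv^\boldV \equiv_s \top$ when $\boldv \in [\boldt]$ and $\grdstd(val_\boldt(\boldV))_\boldv^\boldV \equiv_s \bot$ when $\boldv \notin [\boldt]$. Since $\intfolone$ is standard, the universe $|\intfolone|^{\sortsuper}$ is exactly the set of precomputed terms, so each index $\boldv$ is a legitimate tuple of precomputed terms to which the corollary applies. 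Here I read the notation $\grdstd(val_\boldt(\boldV))_\boldv^\boldV$ as substituting $\boldv$ for $\boldV$ before grounding, so that the corollary applies verbatim.

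Next I would partition the index set $|\intfolone|^{\sortsuper}$ of the left-hand conjunction into the tuples lying in $[\boldt]$ and those outside $[\boldt]$, and treat the two groups of conjuncts separately. For $\boldv \in [\boldt]$, the conjunct is strongly equivalent to $\top \wedge F \rar G$, which by Fact~\ref{fact:1}(\ref{l:1:1}) is strongly equivalent to $F \rar G$. For $\boldv \notin [\boldt]$, the conjunct is strongly equivalent to $\bot \wedge F \rar G$; by Fact~\ref{fact:1}(\ref{l:1:2}) the antecedent $\bot \wedge F$ is strongly equivalent to $\bot$, and then by Fact~\ref{fact:1}(\ref{l:1:7}) the implication $\bot \rar G$ is strongly equivalent to $\top$. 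Crucially, both collapses hold regardless of whether $F$ or $G$ happen to depend on the index $\boldv$, so no assumption about the form of $F$ and $G$ is needed and the empty-$[\boldt]$ case is handled uniformly.

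I would then appeal to Fact~\ref{fact:5:harrison}, which licenses simultaneously replacing (infinitely many) subformulas of an infinitary formula by strongly equivalent ones without changing strong equivalence. Replacing each conjunct of the left-hand side by its simplified form yields the conjunction of the set $\{F \rar G \mid \boldv \in |\intfolone|^{\sortsuper}, \boldv \in [\boldt]\} \cup \{\top \mid \boldv \in |\intfolone|^{\sortsuper}, \boldv \notin [\boldt]\}$. Finally, the set form of Fact~\ref{fact:1}(\ref{l:1:1}) lets me discard all the $\top$ conjuncts contributed by indices outside $[\boldt]$, leaving exactly $\{F \rar G \mid \boldv \in |\intfolone|^{\sortsuper}, \boldv \in [\boldt]\}^\wedge$, which is the right-hand side.

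The argument is largely bookkeeping, and I expect no genuine obstacle. The only points requiring care are that it is Fact~\ref{fact:5:harrison} (rather than a finite substitution principle) that justifies rewriting inside a possibly uncountable infinitary conjunction, and that the two constant collapses are applied before discarding the $\top$ conjuncts, so that the cardinality of the surviving conjunction matches the restricted index set $\{\boldv \in |\intfolone|^{\sortsuper} : \boldv \in [\boldt]\}$.
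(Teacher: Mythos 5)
Your proof is correct and follows essentially the same route as the paper's: read the notation as substitution before grounding, split the conjunction according to whether $\boldv \in [\boldt]$, collapse the grounded value formulas to $\top$ or $\bot$ via Corollary~\ref{prop:gr:val:corr}, and then simplify with Fact~\ref{fact:1} (conditions~\ref{l:1:1}, \ref{l:1:2}, \ref{l:1:7}) before discarding the $\top$ conjuncts. Your explicit appeal to Fact~\ref{fact:5:harrison} to license the simultaneous replacement of infinitely many conjuncts is a detail the paper leaves implicit in this particular proof, but the argument is the same.
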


\begin{proof}
    \begin{align*}
                 &\{ \grdstd(val_\boldt(\boldV))_\boldv^\boldV \wedge F \rar G  \mid \boldv \in |\intfolone|^{\sortsuper}\}^\wedge \\
        = &\{ \grdstd(val_\boldt(\boldv)) \wedge F  \rar G  \mid \boldv \in |\intfolone|^{\sortsuper}\}^\wedge \\
        \equiv_s &\{ \grdstd(val_\boldt(\boldv)) \wedge F \rar G  \mid \boldv \in |\intfolone|^{\sortsuper}, \boldv \in [\boldt]\}^\wedge
        \wedge \\
        &\{ \grdstd(val_\boldt(\bold{w})) \wedge F \rar G  \mid \bold{w} \in |\intfolone|^{\sortsuper}, \bold{w} \not\in [\boldt]\}^\wedge \\
        \equiv_s &\{ \top \wedge F \rar G  \mid \boldv \in |\intfolone|^{\sortsuper}, \boldv \in [\boldt]\}^\wedge
        \wedge \\
        &\{ \bot \wedge F \rar G  \mid \bold{w} \in |\intfolone|^{\sortsuper}, \bold{w} \not\in [\boldt]\}^\wedge ~~~~~~~~~~~~~~~~~~~~~~~~~~~~~~~~~~~~~~~~~~~~~\hfill{\text{Corollary~\ref{prop:gr:val:corr}}}\\
        \equiv_s &\{ F \rar G  \mid \boldv \in |\intfolone|^{\sortsuper}, \boldv \in [\boldt]\}^\wedge
        \wedge \{ \bot \rar G  \mid \bold{w} \in |\intfolone|^{\sortsuper}, \bold{w} \not\in [\boldt]\}^\wedge ~\hfill{\text{Fact~\ref{fact:1}: condition~\ref{l:1:2}}}\\
        \equiv_s &\{ F \rar G  \mid \boldv \in |\intfolone|^{\sortsuper}, \boldv \in [\boldt]\}^\wedge
        \wedge \{ \top  \mid \bold{w} \in |\intfolone|^{\sortsuper}, \bold{w} \not\in [\boldt]\}^\wedge ~~~~~~~~~~\hfill{\text{Fact~\ref{fact:1}: condition~\ref{l:1:7}}}\\
        \equiv_s &\{ F \rar G  \mid \boldv \in |\intfolone|^{\sortsuper}, \boldv \in [\boldt]\}^\wedge~~~~~~~~~~~~~~~~~~~~~~~~~~~~~~~~~~~~~~~~~~~~~~~~~~~~~~~\hfill{\text{Fact~\ref{fact:1}: condition~\ref{l:1:1}}}
    \end{align*}
\end{proof}

\paragraph{Proof of Proposition~\ref{prop:rule}}

\begin{proof}
    We proceed by cases.
    \\[5pt]
    \\\noindent \emph{Case 1}: $R$ is a normal rule with atom $p(\boldt)$ in the head.
    By definition,
    \begin{align*}
        \grdstd\left(\tau^*(R)\right) 
        = \grdstd\left(\forall \boldV \left( \forall \boldZ \left(val_\boldt(\boldV) \wedge \tau^*_{\boldZ}(B_1) \wedge \dots \wedge \tau^*_{\boldZ}(B_n) \to p(\boldV)\right)\right)\right)
    \end{align*}
    where $\boldV$ is a tuple of fresh, alphabetically first program variables disjoint from the global variables $\boldZ$ of the same length as $\boldt$.
    Let $\boldz$ denote a tuple of precomputed terms of the same length as $\boldZ$, and
    let $\boldv$ denote a tuple of precomputed terms of the same length as $\boldV$.
    Then, $\grdstd\left(\tau^*(R)\right)$ is equal to
    % \begin{align*}
    %     \grdstd\left(\tau^*(R)\right) &= \{ \grdstd\left( \forall \boldZ ( val_\boldt(\boldV)_\boldv^\boldV \wedge \tau^*_{\boldZ}(B_1)_\boldv^\boldV \wedge \dots \wedge \tau^*_{\boldZ}(B_n)_\boldv^\boldV \to p(\boldV)_\boldv^\boldV ) \right) \mid \boldv \in |\intfolone|^{\sortsuper}\}^\wedge
    % \end{align*}
    % Since $\boldV$ was chosen as a tuple of fresh variables and both $\boldt$ it follows from the preceding that 
    \begin{align*}
        &\big\{ \grdstd\left( \forall \boldZ ( val_\boldt(\boldv) \wedge \tau^*_{\boldZ}(B_1) \wedge \dots \wedge \tau^*_{\boldZ}(B_n) \to p(\boldV)_\boldv^\boldV ) \right) \mid \boldv \in |\intfolone|^{\sortsuper} \big\}^\wedge\\
        &= \big\{ \{ \grdstd \left( val_\boldt(\boldv)_\boldz^\boldZ \wedge \tau^*_{\boldZ}(B_1)_\boldz^\boldZ  \wedge \dots \wedge \tau^*_{\boldZ}(B_n)_\boldz^\boldZ  \to p(\boldv)_{\boldz}^{\boldZ}  \right)  \mid \boldz \in |\intfolone|^{\sortsuper}\}^\wedge\mid \boldv \in |\intfolone|^{\sortsuper} \big\}^\wedge\\
        &= \big\{ \{ \grdstd \left(val_\boldt(\boldv)_\boldz^\boldZ\right) \wedge \grdstd \left(\tau^*_{\boldZ}(B_1)_\boldz^\boldZ\right)  \wedge \dots \wedge \grdstd \left(\tau^*_{\boldZ}(B_n)_\boldz^\boldZ \right)  \to \grdstd \left(p(\boldv) \right)\\
        &~~~~~~~~~~~~~~~~~~~~~~~~~~~~~~~~~~~~~~~~~~~~~~~~~~~~~~~~~~~~~~~~~~~~~~~~~~~~~~~~~~~~~~~~~~~\mid \boldz \in |\intfolone|^{\sortsuper}\}^\wedge\mid \boldv \in |\intfolone|^{\sortsuper} \big\}^\wedge\\
        &= \big\{ \{ \grdstd \left(val_\boldt(\boldv)_\boldz^\boldZ \right) \wedge \grdstd \left(\tau^*_{\boldZ}([B_1]_\boldz^\boldZ) \right)  \wedge \dots \wedge \grdstd \left(\tau^*_{\boldZ}([B_n]_\boldz^\boldZ) \right) \to p(\boldv)\\ 
        &~~~~~~~~~~~~~~~~~~~~~~~~~~~~~~~~~~~~~~~~~~~~~~~~~~~~~~~~~~~~~~~~~~~~~~~~~~~~~~~~~~~~~~~~~~~\mid \boldz \in |\intfolone|^{\sortsuper}\}^\wedge\mid \boldv \in |\intfolone|^{\sortsuper} \big\}^\wedge 
    \end{align*}
    From the preceding and the definition of ht-satisfaction, it follows that $\grdstd\left(\tau^*(R)\right)$ is strongly equivalent to
    $$
        \{ \grdstd(val_\boldt(\boldv)_\boldz^\boldZ) \wedge \grdstd(\tau^*_{\boldZ}([B_1]_\boldz^\boldZ))  \wedge \dots \wedge \grdstd(\tau^*_{\boldZ}([B_n]_\boldz^\boldZ))  \to p(\boldv)  \mid \boldz \in |\intfolone|^{\sortsuper}, \boldv \in |\intfolone|^{\sortsuper}\}^\wedge 
    $$
    Note that each expression $[B_i]^{\boldZ}_{\boldz}$ is a closed conditional literal.
    Thus, from Proposition~\ref{prop:cl} it follows that
    $$\grdstd(\tau^*_{\boldz}([B_i]^{\boldZ}_{\boldz})) \equiv_s \tau([B_i]^{\boldZ}_{\boldz})$$
    and so, from Fact~\ref{fact:5:harrison}, $\grdstd\left(\tau^*(R)\right)$ is strongly equivalent to
    \begin{align*}
         \{ \grdstd(val_\boldt(\boldv)_\boldz^\boldZ) \wedge \tau([B_1]_\boldz^\boldZ)  \wedge \dots \wedge \tau([B_n]_\boldz^\boldZ)  \to p(\boldv)  \mid \boldz \in |\intfolone|^{\sortsuper}, \boldv \in |\intfolone|^{\sortsuper}\}^\wedge. 
    \end{align*}
    From Lemma~\ref{lem:term.replacement}, which is generalized in a straightforward way to tuples of terms and variables, we conclude that $\grdstd(val_\boldt(\boldv)_\boldz^\boldZ) = \grdstd(val_{\boldt_\boldz^\boldZ}(\boldv))$ for any $\boldv \in |\intfolone|^{\sortsuper}$.
    Thus, it follows that  $\grdstd\left(\tau^*(R)\right)$ is strongly equivalent to
    %\begin{align*}
        %\grdstd\left(\tau^*(R)\right) 
        %&
        %\equiv_s
    $$    \{ \grdstd(val_{\boldt_\boldz^\boldZ}(\boldv)) \wedge \tau([B_1]_\boldz^\boldZ)  \wedge \dots \wedge \tau([B_n]_\boldz^\boldZ)  \to p(\boldv)  \mid \boldz \in |\intfolone|^{\sortsuper}, \boldv \in |\intfolone|^{\sortsuper}\}^\wedge. $$
   % \end{align*}
    %
    From Lemma~\ref{lem:remove.grounding}, it follows that $\grdstd\left(\tau^*(R)\right)$ is strongly equivalent to
    \begin{gather}
        \label{eq:gr.tau.rule}
        \{ \tau([B_1]_\boldz^\boldZ)  \wedge \dots \wedge \tau([B_n]_\boldz^\boldZ)  \to p(\boldv)  \mid \boldz \in |\intfolone|^{\sortsuper}, \boldv \in |\intfolone|^{\sortsuper}, \boldv \in [\boldt_\boldz^\boldZ]\}^\wedge.
    \end{gather}
    Since $R$ is a normal rule with the atom $p(\boldt)$ in the head, the rule instantiation process for $R$ produces the following:
    \begin{align*}
        \tau(R) 
        &= \{ \tau \left( p(\boldt)_\boldz^\boldZ \ruleo [B_1]_\boldz^\boldZ,\dots,[B_n]_\boldz^\boldZ \right) \mid  \boldz \in |\intfolone|^{\sortsuper}\}^\wedge\\
        &= \{ \tau \left( p(\boldt_\boldz^\boldZ) \ruleo [B_1]_\boldz^\boldZ,\dots,[B_n]_\boldz^\boldZ \right) \mid  \boldz \in |\intfolone|^{\sortsuper}\}^\wedge\\
        %&= \{\tau([B_1]_\boldz^\boldZ) \wedge \dots \wedge \tau([B_n]_\boldz^\boldZ) \to \left(\bigwedge_{\boldr \in [\boldt]}p(\boldr)\right)_\boldz^\boldZ \mid \boldz \in |\intfolone|^{\sortsuper}\}^\wedge \\
        &= \{\tau([B_1]_\boldz^\boldZ) \wedge \dots \wedge \tau([B_n]_\boldz^\boldZ) \to \bigwedge_{\boldr \in [\boldt_\boldz^\boldZ]}p(\boldr) \mid \boldz \in |\intfolone|^{\sortsuper}\}^\wedge
    \end{align*}
    It only remains to be shown that~\eqref{eq:gr.tau.rule} is strongly equivalent to the preceding formula.
    In the sequel we abbreviate expression  $\tau([B_1]_\boldz^\boldZ) \wedge \dots \wedge \tau([B_n]_\boldz^\boldZ)$
    by
    $\bold{B}_\boldz^\boldZ$.
    %
%    For a pair ($\boldz, \boldZ$) of $n$ precomputed terms and $n$ program variables (given a rule with $n$ global variables), let us abbreviate
    %$\tau([B_1]_\boldz^\boldZ) \wedge \dots \wedge \tau([B_n]_\boldz^\boldZ)$
    %as
    %$B_\boldz^\boldZ$.
    %
    
    Now take tuple of precomputed terms $\boldz \in |\intfolone|^{\sortsuper}$ of length $\boldZ$.
    It remains  to show that 
    $$
    \{ \bold{B}_\boldz^\boldZ \to p(\boldv)  \mid \boldv \in |\intfolone|^{\sortsuper}, \boldv \in [\boldt_\boldz^\boldZ]\}^\wedge \equiv_s 
       \bold{B}_\boldz^\boldZ \to \bigwedge_{\boldr \in [\boldt_\boldz^\boldZ]}p(\boldr).
    $$
    Since $\bold{B}_\boldz^\boldZ$ is an infinitary formula  and since $\boldt_\boldz^\boldZ$ is a tuple of ground terms of the same length as $\boldv$, this result follows directly from Lemma~\ref{lem:rule.head.tau.helper}. %when we take $G(\boldv)$ to be $p(\boldv)$.
    %%%%%
    %
    %
    %
    %
    %
    \\[5pt]
    \\\noindent \emph{Case 2}: $R$ is a choice rule with the head of the form $\{p(\boldt)\}$.
    We take 
    $\boldV$, $\boldZ$, $\boldt$, $\boldz$,  $\boldv$ to denote the same entities as in Case 1.
Now,
    \begin{align*}
        &\grdstd\left(\tau^*(R)\right)\\ 
        &= \grdstd\Big(\forall \boldV \boldZ \big(val_\boldt(\boldV) \wedge \tau^*_{\boldZ}(B_1) \wedge \dots \wedge \tau^*_{\boldZ}(B_n) \wedge \neg\neg p(\boldV) \to p(\boldV)\big)\Big) \\
        &= \{ \grdstd\left( \forall \boldZ ( val_\boldt(\boldv) \wedge \tau^*_{\boldZ}(B_1) \wedge \dots \wedge \tau^*_{\boldZ}(B_n) \wedge \neg\neg p(\boldv)\to p(\boldv) ) \right) \mid \boldv \in |\intfolone|^{\sortsuper}\}^\wedge \\
        &= \{ \grdstd(val_\boldt(\boldv)_\boldz^\boldZ) \wedge \grdstd(\tau^*_{\boldZ}(B_1)_\boldz^\boldZ)  \wedge \dots \wedge \grdstd(\tau^*_{\boldZ}(B_n)_\boldz^\boldZ) \wedge \grdstd(\neg \neg p(\boldv))\\
        &\quad\quad\quad\quad\quad\quad\quad\quad\quad\quad\quad\quad~~~~~~~~~~~~~~\to \grdstd(p(\boldv)) \mid \boldz \in |\intfolone|^{\sortsuper}\}^\wedge, \boldv \in |\intfolone|^{\sortsuper}\}^\wedge \\
        &= \{ \grdstd(val_\boldt(\boldv)_\boldz^\boldZ) \wedge \grdstd(\tau^*_{\boldZ}([B_1]_\boldz^\boldZ))  \wedge \dots \wedge \grdstd(\tau^*_{\boldZ}([B_n]_\boldz^\boldZ)) \wedge \neg \neg \grdstd(p(\boldv))\\
        &\quad\quad\quad\quad\quad\quad\quad\quad\quad\quad\quad\quad~~~~~~~~~~~~~~~\to p(\boldv) \mid \boldz \in |\intfolone|^{\sortsuper}, \boldv \in |\intfolone|^{\sortsuper}\}^\wedge \\
        &= \{ \grdstd(val_\boldt(\boldv)_\boldz^\boldZ) \wedge \grdstd(\tau^*_{\boldZ}([B_1]_\boldz^\boldZ))  \wedge \dots \wedge \grdstd(\tau^*_{\boldZ}([B_n]_\boldz^\boldZ)) \wedge \neg \neg p(\boldv)\\
        &\quad\quad\quad\quad\quad\quad\quad\quad\quad\quad\quad\quad~~~~~~~~~~~~~~~~\to p(\boldv) \mid \boldz \in |\intfolone|^{\sortsuper}, \boldv \in |\intfolone|^{\sortsuper}\}^\wedge
    \end{align*}
    From Proposition~\ref{prop:cl} and Fact~\ref{fact:5:harrison}, it follows that $\grdstd\left(\tau^*(R)\right)$ is strongly equivalent to
    \begin{align*}
         &\{ \grdstd(val_\boldt(\boldv)_\boldz^\boldZ) \wedge  \tau \left( [B_1]_\boldz^\boldZ \right) \wedge \dots \wedge \tau\left( [B_n]_\boldz^\boldZ \right) \wedge \neg \neg p(\boldv)  \to p(\boldv) \mid \boldz \in |\intfolone|^{\sortsuper}, \boldv \in |\intfolone|^{\sortsuper}\}^\wedge\\
        &\equiv_s \{ \tau \left( [B_1]_\boldz^\boldZ \right) \wedge \dots \wedge \tau\left( [B_n]_\boldz^\boldZ \right) \wedge \neg \neg p(\boldv) \to p(\boldv) \mid \boldz \in |\intfolone|^{\sortsuper}, \boldv \in |\intfolone|^{\sortsuper}, \boldv \in [\boldt_\boldz^\boldZ]\}^\wedge
        \intertext{\hfill (Lemmas~\ref{lem:term.replacement}, \ref{lem:remove.grounding})}
        &\equiv_s \{ \tau \left( [B_1]_\boldz^\boldZ \right) \wedge \dots \wedge \tau\left( [B_n]_\boldz^\boldZ \right)  \to p(\boldv) \vee \neg p(\boldv) \mid \boldz \in |\intfolone|^{\sortsuper}, \boldv \in |\intfolone|^{\sortsuper}, \boldv \in [\boldt_\boldz^\boldZ]\}^\wedge.
        \intertext{\hfill (Lemma~\ref{lem:rule.head.choice.tau.helper})}
    \end{align*}
    Since $R$ is a choice rule with atom $\{p(\boldt)\}$ in the head, the rule instantiation process for~$R$ produces the following:
    \begin{align*}
        \tau(R) %&= \{\tau([B_1]_\boldz^\boldZ) \wedge \dots \wedge \tau([B_n]_\boldz^\boldZ) \to \left(\bigwedge_{\boldr \in [\boldt]}p(\boldr)\right)_\boldz^\boldZ \mid \boldz \in |\intfolone|^{\sortsuper}\}^\wedge \\
                &= \{\tau([B_1]_\boldz^\boldZ) \wedge \dots \wedge \tau([B_n]_\boldz^\boldZ) \to \bigwedge_{\boldr \in [\boldt_\boldz^\boldZ]} (p(\boldr) \vee \neg p(\boldr)) \mid \boldz \in |\intfolone|^{\sortsuper}\}^\wedge.
    \end{align*}
    Recall abbreviation $\bold{B}_\boldz^\boldZ$ from Case 1 and take any tuple of precomputed terms $\boldz \in |\intfolone|^{\sortsuper}$ of the same length as $\boldZ$. It remains to show that 
    $$
    \{ \bold{B}_\boldz^\boldZ \to (p(\boldv) \vee \neg p(\boldv))  \mid \boldv \in |\intfolone|^{\sortsuper}, \boldv \in [\boldt_\boldz^\boldZ]\}^\wedge \equiv_s 
       \bold{B}_\boldz^\boldZ \to \bigwedge_{\boldr \in [\boldt_\boldz^\boldZ]} (p(\boldr) \vee \neg p(\boldr)).
    $$
    Since $\bold{B}_\boldz^\boldZ$ is an infinitary formula and since $\boldt_\boldz^\boldZ$ is a tuple of ground terms of the same length as $\boldv$, this result follows directly from Lemma~\ref{lem:rule.head.tau.helper}.% when we take $G(\boldv)$ to be the formula $p(\boldv) \vee \neg p(\boldv)$.
    %%%%%
    %
    %
    %
    %
    %
    \\
    \\
    \\\noindent \emph{Case 3}: $R$ is a constraint.
    By definition,
    \begin{align*}
        \grdstd\left(\tau^*(R)\right) 
        &= \grdstd\left( \forall \boldZ \big(\tau^B_\boldZ(B_1) \wedge \dots \wedge \tau^B_\boldZ(B_n) \to \bot\big) \right) \\
        &= \{ \grdstd\left( \tau^*_{\boldZ}(B_1)_\boldz^\boldZ \wedge \dots \wedge \tau^*_{\boldZ}(B_n)_\boldz^\boldZ \to \bot \right) \mid \boldz \in |\intfolone|^{\sortsuper}\}^\wedge \\
        &= \{ \grdstd\left( \tau^*_{\boldZ}([B_1]_\boldz^\boldZ) \right) \wedge \dots \wedge \grdstd\left(\tau^*_{\boldZ}([B_n]_\boldz^\boldZ) \right) \to \bot \mid \boldz \in |\intfolone|^{\sortsuper}\}^\wedge
    \end{align*}
    From Proposition~\ref{prop:cl} and Fact~\ref{fact:5:harrison}, it follows that
    \begin{align*}
        \grdstd\left(\tau^*(R)\right) 
        &\equiv_s \{ \tau([B_1]_\boldz^\boldZ)  \wedge \dots \wedge \tau([B_n]_\boldz^\boldZ) \to \bot \mid \boldz \in |\intfolone|^{\sortsuper}\}^\wedge
    \end{align*}
    Since $R$ is a constraint, the rule instantiation process for $R$ produces the following:
    \begin{align*}
        \tau(R) 
        &= \{ \neg \tau \left( [B_1]_\boldz^\boldZ \wedge \dots \wedge [B_n]_\boldz^\boldZ \right) \mid \boldz \in |\intfolone|^{\sortsuper}\}^\wedge \\
        &= \{ \tau \left( [B_1]_\boldz^\boldZ \wedge \dots \wedge [B_n]_\boldz^\boldZ \right) \to \bot \mid \boldz \in |\intfolone|^{\sortsuper}\}^\wedge \\
        &= \{ \tau \left( [B_1]_\boldz^\boldZ \right) \wedge \dots \wedge \tau\left( [B_n]_\boldz^\boldZ \right) \to \bot \mid \boldz \in |\intfolone|^{\sortsuper}\}^\wedge
    \end{align*}
    Thus, $\grdstd\left(\tau^*(R)\right) \equiv_s \tau(R)$.
\end{proof}

%%%%%%%%%%%%%%%%%%%%%%%%

\paragraph{Proof of Proposition~\ref{prop:program}}

\begin{proof}
    By definition, $\tau^*\Pi$ is the first-order theory containing $\tau^*R$ for every rule $R$ in~$\Pi$.
    Thus, 
    $$\grdstd(\tau^*\Pi) = \{ \grdstd(\tau^*R) \mid R \in \Pi\}^\wedge.$$
    By definition,
    $\tau\Pi = \{ \tau R \mid R \in \Pi \}$.
    Note that 
    $$\{ \tau R \mid R \in \Pi \} \equiv_s  \{ \tau R \mid R \in \Pi \}^\wedge $$
    as by the definition of ht-satisfaction these sets of infinitary formulas share the same {\htmodel}s.
    From Fact~\ref{fact:5:harrison} and Proposition~\ref{prop:rule}, it follows that
    $$
    \{ \tau R \mid R \in \Pi \}^\wedge \equiv_s \{ \grdstd(\tau^*R) \mid R \in \Pi\}^\wedge
    $$
    Thus, $\grdstd(\tau^*(\Pi)) \equiv_s \tau \Pi$.
\end{proof}

\end{document}